\newcolumntype{C}[1]{>{\centering\let\newline\\\arraybackslash\hspace{0pt}}m{#1}}
\gdef\fps@figure{!htbp}}
\let\realbfseries=\bfseries
\def\bfseries{\realbfseries\boldmath}
\newtheorem{theorem}{Theorem}[section]
\newtheorem{lemma}[theorem]{Lemma}
\newtheorem{corollary}[theorem]{Corollary}
\newtheorem{definition}[theorem]{Definition}
\newtheorem{problem}[theorem]{Problem}
 \gdef\xxxmark{%
   \expandafter\ifx\csname @mpargs\endcsname\relax 
     \expandafter\ifx\csname @captype\endcsname\relax 
       \marginpar{xxx}
     \else
       xxx 
     \fi
   \else
     xxx 
   \fi}
 \gdef\xxx{\@ifnextchar[\xxx@lab\xxx@nolab}
 \long\gdef\xxx@lab[#1]#2{\textbf{[\xxxmark #2 ---{\sc #1}]}}
 \long\gdef\xxx@nolab#1{\textbf{[\xxxmark #1]}}
\begin{document}

\title{Tree-Residue Vertex-Breaking: a new tool for proving hardness}
\author{
Erik D. Demaine%
    \thanks{MIT Computer Science and Artificial Intelligence Laboratory,
      32 Vassar St., Cambridge, MA 02139, USA,
      \protect\url{edemaine@mit.edu}}
\and
  Mikhail Rudoy%
  \thanks{MIT Computer Science and Artificial Intelligence Laboratory,
      32 Vassar St., Cambridge, MA 02139, USA,
      \protect\url{mrudoy@gmail.com}. Now at Google Inc.}
}
\date{}

\maketitle

\begin{abstract}
In this paper, we introduce a new problem called Tree-Residue Vertex-Breaking (TRVB): given a multigraph $G$ some of whose vertices are marked ``breakable,'' is it possible to convert $G$ into a tree via a sequence of ``vertex-breaking'' operations (replacing a degree-$k$ breakable vertex by $k$ degree-$1$ vertices, disconnecting the $k$ incident edges)? 

We characterize the computational complexity of TRVB with any combination of the following additional constraints: $G$ must be planar, $G$ must be a simple graph, the degree of every breakable vertex must belong to an allowed list $B$, and the degree of every unbreakable vertex must belong to an allowed list $U$.
The two results which we expect to be most generally applicable are that (1) TRVB is polynomially solvable when breakable vertices are restricted to have degree at most $3$; and (2) for any $k \ge 4$, TRVB is NP-complete when the given multigraph is restricted to be planar and to consist entirely of degree-$k$ breakable vertices. To demonstrate the use of TRVB, we give a simple proof of the known result that Hamiltonicity in max-degree-$3$ square grid graphs is NP-hard. 

We also demonstrate a connection between TRVB and the Hypergraph Spanning Tree problem. This connection allows us to show that the Hypergraph Spanning Tree problem in $k$-uniform $2$-regular hypergraphs is NP-complete for any $k \ge 4$, even when the incidence graph of the hypergraph is planar.
\end{abstract}

\section{Introduction}
\label{section:introduction}

In this paper, we introduce the Tree-Residue Vertex-Breaking (TRVB) problem. Given a multigraph $G$ some of whose vertices are marked ``breakable,'' TRVB asks whether it is possible to convert $G$ into a tree via a sequence of applications of the \emph{vertex-breaking} operation: replacing a degree-$k$ breakable vertex with $k$ degree-$1$ vertices, disconnecting the incident edges, as shown in Figure~\ref{figure:breaking_example}.

\begin{figure}[!htb]
    \centering
    \hfill\hfill
    \raisebox{-.5\height}{\includegraphics{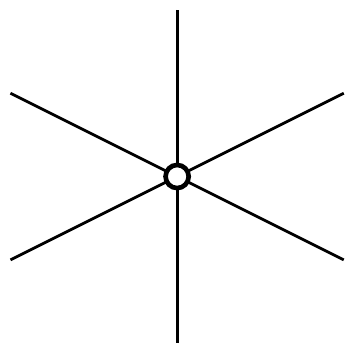}}
    \hfill\raisebox{-.5\height}{\scalebox{2}{$\to$}}\hfill
    \raisebox{-.5\height}{\includegraphics{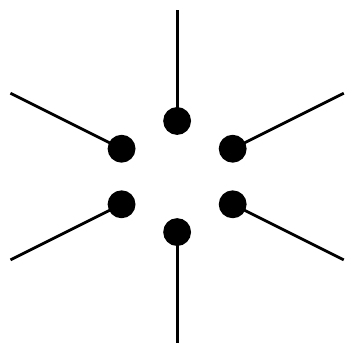}}
    \hfill\hfill\hfill
    \caption{The operation of breaking a vertex. The vertex (left) is replaced by a set of degree-$1$ vertices with the same edges (right).}
    \label{figure:breaking_example}
\end{figure}

In this paper, we analyze the computational complexity of this problem as well as several variants (special cases) where $G$ is restricted with any subset of the following additional constraints:
\begin{enumerate}
\item every breakable vertex of $G$ must have degree from a list $B$ of allowed degrees;
\item every unbreakable vertex of $G$ must have degree from a list $U$ of allowed degrees;
\item $G$ is planar;
\item $G$ is a simple graph (rather than a multigraph).
\end{enumerate} 

Modifying TRVB to include these constraints makes it easier to reduce from the TRVB problem to some other. For example, having a restricted list of possible breakable vertex degrees $B$ allows a reduction to include gadgets only for simulating breakable vertices of those degrees, whereas without that constraint, the reduction would have to support simulation of breakable vertices of any degree. 

We prove the following results (summarized in Table~\ref{table:results_summary}), which together fully classify the variants of TRVB into polynomial-time solvable and NP-complete problems:
\begin{enumerate}
\item Every TRVB variant whose breakable vertices are only allowed to have degrees of at most $3$ is solvable in polynomial time.
\item Every planar simple graph TRVB variant whose breakable vertices are only allowed to have degrees of at least $6$ and whose unbreakable vertices are only allowed to have degrees of at least $5$ is solvable in polynomial time (and in fact the correct output is always ``no'').
\item In all other cases, the TRVB variant is NP-complete. In particular, the TRVB variant is NP-complete if the variant allows breakable vertices of some degree $k \ge 4$, and in the planar graph case, also allows either breakable vertices of some degree $b \le 5$ or unbreakable vertices of some degree $u \le 4$. For example, for any $k \ge 4$, TRVB is NP-complete in planar multigraphs whose vertices are all breakable and have degree $k$.
\end{enumerate}

\begin{table}[]
\def\arraystretch{1.5}
\resizebox{\textwidth}{!}{%
\begin{tabular}{|C{2.8cm}|C{2.5cm}|C{4.0cm}|C{2.6cm}|C{2.5cm}|}
\hline
\bf{All breakable vertices have small degree ($B \subseteq \{1,2,3\}$)} & 
\bf{Graph restrictions} &
\bf{All vertices have large degree ($B \cap \{1,2,3,4\} = \emptyset$ and $U \cap \{1,2,3,4,5\} = \emptyset$)} & 
\bf{TRVB variant complexity} &
\bf{Section} 
\\ \hline
Yes & 
$*$ & 
$*$ & 
Polynomial Time & 
Section~\ref{section:hypergraph}                                                        \\ \hline
No & 
Planar or simple or unrestricted & 
$*$ & 
NP-complete &
Sections \ref{section:hardness_1},~\ref{section:hardness_2},~\ref{section:hardness_3}
\\ \hline
No & 
Planar and simple & 
No & 
NP-complete & 
Section~\ref{section:hardness_4}
\\ \hline
No & 
Planar and simple & 
Yes & 
Polynomial Time (every instance is a ``no'' instance) &
Section~\ref{section:polynomial}                                                        \\ \hline
\end{tabular}
}

\caption{A summary of this paper's results (where $B$ and $U$ are the allowed breakable and unbreakable vertex degrees).}
\label{table:results_summary}
\end{table}

Among these results, we expect the most generally applicable to be the results that (1) TRVB is polynomially solvable when breakable vertices are restricted to have degree at most $3$; and (2) for any $k \ge 4$, TRVB is NP-complete when the given multigraph is restricted to be planar and to consist entirely of degree-$k$ breakable vertices. 

\paragraph{Application to proving hardness.}
In general, the TRVB problem is useful when proving NP-hardness of what could be called \emph{single-traversal problems}: problems in which some space (e.g., a configuration graph or a grid) must be traversed in a single path or cycle subject to local constraints. Hamiltonian Cycle and its variants fall under this category, but so do other problems (e.g., allowing the solution path/cycle to skip certain vertices entirely while still mandating other local constraints). In other words, TRVB can be a useful alternative to Hamiltonian Cycle when proving NP-hardness of problems related to traversal.

To prove a single-traversal problem hard by reducing from TRVB, it is sufficient to demonstrate two gadgets: an edge gadget and a breakable degree-$k$ vertex gadget for some $k \ge 4$. This is because TRVB remains NP-hard even when the only vertices present are degree-$k$ breakable vertices for some $k \ge 4$. Furthermore, since this version of TRVB remains NP-hard even for planar multigraphs, this approach can be used even when the single-traversal problem under consideration involves traversal of a planar space.

One possible approach for building the gadgets is as follows. The edge gadget should contain two parallel paths, both of which must be traversed because of the local constraints of the single-traversal problem (see Figure~\ref{figure:edge_gadget_idea}). The vertex gadget should have exactly two possible solutions satisfying the local constraints of the problem: one solution should disconnect the regions inside all the adjoining edge gadgets, while the other should connect these regions inside the vertex gadget (see Figure~\ref{figure:vertex_gadget_idea}). We then simulate the multigraph from the input TRVB instance by placing these edge and vertex gadgets in the shape of the input multigraph as shown in Figure~\ref{figure:gadgets_arrangement}. 

\begin{figure}[!htb]
    \centering
    \hfill
    \includegraphics{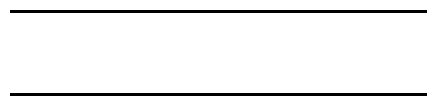}
    \hfill
    \includegraphics{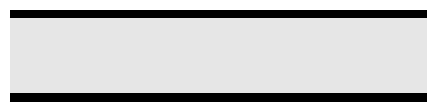}
    \hfill\hfill
    \caption{Abstraction of a possible edge gadget containing two parallel paths (left), together with the local solution of that gadget (right). The bold paths are (forced to be) part of the traversal while the ``inside'' of the gadget is shown in grey.}
    \label{figure:edge_gadget_idea}
\end{figure}

\begin{figure}[!htb]
    \centering
    \hfill
    \includegraphics{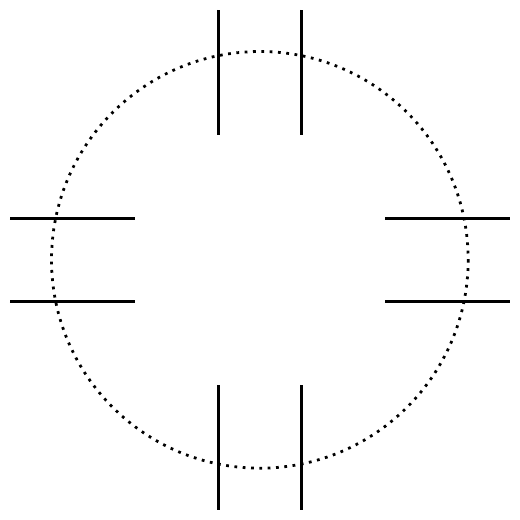}
    \hfill
    \includegraphics{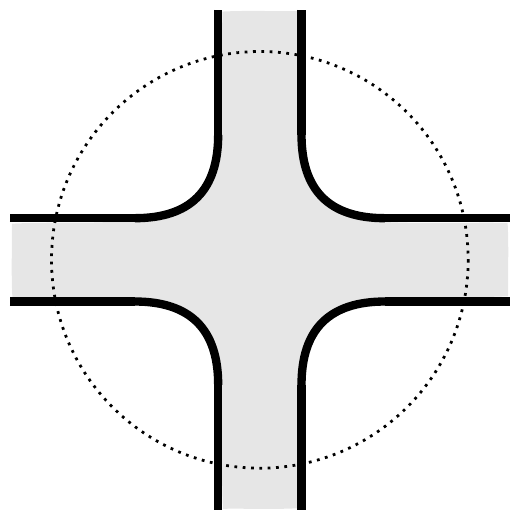}
    \hfill
    \includegraphics{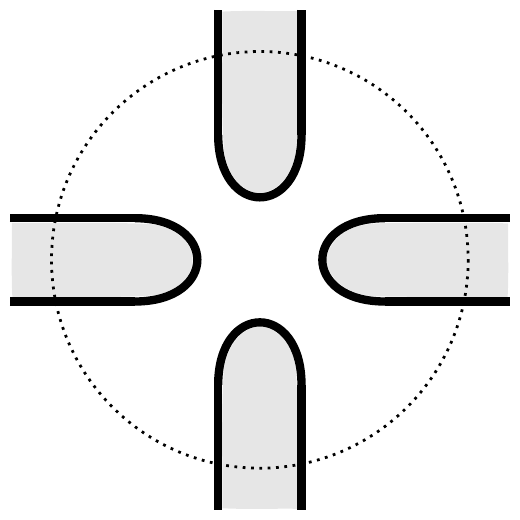}
    \hfill\hfill
    
    \caption{Abstraction of a possible breakable vertex gadget. The gadget should join some number of edge gadgets (in this case four) as shown on the left. The center and right figures show the two possible local solutions to the breakable vertex gadget. One solution connects the interiors of the incoming edge gadgets within the vertex gadget while the other disconnects them. In both figures, the bold paths are part of the traversal, while the ``inside'' of the gadget is shown in grey.}
    \label{figure:vertex_gadget_idea}
\end{figure}

\begin{figure}[!htb]
    \centering
    \hfill\hfill
    \raisebox{-.5\height}{\includegraphics[scale=.5]{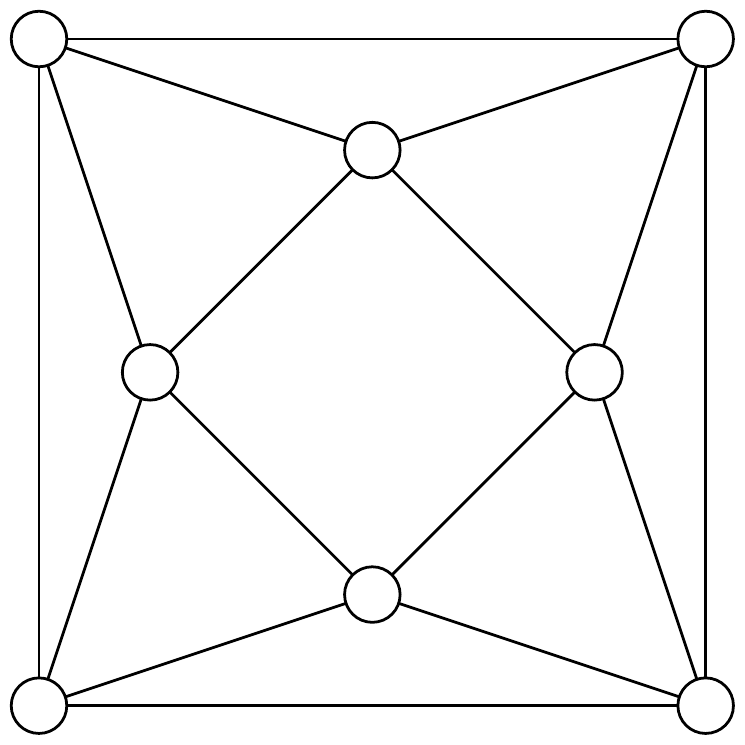}}
    \hfill\raisebox{-.5\height}{\scalebox{2}{$\to$}}\hfill
    \raisebox{-.5\height}{\includegraphics[scale=.5]{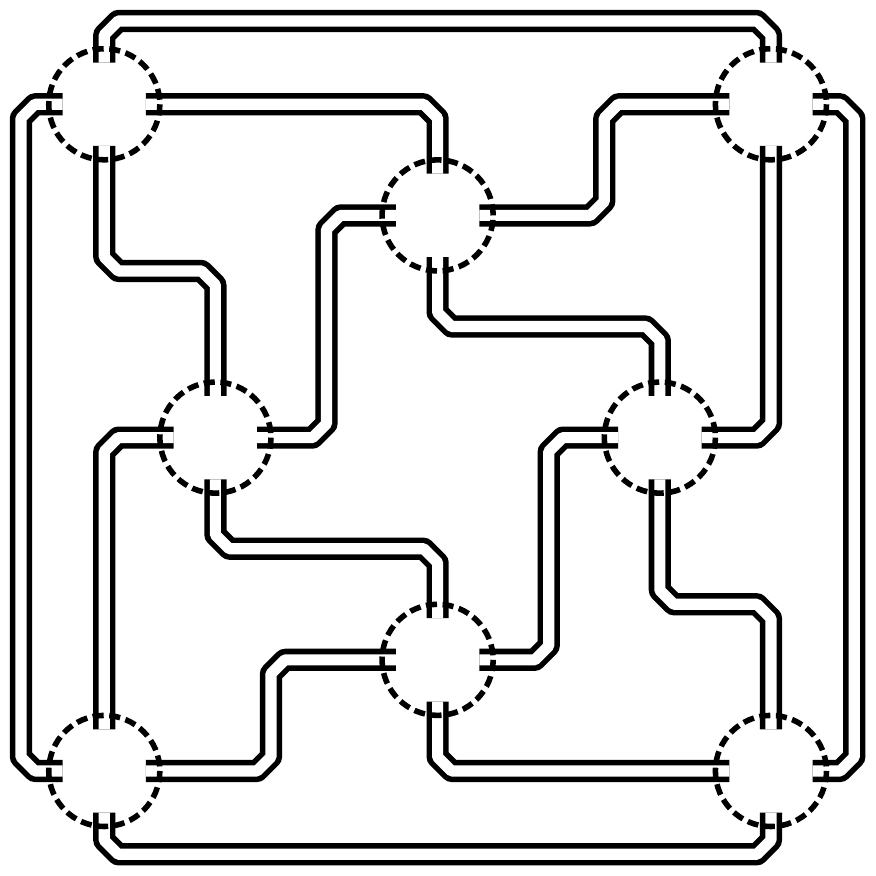}}
    \hfill\hfill\hfill
    \caption{The input multigraph on the left could be converted into a layout of edge and vertex gadgets as shown on the right. In this example, we use a grid layout; in general, we could use any layout consistent with the edge and vertex gadgets.}
    \label{figure:gadgets_arrangement}
\end{figure}

When trying to solve the resulting single-traversal instance, the only option (while satisfying local constraints) is to choose one of the two possible local solutions at each vertex gadget, corresponding to the choice of whether to break the vertex. The candidate solution produced will satisfy all local constraints, but might still not satisfy the global (single cycle) constraint. Notice that the candidate solution is the boundary of the region ``inside'' the local solutions to the edge and vertex gadgets, and that this region ends up being the same shape as the multigraph obtained after breaking vertices. See Figure~\ref{figure:solution_correspondence} for an example. The boundary of this region is a single cycle if and only if the region is connected and hole-free. Since the shape of this region is the same as the shape of the multigraph obtained after breaking vertices, this condition on the region's shape is equivalent to the condition that the residual multigraph must be connected and acyclic, or in other words, a tree. Thus, this construction yields a correct reduction, and in general this proof idea can be used to show NP-hardness of single-traversal problems.

\begin{figure}[!htb]
    \centering
    \hfill\hfill
    \raisebox{-.5\height}{\includegraphics[scale=.5]{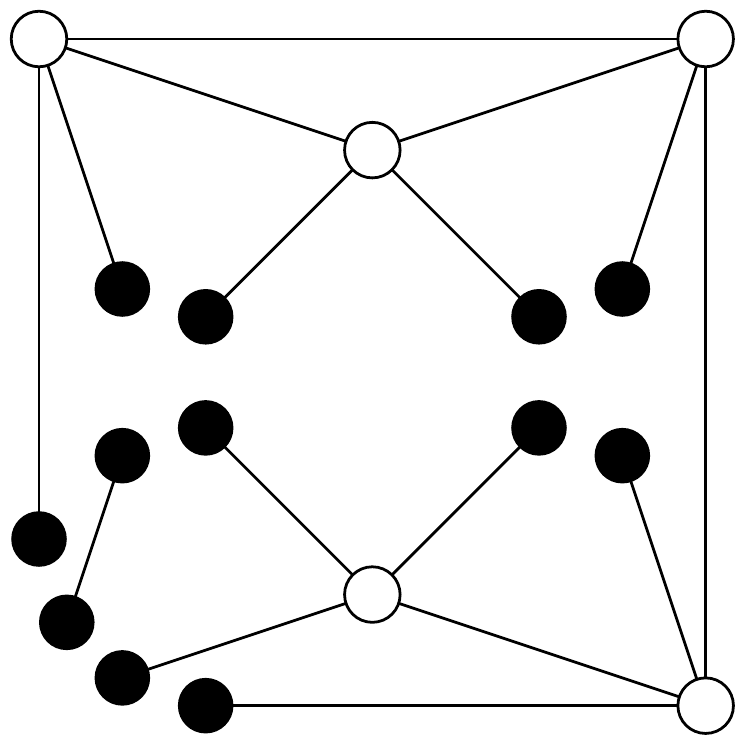}}
    \hfill\raisebox{-.5\height}{\scalebox{2}{$\to$}}\hfill
    \raisebox{-.5\height}{\includegraphics[scale=.5]{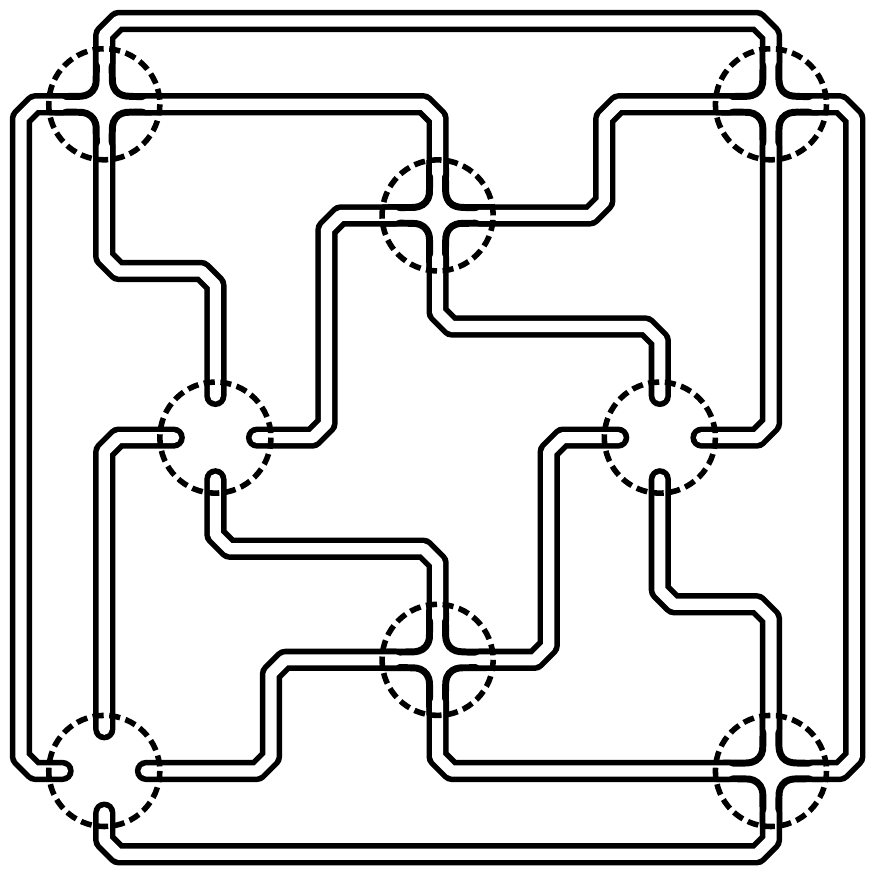}}
    \hfill\raisebox{-.5\height}{\scalebox{2}{$\to$}}\hfill
    \raisebox{-.5\height}{\includegraphics[scale=.5]{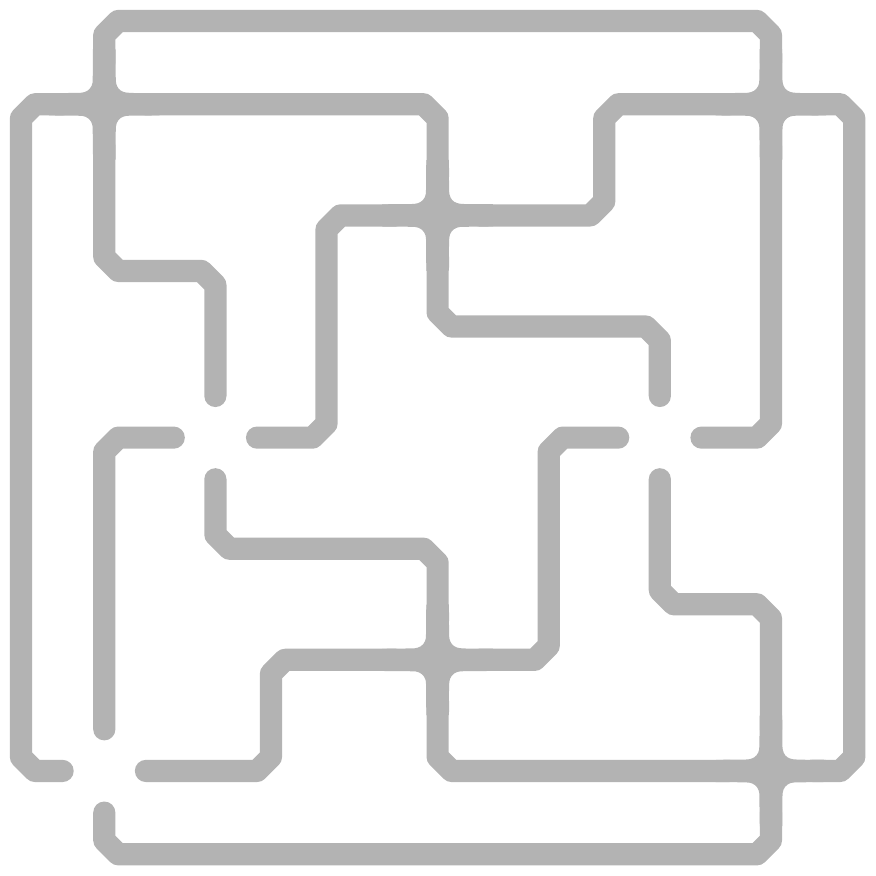}}
    \hfill\hfill\hfill
    \caption{A choice of which vertices to break in the input multigraph (left) corresponds to a choice of local solutions at each of the breakable vertex gadgets, thereby yielding a candidate solution to the single-traversal instance (center). As a result, the shape of the interior of the candidate solution (right) is essentially the same as the shape of the residual multigraph after breaking vertices.}
    \label{figure:solution_correspondence}
\end{figure}

\paragraph{Outline.}
In Section~\ref{section:how_to_use}, we give an example of an NP-hardness proof following the above strategy. By reducing from TRVB, we give a simple proof that Hamiltonian Cycle in max-degree-$3$ square grid graphs is NP-hard (a result previously shown in \cite{Papadimitriou-Vazirani-1984}). We also use the same proof idea in manuscript \cite{GridGraph} to show the novel result that Hamiltonian Cycle in hexagonal thin grid graphs is NP-hard.

In Section~\ref{section:prelim}, we formally define the variants of TRVB under consideration. We also prove membership in NP for all the variants and provide the obvious reductions between them. 

Sections~\ref{section:hardness_1}--\ref{section:hardness_4} address our NP-hardness results. In Section~\ref{section:hardness_1}, we reduce from an NP-hard problem to show that Planar TRVB with only degree-$k$ breakable vertices and unbreakable degree-$4$ vertices is NP-hard for any $k \ge 4$. All the other hardness results in this paper are derived directly or indirectly from this one. In Section~\ref{section:hardness_2}, we prove the NP-completeness of the variants of TRVB and of Planar TRVB in which breakable vertices of some degree $k \ge 4$ are allowed. Similarly, we show in Section~\ref{section:hardness_3} that Graph TRVB is also NP-complete in the presence of breakable vertices of degree $k \ge 4$. Finally, in Section~\ref{section:hardness_4}, we show that Planar Graph TRVB is NP-complete provided (1) breakable vertices of some degree $k \ge 4$ are allowed and (2) either breakable vertices of degree $b \le 5$ or unbreakable vertices of degree $u \le 4$ are allowed.


Next, in Section~\ref{section:polynomial}, we proceed to one of our polynomial-time results: that a variant of TRVB is solvable in polynomial time whenever the multigraph is restricted to be a planar graph, the breakable vertices are restricted to have degree at least $6$, and the unbreakable vertices are restricted to have degree at least $5$. In such a graph, it is impossible to break a set of breakable vertices and get a tree. As a result, variants of TRVB satisfying these restrictions are always solvable with a trivial polynomial time algorithm.

In Section~\ref{section:hypergraph}, we establish a connection between TRVB and the Hypergraph Spanning Tree problem (given a hypergraph, decide whether it has a spanning tree). Namely, Hypergraph Spanning Tree on a hypergraph is equivalent to TRVB on the corresponding incidence graph with edge nodes marked breakable and vertex nodes marked unbreakable. This equivalence allows us to construct a reduction from TRVB to Hypergraph Spanning Tree: given a TRVB instance, we can first convert that instance into a bipartite TRVB instance (by inserting unbreakable vertices between adjacent breakable vertices and merging adjacent unbreakable vertices) and then construct the hypergraph whose incidence graph is the bipartite TRVB instance. 


This connection allows us to obtain results about both TRVB and Hypergraph Spanning Tree. By leveraging known results about Hypergraph Spanning Tree (see \cite{lovasz}), we prove that TRVB is polynomial-time solvable when all breakable vertices have small degrees ($B \subseteq \{1, 2, 3\}$). This final result completes our classification of the variants of TRVB. We also apply the hardness results from this paper to obtain new results about Hypergraph Spanning Tree; namely, Hypergraph Spanning Tree is NP-complete in $k$-uniform $2$-regular hypergraphs for any $k \ge 4$, even when the incidence graph of the hypergraph is planar. This improves the previously known result that Hypergraph Spanning Tree is NP-complete in $k$-uniform hypergraphs for any $k \ge 4$ (see \cite{promel2002steiner}).


\section{Example of how to use TRVB: Hamiltonicity in max-degree-$3$ square grid graphs}
\label{section:how_to_use}

In this section, we show one example of using TRVB to prove hardness of a single-traversal problem. Namely, the result that Hamiltonian Cycle in max-degree-$3$ square grid graphs is NP-hard \cite{Papadimitriou-Vazirani-1984} can be reproduced with the following much simpler reduction.

The reduction is from the variant of TRVB in which the input multigraph is restricted to be planar and to have only degree-$4$ breakable vertices, which is shown NP-complete in Section~\ref{section:hardness_2}. Given a planar multigraph $G$ with only degree-$4$ breakable vertices, we output a max-degree-$3$ square grid graph by appropriately placing breakable degree-$4$ vertex gadgets (shown in Figure~\ref{figure:vertex_gadget_how_to}) and routing edge gadgets (shown in Figure~\ref{figure:edge_gadget_how_to}) to connect them. The appropriate placement of gadgets can be accomplished in polynomial time by the results from \cite{routing}. Each edge gadget consists of two parallel paths of edges a distance of two apart, and as shown in the figure, these paths can turn, allowing the edge to be routed as necessary (without parity constraints). Each breakable degree-$4$ vertex gadget joins four edge gadgets in the configuration shown. Note that, as desired, the maximum degree of any vertex in the resulting grid graph is $3$.

\begin{figure}[!htb]
  \centering
  \begin{minipage}{0.45\linewidth}
    \centering
    \includegraphics[scale=.5]{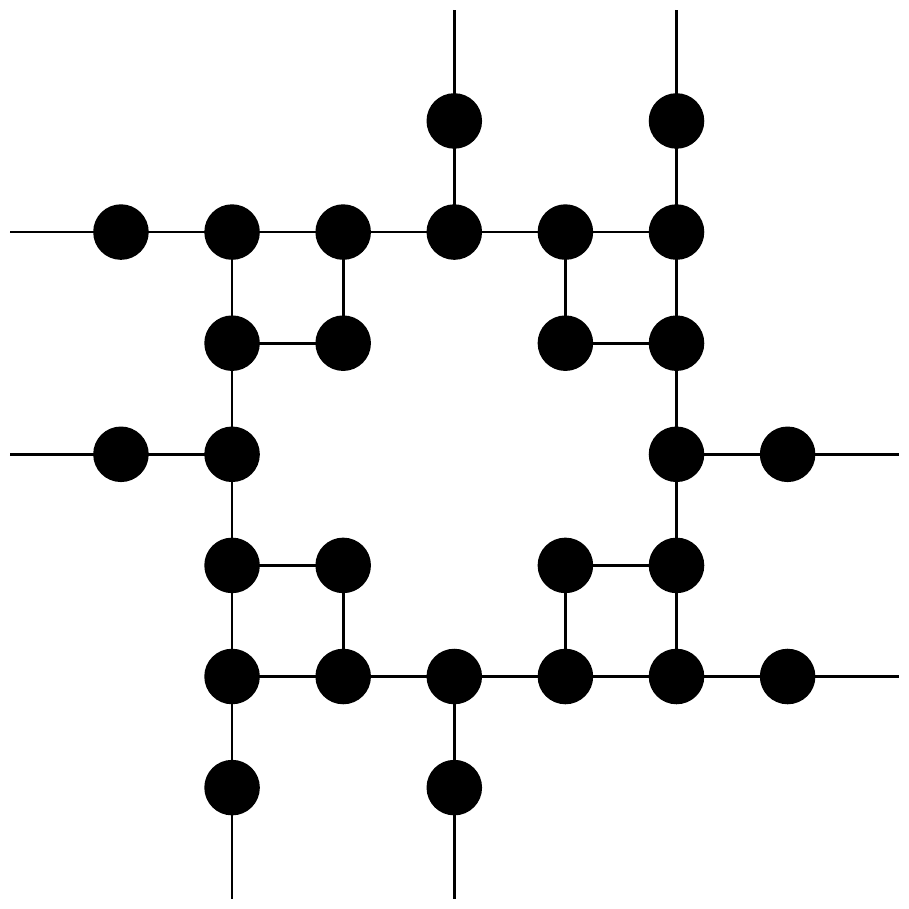}
    \caption{A degree-$4$ breakable vertex gadget.}
    \label{figure:vertex_gadget_how_to}
  \end{minipage}\hfill
  \begin{minipage}{0.45\linewidth}
    \centering
    \includegraphics[scale=.5]{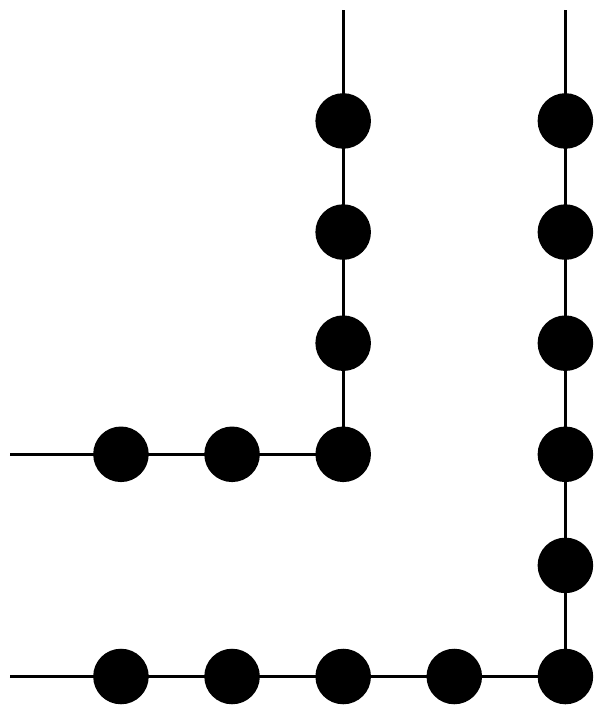}
    \caption{An example edge gadget consisting of two parallel paths of edges a distance of two apart.}
    \label{figure:edge_gadget_how_to}
  \end{minipage}
\end{figure}

Consider any candidate set of edges $C$ that could be a Hamiltonian cycle in the resulting grid graph. In order for $C$ to be a Hamiltonian cycle, $C$ must satisfy both the local constraint that every vertex is incident to exactly two edges in $C$ and the global constraint that $C$ is a cycle (rather than a set of disjoint cycles). It is easy to see that, in order to satisfy the local constraint, every edge in every edge gadget must be in $C$. Similarly, there are only two possibilities within each breakable degree-$4$ vertex gadget which satisfy the local constraint. These possibilities are shown in Figure~\ref{figure:vertex_gadget_solution_how_to}.

\begin{figure}[!htb]
    \centering
    \includegraphics[scale=.5]{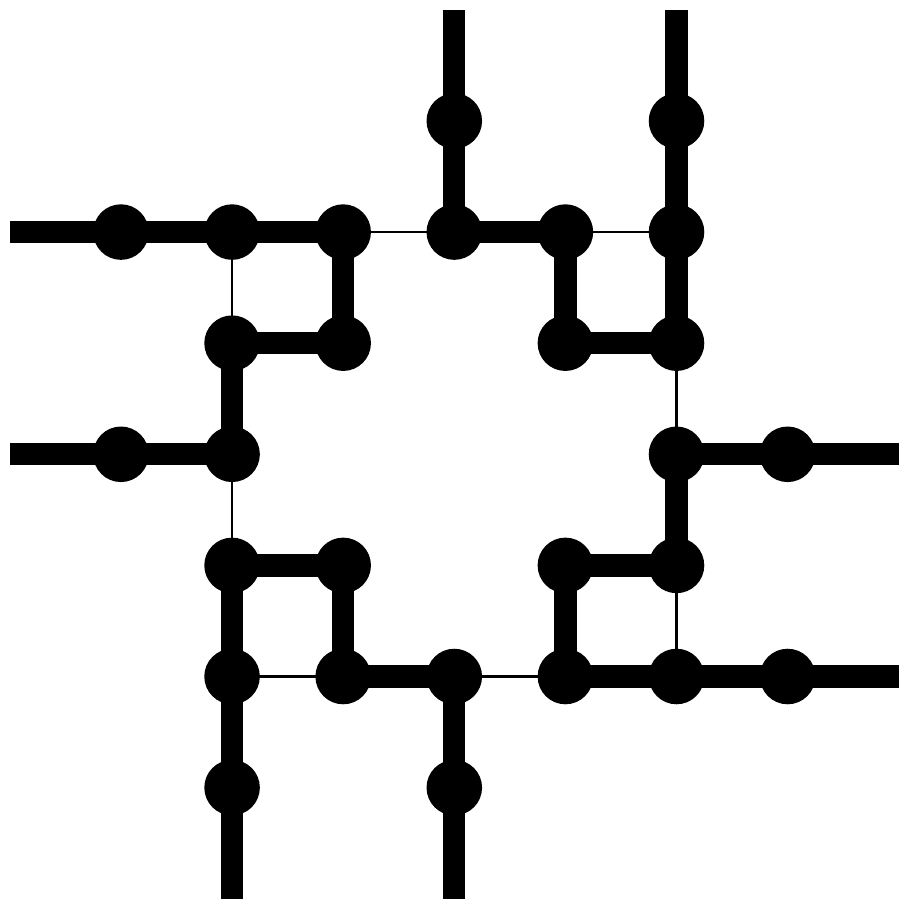}~~
    \includegraphics[scale=.5]{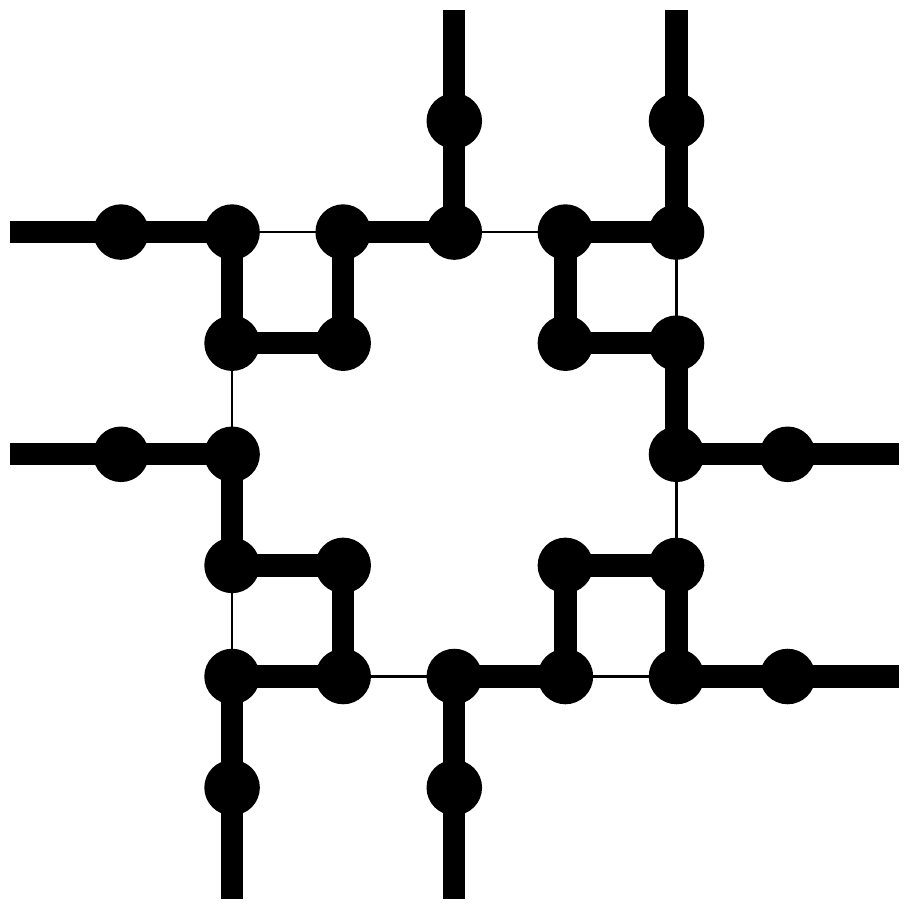}
    \caption{The two possible solutions to the vertex gadget from Figure~\ref{figure:vertex_gadget_how_to} which satisfy the local constraints imposed by the Hamiltonian Cycle problem.}
    \label{figure:vertex_gadget_solution_how_to}
\end{figure}

We can identify the choice of local solution at each breakable degree-$4$ vertex gadget with the choice of whether to break the corresponding vertex. Under this bijection, every candidate solution $C$ satisfying local constraints corresponds with a possible multigraph $G'$ formed from $G$ by breaking vertices. The key insight is that the shape of the region $R$ inside $C$ is exactly the shape of $G'$. This is shown for an example graph-piece in Figure~\ref{figure:graph_piece_example_how_to}. The boundary of $R$, also known as $C$, is exactly one cycle if and only if $R$ is connected and hole-free. Since the shape of region $R$ is the same as the shape of multigraph $G'$, this corresponds to the condition that $G'$ is connected and acyclic, or in other words that $G'$ is a tree. Thus, there exists a candidate solution $C$ to the Hamiltonian Cycle instance (satisfying the local constraints) that is an actual solution (also satisfying the global constraints) if and only if $G$ is a ``yes'' instance of TRVB. Therefore, Hamiltonian Cycle in max-degree-$3$ square grid graphs is NP-hard. 

\begin{figure}[!htb]
    \centering
    \hfill
    \includegraphics[width=.2\textwidth]{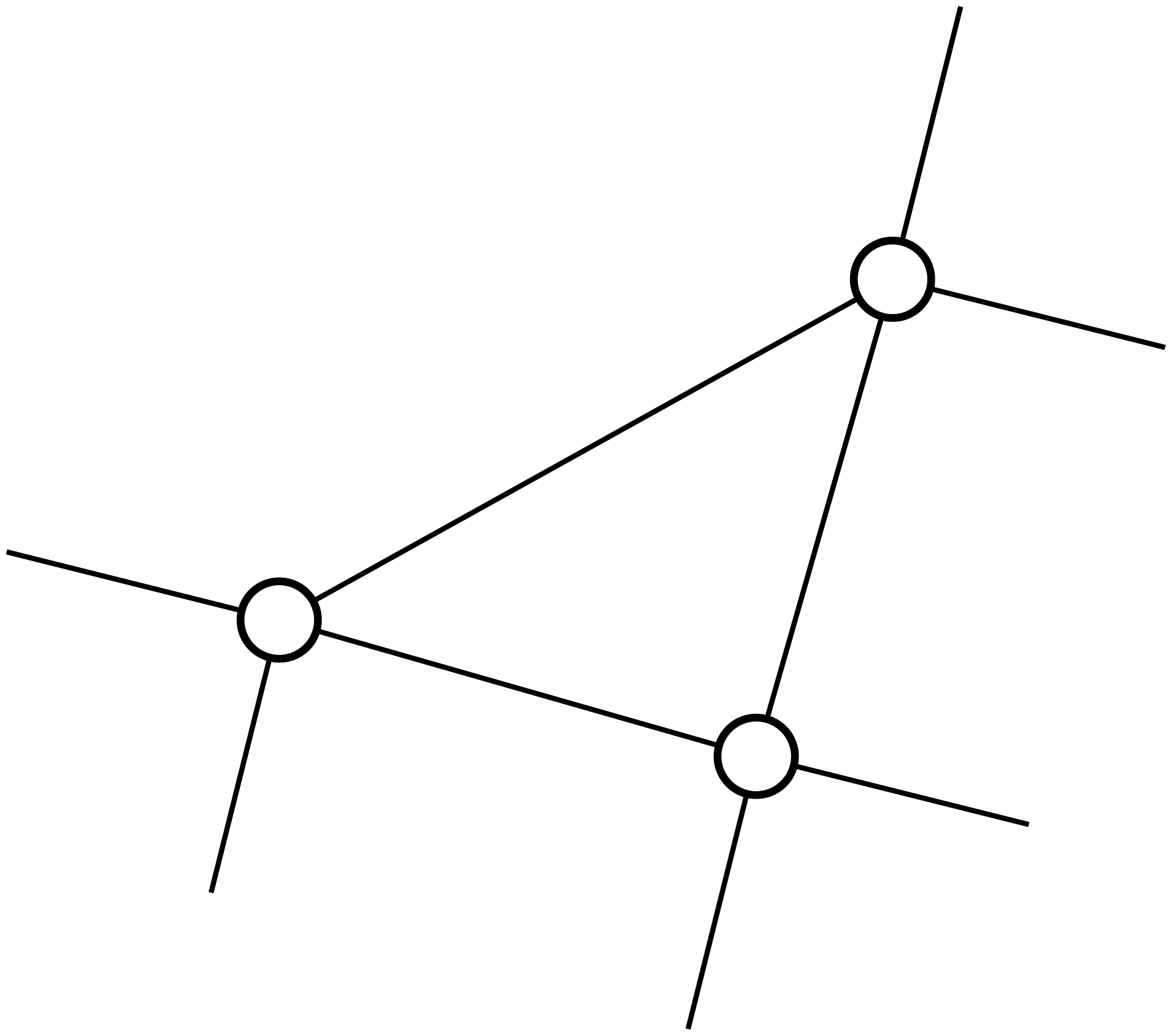}\hfill
    \includegraphics[width=.2\textwidth]{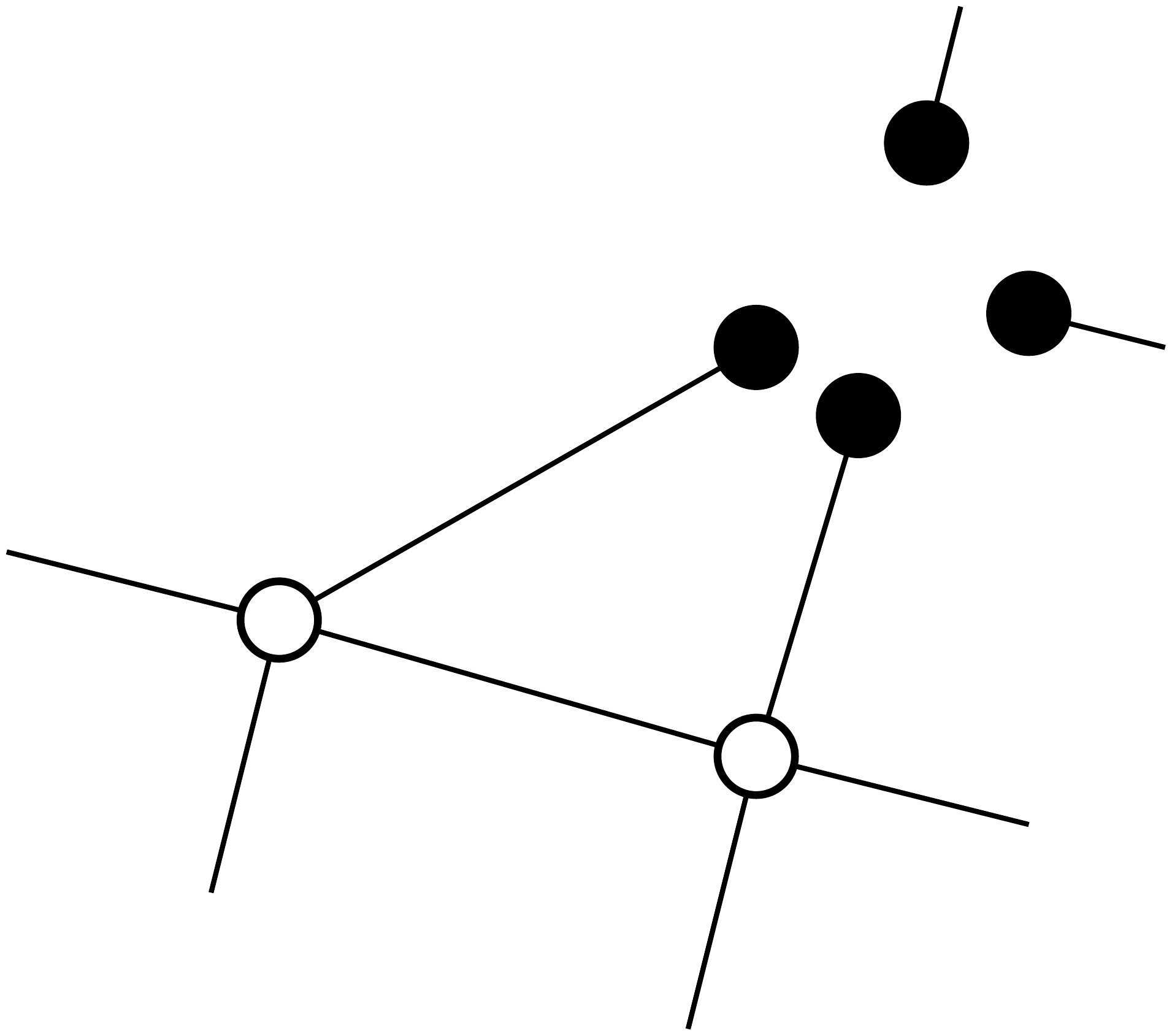}\hfill\hfill

    \hfill
    \includegraphics[width=.45\textwidth]{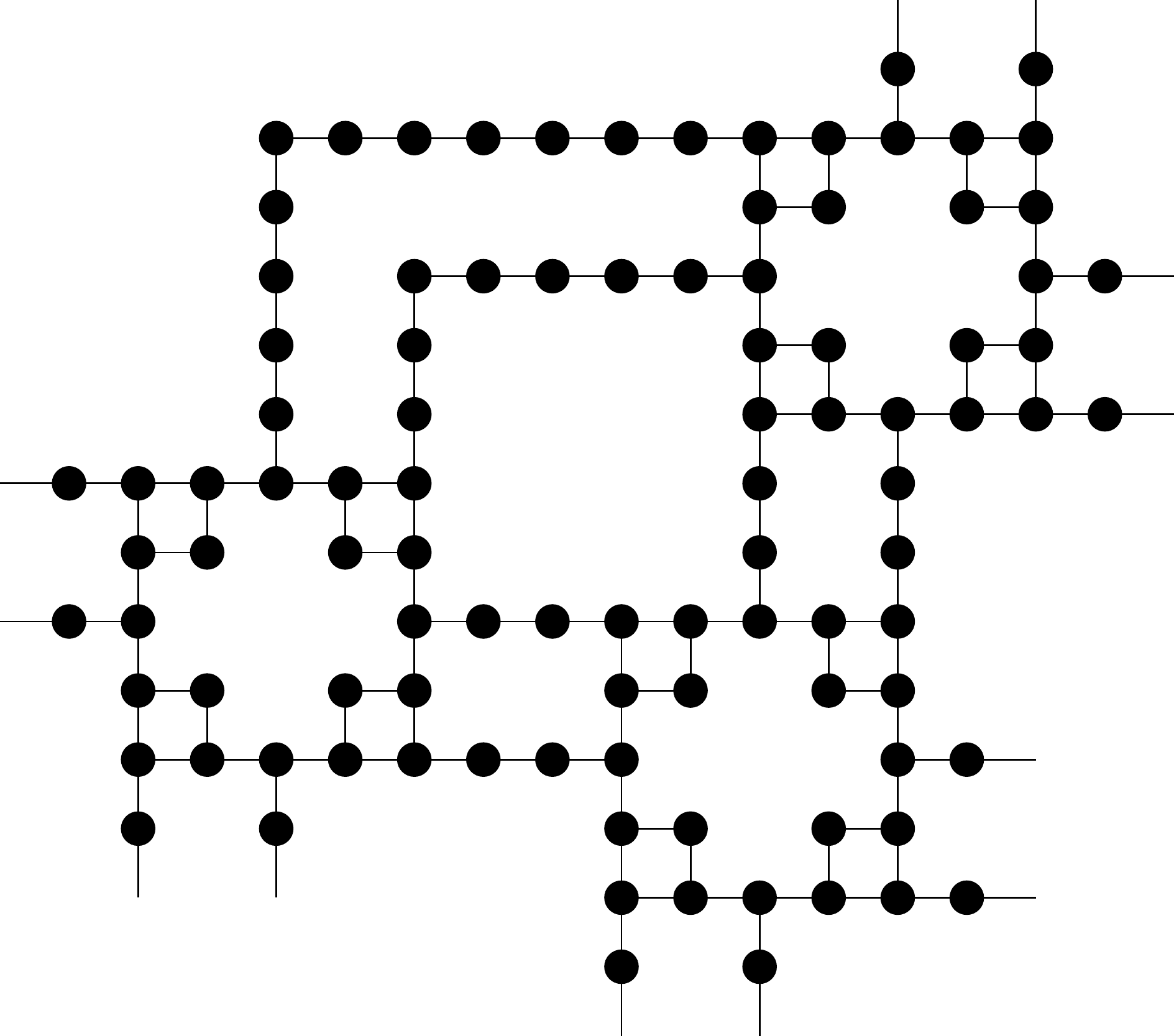}\hfill
    \includegraphics[width=.45\textwidth]{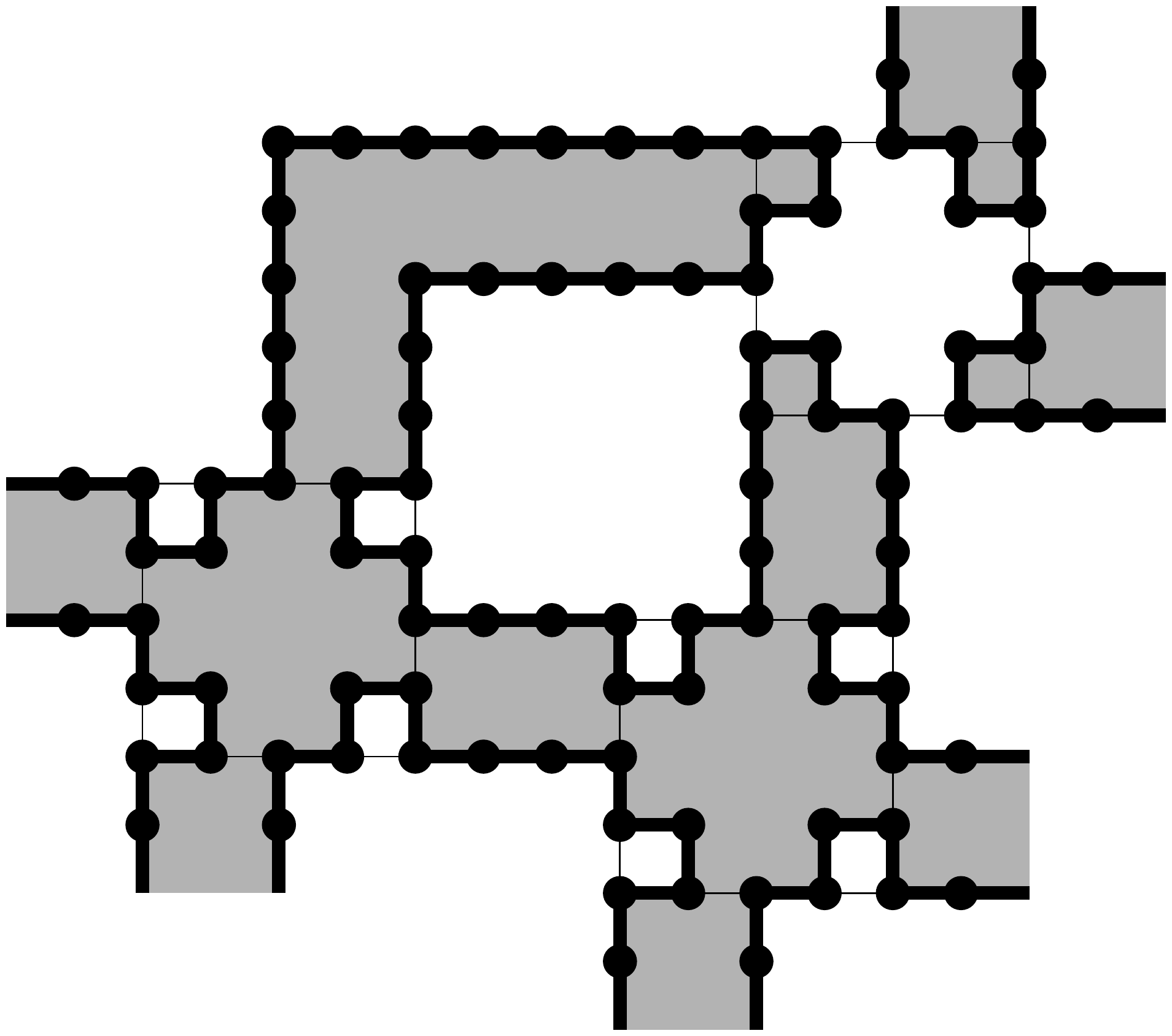}\hfill\hfill
    \caption{Given a multigraph including the piece shown in the top left, the output grid graph might include the section shown in the bottom left (depending on graph layout). If the top vertex in this piece of the multigraph is broken, resulting in the piece of multigraph $G'$ shown in the top right, then the resulting candidate solution $C$ (shown in bold) in the bottom right contains region $R$ (shown in grey) whose shape resembles the shape of $G'$.}
    \label{figure:graph_piece_example_how_to}
\end{figure}

\section{Problem variants}
\label{section:prelim}

In this section, we will formally define the variants of TRVB under consideration, and prove some basic results about them.

To begin, we formally define the TRVB problem. The multigraph operation of \emph{breaking} vertex $v$ in undirected multigraph $G$ results in a new multigraph $G'$ by removing $v$, adding a number of new vertices equal to the degree of $v$ in $G$, and connecting these new vertices to the neighbors of $v$ in $G$ in a one-to-one manner (as shown in Figure~\ref{figure:breaking_example} in Section~\ref{section:introduction}). Using this definition, we pose the TRVB problem:

\begin{problem}
The \emph{Tree-Residue Vertex-Breaking Problem (TRVB)} takes as input a multigraph $G$ whose vertices are partitioned into two sets $V_B$ and $V_U$ (called the \emph{breakable} and \emph{unbreakable} vertices respectively), and asks to decide whether there exists a set $S \subseteq V_B$ such that after breaking every vertex of $S$ in $G$, the resulting multigraph is a tree.
\end{problem}

In order to avoid trivial cases, we consider only input graphs that have no degree-$0$ vertices.

Next, suppose $B$ and $U$ are both sets of positive integers. Then we can constrain the breakable vertices of the input to have degrees in $B$ and constrain the unbreakable vertices of the input to have degrees in $U$. The resulting constrained version of the problem is defined below:

\begin{definition}
The \emph{$(B, U)$-variant of the TRVB problem}, denoted $(B, U)$-TRVB, is the special case of TRVB where the input multigraph is restricted so that every breakable vertex in $G$ has degree in $B$ and every unbreakable vertex in $G$ has degree in $U$. 
\end{definition}

Throughout this paper we consider only sets $B$ and $U$ for which membership can be computed in pseudopolynomial time (i.e., membership of $n$ in $B$ or $U$ can be computed in time polynomial in $n$). As a result, verifying that the vertex degrees of a given multigraph are allowed can be done in polynomial time. This means that the classification of a particular $(B, U)$-variant of the TRVB problem into P or NP-complete is a statement about the hardness of checking TRVB (while constrained by the other conditions) rather than a statement about the hardness of checking membership in $B$ or $U$ for the degrees in the multigraph. In fact, all the results in this paper will also apply even in the cases that membership in $B$ or $U$ cannot be computed in pseudopolynomial time if we consider the promise problems in which the given multigraph's vertex degrees are guaranteed to comply with the sets $B$ and $U$.

We can also define three further variants of the problem depending on whether $G$ is constrained to be planar, a (simple) graph, or both: the \emph{Planar $(B, U)$-variant of the TRVB problem} (denoted Planar $(B, U)$-TRVB), the \emph{Graph $(B, U)$-variant of the TRVB } (denoted Graph $(B, U)$-TRVB), and the \emph{Planar Graph $(B, U)$-variant of the TRVB problem} (denoted Planar Graph $(B, U)$-TRVB).

Note that since both being planar and being a graph are properties of a multigraph that can be verified in polynomial time, again the classification of these variants into P or NP-complete is a statement about the hardness of TRVB.

\subsection{Diagram conventions}

Throughout this paper, when drawing diagrams, we will use filled circles to represent unbreakable vertices and unfilled circles to represent breakable vertices. See Figure~\ref{figure:example_vertices}.

\begin{figure}[!htb]
    \centering
    \includegraphics{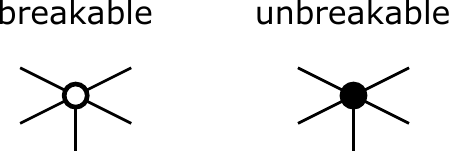}
    \caption{An example diagram, showing the depictions of vertex types used in this paper.}
    \label{figure:example_vertices}
\end{figure}

\subsection{Trivial reductions}

As mentioned above, except for the constraint that the TRVB problem outputs ``yes'' on the given input, every other constraint in the definition of each of the above variants can be tested in polynomial time. Therefore, if for some two variants $X$ and $Y$ the non-TRVB conditions of $X$ are strictly stronger (more constraining) than the non-TRVB conditions of $Y$, then we can reduce from $X$ to $Y$ in polynomial time. In particular, we can convert an input $G$ for variant $X$ into an input $G'$ for $Y$ as follows:

First test all the non-TRVB conditions of variant $X$ on the input $G$. If any condition is not satisfied, then $X$ rejects $G$, so output any $G'$ rejected by $Y$. If all the non-TRVB conditions of variant $X$ are satisfied, then by assumption all the non-TRVB conditions of variant $Y$ on input $G$ are also satisfied. Therefore $G$ is a ``yes'' instance of both $X$ and $Y$ if and only if $G$ is a ``yes'' instance of TRVB. Therefore $X$ and $Y$ have the same answer on $G$, so outputting $G' = G$ completes the reduction.

Using the above reduction scheme, we conclude that:
\begin{lemma}
\label{lemma:trivial_reductions}
For any $(B, U)$, there are reductions
\begin{itemize}
\item from Planar $(B, U)$-TRVB to $(B, U)$-TRVB,
\item from Graph $(B, U)$-TRVB to $(B, U)$-TRVB,
\item from Planar Graph $(B, U)$-TRVB to Planar $(B, U)$-TRVB, and
\item from Planar Graph $(B, U)$-TRVB to Graph $(B, U)$-TRVB.
\end{itemize}
For any $(B, U)$ and $(B', U')$ with $B \subseteq B'$ and $U \subseteq U'$, there are reductions
\begin{itemize}
\item from $(B, U)$-TRVB to $(B', U')$-TRVB,
\item from Planar $(B, U)$-TRVB to Planar $(B', U')$-TRVB,
\item from Graph $(B, U)$-TRVB to Graph $(B', U')$-TRVB, and
\item from Planar Graph $(B, U)$-TRVB to Planar Graph $(B', U')$-TRVB.
\end{itemize}
\end{lemma}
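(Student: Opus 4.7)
The plan is essentially to formalize the reduction scheme outlined in the paragraph immediately preceding the lemma, and then observe that it applies uniformly to each of the eight claimed reductions. The common setup is that for any pair $(X,Y)$ of variants in the list, the non-TRVB side constraints of $X$ (planarity, simplicity, and membership of vertex degrees in $B$ and $U$) imply those of $Y$; this is immediate from the definitions, using $B \subseteq B'$ and $U \subseteq U'$ where relevant and the obvious containment of graph classes (planar graphs $\subseteq$ planar multigraphs, simple graphs $\subseteq$ multigraphs, planar simple graphs $\subseteq$ planar multigraphs and $\subseteq$ simple graphs).

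Next I would verify that each side constraint can be checked in polynomial time on the input $G$. Planarity and simplicity are standard polynomial-time properties of a multigraph. For degree membership, the hypothesis in the excerpt that membership in $B$ and $U$ is computable in pseudopolynomial time means that checking ``$\deg(v) \in B$'' or ``$\deg(v) \in U$'' for each vertex $v$ takes time polynomial in $\deg(v) \le |E(G)|$, hence polynomial in the input size overall. I would also note the remark made in the excerpt that, if one prefers to avoid this assumption, the whole lemma still goes through for the promise versions of these problems, in which case this step is vacuous.

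With these two ingredients, the reduction from $X$ to $Y$ is uniform: given $G$, run the polynomial-time tests for all the side constraints of $X$; if any fails, output a fixed canonical ``no'' instance of $Y$ (for example, a single unbreakable vertex carrying two self-loops, or whatever trivially satisfies $Y$'s side constraints while not being a tree after any breaking operation—something easy to exhibit in each of the eight target variants). If all side constraints of $X$ pass, then by the containment of side constraints, $G$ is automatically a valid instance of $Y$, and since in that case the answers of $X$ and $Y$ on $G$ both coincide with the underlying TRVB answer on $G$, the reduction simply outputs $G' = G$. Correctness is immediate, and the running time is polynomial.

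The main (and really only) obstacle is the bookkeeping: checking that the eight listed reductions all fit this template and that in each case one has an explicit ``no'' instance of the target variant to output in the failure branch. The first four reductions follow because the graph-class restriction in $X$ is strictly stronger than in $Y$ while $(B,U)$ is the same; the last four follow because the graph-class restriction is identical while $B \subseteq B'$ and $U \subseteq U'$ weaken the degree constraints. No nontrivial graph surgery is required, so no technical difficulty arises beyond what is already handled by the polynomial-time checkability of the side conditions.
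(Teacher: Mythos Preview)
Your proposal is correct and follows essentially the same approach as the paper: the paper's argument is precisely the reduction scheme described in the paragraph immediately preceding the lemma (test the side constraints of $X$, output a fixed ``no'' instance of $Y$ on failure, and output $G'=G$ on success), and you have faithfully reproduced and slightly elaborated on that scheme.
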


\subsection{Membership in NP}

\begin{theorem}
The TRVB problem is in NP.
\end{theorem}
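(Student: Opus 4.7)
The plan is to exhibit a polynomial-time verifier that takes a succinct certificate and checks whether it witnesses a ``yes'' answer. The natural certificate is the set $S \subseteq V_B$ itself, i.e., the list of breakable vertices to break. Since $|S| \le |V_B| \le |V(G)|$, the certificate has size polynomial in the size of the input multigraph $G$.

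Given such a certificate $S$, the verifier would proceed in three steps. First, it checks that $S$ is indeed a subset of $V_B$, which takes linear time. Second, it constructs the multigraph $G'$ obtained from $G$ by breaking every vertex in $S$; since breaking a vertex of degree $k$ introduces $k$ new degree-$1$ vertices, the total blowup is bounded by $\sum_{v \in S} \deg(v) \le 2|E(G)|$, so $G'$ can be built in polynomial time and has polynomial size. Third, the verifier checks whether $G'$ is a tree, which can be done in linear time (for instance by verifying that $G'$ is connected via breadth-first search and that $|E(G')| = |V(G')| - 1$, which together imply acyclicity).

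If all three checks succeed, the verifier accepts; otherwise it rejects. By the definition of TRVB, such a witnessing $S$ exists if and only if $G$ is a ``yes'' instance, so this establishes membership in NP. No step poses any real obstacle; the only small point worth mentioning is the bound on $|V(G')|$ and $|E(G')|$ to ensure the breaking step and the tree check remain polynomial, but this follows immediately from the handshake bound $\sum_v \deg(v) = 2|E(G)|$.
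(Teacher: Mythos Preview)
Your proof is correct and takes essentially the same approach as the paper: both use the set $S$ of vertices to break as the certificate (the paper phrases it as a nondeterministic guess), then break those vertices and check in polynomial time whether the result is a tree. Your version is slightly more explicit about the polynomial size of $G'$ via the handshake bound, but the argument is the same.
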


\begin{proof}
We describe a nondeterministic algorithm to solve TRVB: First nondeterministically guess a set of breakable vertices in $G$. Break that set of vertices and accept if and only if the resulting multigraph is a tree. 

This algorithm accepts an input $G$ on at least one nondeterministic branch if and only if it is possible to break some of the breakable vertices so that the residual multigraph is a tree. In other words, this algorithm solves TRVB. Furthermore, the algorithm runs in polynomial time since both breaking vertices and checking whether a multigraph is a tree are polynomial-time operations. As desired, TRVB is in NP.
\end{proof}

Another name for TRVB is $(\mathbb{N}, \mathbb{N})$-TRVB, so we can apply the reductions from Lemma~\ref{lemma:trivial_reductions} to conclude that:

\begin{corollary}
\label{corollary:np_membership}
For any $(B, U)$, the $(B, U)$-TRVB, Planar $(B, U)$-TRVB, Graph $(B, U)$-TRVB, and Planar Graph $(B, U)$-TRVB problems are in NP.
\end{corollary}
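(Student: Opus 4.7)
The plan is to derive each variant's membership in NP as a direct consequence of the preceding theorem (that TRVB is in NP) together with the trivial reductions packaged in Lemma~\ref{lemma:trivial_reductions}. Since NP is closed under polynomial-time many-one reductions, it suffices to exhibit, for each of the four variants, a polynomial-time reduction to a problem already known to be in NP. The natural target is the unconstrained problem TRVB, which equals $(\mathbb{N}, \mathbb{N})$-TRVB.

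First I would handle $(B, U)$-TRVB by instantiating the second group of reductions in Lemma~\ref{lemma:trivial_reductions} with $B' = U' = \mathbb{N}$; this gives a polynomial-time reduction from $(B, U)$-TRVB to TRVB, so $(B, U)$-TRVB inherits membership in NP from the theorem. For Planar $(B, U)$-TRVB and Graph $(B, U)$-TRVB, I would compose two reductions from the lemma: reduce from the planar (respectively, graph) variant to $(B, U)$-TRVB, and then reduce $(B, U)$-TRVB to TRVB as above. Finally, Planar Graph $(B, U)$-TRVB reduces to Planar $(B, U)$-TRVB (or to Graph $(B, U)$-TRVB) by the lemma, and then onward to TRVB by the chain just described.

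There is no real obstacle here; the only thing to verify is that the compositions remain polynomial-time, which is immediate because each individual reduction in Lemma~\ref{lemma:trivial_reductions} either outputs its input unchanged or outputs a fixed ``no'' instance (and the latter case requires only polynomial-time checking of the extra structural conditions, which was assumed when these variants were introduced). Thus all four variants lie in NP, as claimed.
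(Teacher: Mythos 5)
Your proposal is correct and matches the paper's approach exactly: the paper also observes that TRVB is $(\mathbb{N},\mathbb{N})$-TRVB and invokes the reductions of Lemma~\ref{lemma:trivial_reductions} to transfer NP membership to all four variants. You spell out the reduction chains more explicitly than the paper does, but the underlying argument is identical.
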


\section{Planar $(\{k\}, \{4\})$-TRVB is NP-hard for any $k \ge 4$}
\label{section:hardness_1}

The overall goal of this section is to prove NP-hardness for several variants of TRVB. In particular, we will introduce an NP-hard variant of the Hamiltonicity problem in Section~\ref{section:hardness_1_a} and then reduce from this problem to Planar $(\{k\}, \{4\})$-TRVB for any $k \ge 4$ in Section~\ref{section:hardness_1_b}. This is the only reduction from an external problem in this paper. All further hardness results will be derived from this one via reductions between different TRVB variants.

\subsection{Planar Hamiltonicity in Directed Graphs with all in- and out-degrees $2$ is NP-hard}
\label{section:hardness_1_a}

The following problem was shown NP-complete in \cite{hamiltonicity}:

\begin{problem}
The \emph{Planar Max-Degree-$3$ Hamiltonicity Problem} asks for a given planar directed graph whose vertices each have total degree at most $3$ whether the graph is Hamiltonian (has a Hamiltonian cycle).
\end{problem} 

For the sake of simplicity we will assume that every vertex in an input instance of the Planar Max-Degree-$3$ Hamiltonicity problem has both in- and out-degree at least $1$ (and therefore at most $2$). This is because the existence of a vertex with in- or out-degree $0$ in a graph immediately implies that there is no Hamiltonian cycle in that graph. 

As it turns out, this problem is not quite what we need for our reduction, so below we introduce several new definitions and define a new variant of the Hamiltonicity problem:

\begin{definition}
Call a vertex $v \in G$ \emph{alternating} for a given planar embedding of a planar directed graph $G$ if, when going around the vertex, the edges switch from inward to outward oriented more than once. Otherwise, call the vertex \emph{non-alternating}. A non-alternating vertex has all its inward oriented edges in one contiguous section and all its outward oriented edges in another; an alternating vertex on the other hand alternates between inward and outward sections more times.

We call a planar embedding of planar directed graph $G$ a \emph{planar non-alternating embedding} if every vertex is non-alternating under that embedding. If $G$ has a planar non-alternating embedding we say that $G$ is a \emph{planar non-alternating graph}.
\end{definition}

\begin{problem}
The \emph{Planar Non-Alternating Indegree-$2$ Outdegree-$2$ Hamiltonicity Problem} asks, for a given planar non-alternating directed graph whose vertices each have in- and out-degree exactly $2$, whether the graph is Hamiltonian
\end{problem} 

The goal of this section is to prove that this problem is NP-hard. To this purpose, consider the following definition and lemmas:

\begin{definition}
Define \emph{simplifying $G$ over edge $(u, v)$} to be the following operation: remove all edges $(u', v)$ and $(u, v')$ from $G$ and then contract edge $(u, v)$. The resulting graph has one new vertex instead of $u$ and $v$; this vertex inherits the inward oriented edges of $u$ and inherits the outward oriented edges of $v$. The inward oriented edges of $v$ and outward oriented edges of $u$ are removed from the graph.
\end{definition}

\begin{lemma}
\label{lemma:simplify_hamiltonicity}
If $(u, v)$ is an edge of directed graph $G$ and either $u$ has outdegree $1$ or $v$ has indegree $1$, then simplifying $G$ over $(u, v)$ maintains the Hamiltonicity of $G$. 
\end{lemma}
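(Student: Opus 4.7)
The plan is to prove both directions of the Hamiltonicity equivalence by tracking how a Hamiltonian cycle interacts with the edge $(u,v)$ and with the edges deleted by simplification. By symmetry I may focus on the case where $u$ has outdegree $1$; the case where $v$ has indegree $1$ is analogous with the roles of ``entering $v$'' and ``leaving $u$'' swapped. Let $G'$ denote the simplified graph and $w$ the vertex obtained by contracting $(u,v)$.

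For the forward direction, suppose $G$ has a Hamiltonian cycle $C$. Since $u$ has outdegree $1$, its only out-edge is $(u,v)$, so $C$ must leave $u$ via $(u,v)$. Hence $C$ enters $v$ via $(u,v)$ as well, and therefore uses none of the other edges $(u',v)$ that simplification removes (while simplification deletes no edges $(u,v')$ at all, since $(u,v)$ is the only out-edge of $u$). Contracting $(u,v)$ converts the segment $\cdots \to u \to v \to \cdots$ of $C$ into $\cdots \to w \to \cdots$, producing a Hamiltonian cycle in $G'$.

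For the backward direction, suppose $G'$ has a Hamiltonian cycle $C'$. It visits $w$ exactly once, entering via some edge $e_{\mathrm{in}}$ and leaving via some edge $e_{\mathrm{out}}$. By the definition of simplification, in-edges at $w$ correspond bijectively to the surviving in-edges at $u$ in $G$ (these are exactly the in-edges of $u$, since only out-edges of $u$ were touched by deletion), while out-edges at $w$ correspond to the surviving out-edges at $v$ (which include $e_{\text{out}}$'s preimage and generally all out-edges of $v$ other than any that pointed to $u$). Uncontracting replaces the visit to $w$ by the path that enters $u$ via (the preimage of) $e_{\mathrm{in}}$, crosses $(u,v)$, and leaves $v$ via (the preimage of) $e_{\mathrm{out}}$. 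The other vertices of $G'$ coincide with those of $G$ aside from $u$ and $v$, so each is still visited exactly once; the result is a Hamiltonian cycle in $G$.

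I do not expect any serious obstacle here: the only point that requires a bit of care is confirming the bijection between edges incident to $w$ in $G'$ and the surviving edges incident to $u$ or $v$ in $G$, which is immediate from the definition of simplification, together with the observation that a Hamiltonian cycle of $G$ is forced to use $(u,v)$ and therefore automatically avoids every deleted edge.
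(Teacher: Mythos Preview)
Your proof is correct and follows essentially the same approach as the paper's: both arguments hinge on the observation that the degree hypothesis forces every Hamiltonian cycle of $G$ to use the edge $(u,v)$, after which contracting and uncontracting $(u,v)$ gives the correspondence. The paper phrases this as a bijection between Hamiltonian cycles of $G$ using $(u,v)$ and Hamiltonian cycles of $G'$, while you unpack the two implications separately and check explicitly that no deleted edge is needed, but the underlying argument is the same.
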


\begin{proof}
Let $G'$ be the graph that results from simplifying $G$ over edge $(u, v)$ and let $w$ be the vertex in $G'$ that replaces $u$ and $v$. Any Hamiltonian cycle $x_1, x_2, \ldots, x_{n-2}, u, v$ in $G$ using edge $(u, v)$ corresponds with Hamiltonian cycle $x_1, x_2, \ldots, x_{n-2}, w$ in $G'$. And any Hamiltonian cycle $x_1, x_2, \ldots, x_{n-2}, w$ in $G'$ corresponds with Hamiltonian cycle $x_1, x_2, \ldots, x_{n-2}, u, v$ in $G$ using edge $(u, v)$. Thus there is a bijection between Hamiltonian cycles of $G'$ and Hamiltonian cycles of $G$ using edge $(u, v)$.

But if either $u$ has outdegree $1$ or $v$ has indegree $1$, then every Hamiltonian cycle in $G$ must use edge $(u, v)$, and so the Hamiltonian cycles of $G$ using edge $(u, v)$ are all the Hamiltonian cycles of $G$. Thus there is a bijection between Hamiltonian cycles of $G'$ and Hamiltonian cycles of $G$, and so the numbers of Hamiltonian cycles in $G$ and $G'$ are the same. As desired, $G'$ is Hamiltonian if and only if $G$ is.
\end{proof}

\begin{lemma}
\label{lemma:simplify_nonalternating}
If $(u, v)$ is an edge of planar non-alternating directed graph $G$, then simplifying $G$ over $(u, v)$ maintains the planar non-alternating property of $G$. 
\end{lemma}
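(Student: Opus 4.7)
The plan is to exhibit a planar non-alternating embedding of $G'$ by transferring a fixed planar non-alternating embedding of $G$ through the simplification. I will view the simplification in two phases: first delete the edges $(u',v)$ for $u' \ne u$ and the edges $(u,v')$ for $v' \ne v$, then contract the remaining edge $(u,v)$. Both phases preserve planarity for free, so the content of the lemma is that the non-alternating property survives at every vertex of the resulting embedding.

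For the deletion phase, the key observation is that deleting incident edges from a non-alternating vertex keeps it non-alternating: around any vertex $x$, the in-edges already form a single contiguous cyclic arc and the out-edges form another, and removing some of these edges can only shrink each arc (possibly to empty) without splitting it. Hence every vertex, including $u$ and $v$, remains non-alternating after phase one.

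For the contraction phase, I will invoke the standard rule for cyclic edge orders under edge contraction in a plane graph: the cyclic order at the merged vertex $w$ is obtained by splicing $u$'s cyclic order (with $(u,v)$ removed) with $v$'s cyclic order (with $(u,v)$ removed) at the position of the contracted edge. After phase one the only out-edge of $u$ is $(u,v)$ itself, so $u$'s cyclic order reads as $(u,v)$ followed by a single contiguous arc of $u$'s in-edges; symmetrically, $v$'s cyclic order reads as $(u,v)$ followed by a single contiguous arc of $v$'s out-edges. Splicing these at $(u,v)$ produces at $w$ a cyclic order consisting of $u$'s in-edges (still in-edges at $w$) immediately followed by $v$'s out-edges (still out-edges at $w$)---that is, one in-block followed by one out-block, so $w$ is non-alternating. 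All other vertices are unaffected by the contraction and so remain non-alternating by phase one. The main obstacle is justifying the splicing rule for cyclic orders under contraction in a plane embedding; once that is in hand, the rest is a direct bookkeeping check about which in- and out-blocks end up adjacent at $w$.
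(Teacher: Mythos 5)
Your proof is correct and takes essentially the same approach as the paper: both transfer a planar non-alternating embedding of $G$ to $G'$ and check non-alternation vertex-by-vertex, observing that unaffected vertices only lose edges (which cannot create alternation) and that the merged vertex $w$ gets its in-edges from $u$ and out-edges from $v$ in two contiguous blocks. Your two-phase decomposition (delete, then contract) and the explicit appeal to the splicing rule for rotation systems is a helpful elaboration of the paper's terser claim that the inherited edges ``can be separated into two contiguous sections,'' making precise why the unique remaining out-edge of $u$ and unique remaining in-edge of $v$ force the spliced cyclic order at $w$ to consist of exactly one in-block followed by one out-block.
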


\begin{proof}
Let $G'$ be the graph that results from simplifying $G$ over edge $(u, v)$. Starting with a planar non-alternating embedding of $G$, the corresponding planar embedding of $G'$ will also be non-alternating. We prove this below.

If $x$ is a vertex of $G$ that is not $u$ or $v$, then in the planar non-alternating embedding $x$ will have all the inward oriented edges in one contiguous section. The simplification of $G$ over $(u, v)$ will at most affect $x$ by removing some edges incident on $x$. In no case does this introduce alternation of inward and outward oriented sections to $x$. Thus $x$ is non-alternating in the planar embedding of $G'$.

If $x$ is the new vertex introduced due to the simplification of $G$ over $(u, v)$, then $x$ is non-alternating in the planar embedding of $G'$ because (1) the inward oriented edges are all inherited from $u$, (2) the outward oriented edges are all inherited from $v$, and (3) the edges inherited from the two vertices by $x$ can be separated into two contiguous sections.

As desired, this shows that $G'$ is planar non-alternating.
\end{proof}

We apply these lemmas to prove that the Planar Non-Alternating Indegree-$2$ Outdegree-$2$ Hamiltonicity Problem is NP-hard:

\begin{theorem}
The Planar Non-Alternating Indegree-$2$ Outdegree-$2$ Hamiltonicity Problem is NP-hard.
\end{theorem}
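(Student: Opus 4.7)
My plan is to reduce from the Planar Max-Degree-$3$ Hamiltonicity Problem, which is already noted to be NP-complete and remains so under the assumption that every vertex has both in-degree and out-degree at least $1$. Given such an input $G$ together with a fixed planar embedding, I will build a planar non-alternating directed graph $G^*$ in which every vertex has in-degree and out-degree exactly $2$, and show that $G^*$ is Hamiltonian if and only if $G$ is. First, I classify the vertices of $G$ by their $(d^-(v), d^+(v))$ pair, which must lie in $\{(1,1),(1,2),(2,1)\}$. Balancing total in-degree against total out-degree over all of $G$ forces the number of $(1,2)$-vertices to equal the number of $(2,1)$-vertices, so I pair them up (choosing a pairing consistent with the fixed planar embedding).

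Next, I substitute a small constant-size gadget at each vertex of $G$. Every gadget has internal vertices all of in- and out-degree exactly $2$, admits a non-alternating planar embedding compatible with the embedding of the original vertex, and is designed so that its Hamilton-consistent internal completions correspond one-to-one with the ways a Hamilton cycle of $G$ could pass through the original vertex. For a $(1,1)$ vertex the gadget is a small ``twin'' pattern whose two extra edges form a short internal cycle; for a paired $(1,2)$/$(2,1)$ vertex, the two gadgets are coupled via a short internal route that supplies the missing in-degree of the $(1,2)$-vertex using the missing out-degree of its $(2,1)$-partner. The overall planar embedding of $G^*$ is obtained by inserting each gadget inside a small disc around the position of its vertex in $G$ and routing the coupling paths through the faces of $G$.

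For correctness I would apply the simplification lemmas (Lemma~\ref{lemma:simplify_hamiltonicity} and Lemma~\ref{lemma:simplify_nonalternating}): each gadget is built to admit a sequence of simplifications collapsing it back to the single vertex of $G$ it represents, with each simplification performed over an edge whose tail has out-degree $1$ or whose head has in-degree $1$ in the partially collapsed graph at the time. Since every such step preserves both Hamiltonicity and the non-alternating planar property, simplifying all gadgets in $G^*$ produces $G$ with the same Hamiltonicity, giving the desired equivalence. The main obstacle I anticipate is the gadget design itself: simultaneously enforcing $(2,2)$ internal degrees, a non-alternating planar embedding, and a \emph{unique} Hamilton-consistent interior is a tight local constraint, and the paired $(1,2)$/$(2,1)$ case in particular requires choosing the pairing so that the coupling routes can be embedded without crossings in the fixed planar embedding of $G$ --- which might need either a clever choice of matching or an auxiliary planar augmentation of $G$ as a preprocessing step.
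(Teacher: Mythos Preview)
Your proposal goes in the opposite direction from the paper and, as written, does not work. The paper does not expand $G$ with gadgets; it \emph{shrinks} $G$ by repeatedly applying the simplification operation over any edge $(u,v)$ with $d^+(u)=1$ or $d^-(v)=1$ until none remain. Lemmas~\ref{lemma:simplify_hamiltonicity} and~\ref{lemma:simplify_nonalternating} then guarantee that Hamiltonicity and the planar non-alternating property are preserved throughout, and termination forces every surviving vertex to have in- and out-degree exactly~$2$. No gadgets, no pairing, no coupling routes are needed.

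Your plan has a concrete obstruction beyond the missing gadget details you already flag. You propose to certify $G^*\leftrightarrow G$ by a chain of simplifications, ``each simplification performed over an edge whose tail has out-degree $1$ or whose head has in-degree $1$ in the partially collapsed graph at the time.'' But in $G^*$ every vertex has in-degree $2$ and out-degree $2$ by construction, so \emph{no} edge of $G^*$ satisfies the hypothesis of Lemma~\ref{lemma:simplify_hamiltonicity}. The very first simplification step therefore cannot be justified by that lemma, and the inductive chain never starts. You would have to argue Hamiltonicity-equivalence for your gadgets directly, without the lemmas --- which brings back exactly the ``unique Hamilton-consistent interior'' design problem you identify as the main obstacle, together with the unresolved issue of embedding the $(1,2)$/$(2,1)$ coupling routes planarly. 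In short: the simplification lemmas are the right tool, but they should be applied to $G$ itself (contracting forced edges to raise degrees), not to a hypothetical blown-up $G^*$.
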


\begin{proof}
We prove this via the following reduction from the Planar Max-Degree-$3$ Hamiltonicity Problem. On input a planar graph $G$ with all in- and out-degrees $1$ or $2$, repeatedly identify edges $(u, v)$ such that either $u$ has outdegree $1$ or $v$ has indegree $1$ and simplify $G$ over $(u,v)$. Only stop once no such edges $(u, v)$ can be found, at which point output the resulting graph $G'$.

First note that this algorithm runs in polynomial time since (1) simplification is a polynomial-time operation and (2) the number of simplifications of $G$ is bounded above by the number of vertices in $G$ since each simplification decreases the number of vertices by $1$. 

Suppose the input instance $G$ is a ``no'' instance of the Planar Max-Degree-$3$ Hamiltonicity Problem. This means that $G$ is not Hamiltonian. By repeated application of Lemma~\ref{lemma:simplify_hamiltonicity}, $G'$ is Hamiltonian if and only if $G$ is Hamiltonian. Thus $G'$ is not Hamiltonian and so $G'$ is a ``no'' instance of the Planar Non-Alternating Indegree-$2$ Outdegree-$2$ Hamiltonicity Problem. 

On the other hand, suppose the input instance $G$ is a ``yes'' instance of the Planar Max-Degree-$3$ Hamiltonicity Problem. By repeated application of Lemma~\ref{lemma:simplify_hamiltonicity}, $G'$ is Hamiltonian if and only if $G$ is Hamiltonian, so $G'$ must have a Hamiltonian cycle. Below we show that all in- and out-degrees in $G'$ are $2$ and that $G'$ is a planar non-alternating graph. Together, this is enough to imply that $G'$ is a ``yes'' instance of the Planar Non-Alternating Indegree-$2$ Outdegree-$2$ Hamiltonicity Problem. 

Since $G'$ has a Hamiltonian cycle, no vertex in $G'$ can have in- or out-degree $0$. Furthermore, no vertex in $G'$ can have in- or out-degree $1$ because the reduction does not stop simplifying the graph until there are no in- or out-degree $1$ vertices left. Thus every in- or out-degree in $G'$ is at least $2$. When simplifying a graph over an edge, every in- or out-degree in the resulting graph is less than or equal to some in- or out-degree in the initial graph. By repeatedly applying this rule, we see that every in- and out-degree in $G'$ is at most the largest in- or out-degree in $G$. But as $G$ is a Planar Max-Degree-$3$ Hamiltonicity instance, the largest in- or out-degree in $G$ is at most $2$. Thus, we can conclude that every in- and out-degree in $G'$ must be exactly $2$.

By repeated application of Lemma~\ref{lemma:simplify_nonalternating}, we know that provided the original graph $G$ is a planar non-alternating graph, the final graph $G'$ will be as well. But if $G$ is a planar max-degree-$3$ graph, then every vertex in $G$ is non-alternating in any planar embedding (since alternating vertices always have total degree at least $4$). Thus, any planar embedding of $G$ is a planar non-alternating embedding. We can therefore conclude that both $G$ and $G'$ are planar non-alternating graphs.

As desired, $G$ is a ``yes'' instance of the Planar Max-Degree-$3$ Hamiltonicity Problem if and only if $G'$ is a ``yes'' instance of the Planar Non-Alternating Indegree-$2$ Outdegree-$2$ Hamiltonicity Problem. Together with the fact that the reduction runs in polynomial time, we have our desired result: the Planar Non-Alternating Indegree-$2$ Outdegree-$2$ Hamiltonicity Problem is NP-hard.
\end{proof}

\subsection{Reduction to Planar $(\{k\}, \{4\})$-TRVB for any $k \ge 4$}
\label{section:hardness_1_b}

Consider the following algorithm $R_k$:

\begin{definition}
For $k \ge 4$, algorithm $R_k$ takes as input a planar non-alternating graph $G$ whose vertex in- and out-degrees all equal $2$, and outputs an instance $M'$ of Planar $(\{k\}, \{4\})$-TRVB.

To begin, we construct a labeled undirected multigraph $M$ as follows;
refer to Figure~\ref{figure:reduction_planar_example}.

First we build all the vertices (and vertex labels) of $M$. For each vertex in $G$, we include an unbreakable vertex in $M$ and for each edge in $G$ we include a breakable vertex in $M$. If $v$ is a vertex or $e$ is an edge of $G$, we define $m(v)$ and $m(e)$ to be the corresponding vertices in $M$.

Next we add all the edges of $M$. Fix vertex $v$ in $G$. Let $(u_1, v)$ and $(u_2, v)$ be the edges into $v$ and let $(v, w_1)$ and $(v, w_2)$ be the edges out of $v$. Then add the following edges to $M$:
\begin{itemize}
\item Add an edge from $m(v)$ to each of $m((u_1, v))$, $m((u_2, v))$, $m((v, w_1))$, and $m((v, w_2))$.
\item Add an edge from $m((v, w_1))$ to $m((v, w_2))$.
\item Add $k-3$ edges from $m((u_1,v))$ to $m((u_2,v))$.
\end{itemize}

Finally, pick any specific vertex $\hat{v}$ in $G$; refer to Figure~\ref{figure:reduction_planar_example_M'}. Let $(u_1, \hat{v})$ and $(u_2, \hat{v})$ be the edges into $\hat{v}$ and let $(\hat{v}, w_1)$ and $(\hat{v}, w_2)$ be the edges out of $\hat{v}$. We modify $M$ by removing vertex $m(\hat{v})$ (and all incident edges), and adding the two edges $(m((u_1, \hat{v})), m((u_2, \hat{v})))$, and $(m((\hat{v}, w_1)), m((\hat{v},w_2)))$. Call the resulting multigraph $M'$ and return it as output of algorithm $R_k$.
\end{definition}

In order to analyze the behavior of algorithm $R_k$, it will be helpful to have the following definition:

\begin{definition}
We say that two edges in a planar non-alternating indegree-$2$ outdegree-$2$ graph are \emph{conflicting} if they start or end at the same vertex. A Hamiltonian cycle in such a graph must contain exactly one out of every pair of conflicting edges.
\end{definition}

\begin{lemma}
The output $M'$ of $R_k$ is a planar labeled multigraph whose vertices are all breakable with degree $k$ or unbreakable with degree $4$.
\end{lemma}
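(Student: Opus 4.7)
My plan is to decouple the claim into the degree computation and the planarity argument, computing degrees first in the intermediate multigraph $M$ and then handling the small change that produces $M'$.

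For degrees in $M$, I would argue as follows. Each unbreakable vertex $m(v)$ picks up edges only from the construction applied at $v$, namely one edge to $m(e)$ for each of the four edges $e$ of $G$ incident to $v$; so $\deg m(v) = 4$. For a breakable vertex $m(e)$ with $e = (v, w)$, edges come from the constructions at \emph{both} endpoints: at $v$ (where $e$ is an out-edge) we pick up one edge to $m(v)$ plus one edge to the other out-edge-vertex $m((v, w'))$, and at $w$ (where $e$ is an in-edge) we pick up one edge to $m(w)$ plus $k-3$ parallel edges to the other in-edge-vertex $m((u', w))$. The total is $1 + 1 + 1 + (k-3) = k$. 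Passing from $M$ to $M'$: deleting $m(\hat{v})$ and its four incident edges removes exactly one edge from each of its four neighbors, and the two added edges (which connect the in-pair and the out-pair among those four neighbors) restore one edge to each. So in $M'$ breakable vertices still have degree $k$ and unbreakable vertices still have degree $4$.

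For planarity, I would fix a planar non-alternating embedding of $G$ and lift it to an embedding of $M$. The non-alternating hypothesis is exactly what is needed: around every vertex $v$ of $G$, the cyclic order of the four incident edges puts the two in-edges consecutively and the two out-edges consecutively, so $v$ has two distinguished wedge-faces nearby, one bounded by the two in-edges and one by the two out-edges. I would place $m(v)$ at the position of $v$ and each $m(e)$ in the interior of the curve for edge $e$, and then route the edges between $m(v)$ and each incident $m(e)$ along the appropriate half of $e$; the single extra edge between the two out-edge-vertices at $v$ as a detour along the two out-edge curves and through the out-wedge near $v$; and the $k-3$ extra edges between the two in-edge-vertices at $v$ analogously through the in-wedge as parallel curves close to $v$. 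Because the in-wedge and the out-wedge at $v$ are disjoint faces and all extras at $v$ stay in a small neighborhood of the corresponding pair of edges, no crossings arise among the extras at $v$, between extras at different vertices, or between extras and the radial edges from $m(v)$ to $m(e)$. Passing to $M'$ only modifies things inside the disk around $\hat{v}$ and is trivially planar there, since the two added edges can be drawn in that now-vacated region.

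The part I would be most careful about is the routing in the planarity step: since $m(e)$ need not be placed near $v$, the extra edges at $v$ must run along the strips of the incident edges before turning through the wedge, and it is precisely the non-alternating property that guarantees each group of extras can be routed through a single wedge rather than having to weave between them. The ``labeled multigraph'' part of the conclusion is immediate from the construction in $R_k$ and needs no separate argument.
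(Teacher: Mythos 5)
Your proof is correct and takes essentially the same approach as the paper: compute the degrees of $m(v)$ and $m(e)$ directly from the construction, lift a non-alternating planar embedding of $G$ to an embedding of $M$ (the non-alternating hypothesis is exactly what makes the in-wedge and out-wedge available for routing the extra parallel edges without crossings), and then check that the surgery producing $M'$ preserves both degrees and planarity. The one minor divergence is your treatment of planarity of $M'$: you draw the two new edges through the region vacated by deleting $m(\hat v)$, which works, whereas the paper uses the slightly slicker observation that each new edge duplicates an edge already present in $M$ (there are already $k-3 \ge 1$ edges between $m((u_1,\hat v))$ and $m((u_2,\hat v))$ and one between $m((\hat v,w_1))$ and $m((\hat v,w_2))$), so $M'$ is obtained from a planar multigraph by deleting a vertex and raising two edge multiplicities, both of which trivially preserve planarity and avoid any further drawing argument.
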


\begin{proof}
Define all variables used in the description of $R_k$ as defined there. Because $G$ is planar non-alternating, we can immediately conclude that multigraph $M$ is planar as well (see Figure~\ref{figure:reduction_planar_example} for an example).

\begin{figure}[!htb]
    \centering
    \raisebox{-.5\height}{\includegraphics{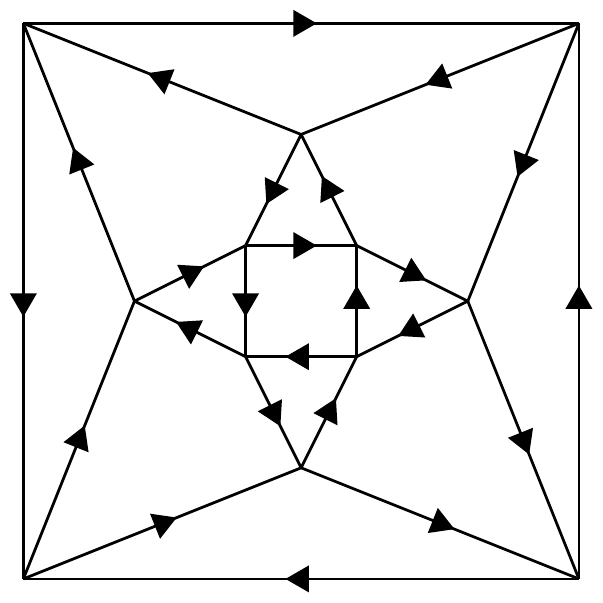}}~
    \raisebox{-.5\height}{\includegraphics{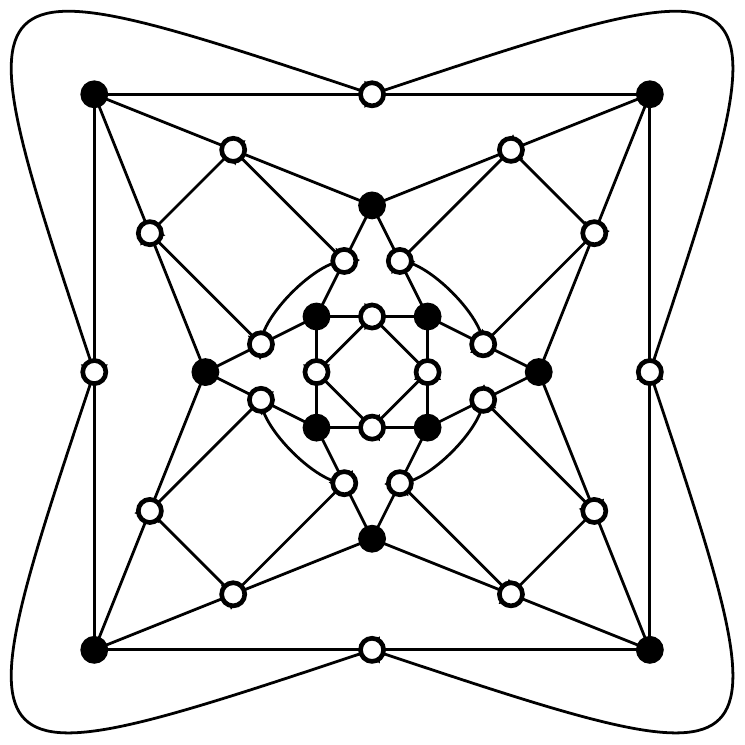}}
    \caption{If the planar non-alternating directed graph on the left is $G$, and if $k = 4$, then the produced $M$ is on the right. As you can see, planarity is maintained. If $k > 4$, then the output $M$ remains the same except some edges are duplicated; in that case too, $M$ is planar.}
    \label{figure:reduction_planar_example}
\end{figure}

Consider any vertex $m(v)$ in $M$ (where $v$ is a vertex of $G$). This vertex has exactly four neighbors: the vertices $m(e)$ for every edge $e$ in $G$ that is incident on $v$. Furthermore, this vertex is unbreakable.

Consider any vertex $m((u, v))$ in $M$ (where $(u, v)$ is an edge of $G$). This vertex has one edge to $m(u)$, one edge to $m(v)$, one edge to $m((u, v'))$, and $k-3$ edges to $m((u', v))$ (where $(u, v')$ and $(u', v)$ are the two edges in $G$ conflicting with $(u, v)$). Thus the degree of this vertex is $k$.

This shows that $M$ consists of only degree-$4$ unbreakable vertices and degree-$k$ breakable vertices. Thus, we have shown that $M$ has exactly those properties that we are trying to show for $M'$: $M$ is a planar labeled multigraph whose vertices are all breakable with degree $k$ or unbreakable with degree $4$. All that's left is to show that the operation converting $M$ to $M'$ leaves these properties unchanged.

To convert $M$ to $M'$, vertex $m(\hat{v})$ is removed, and two edges $(m((u_1, \hat{v})), m((u_2, \hat{v})))$, and $(m((\hat{v}, w_1)), m((\hat{v},w_2)))$ are added. 

Note first that the four endpoints of these two edges are exactly the four neighbors of $m(\hat{v})$ in $M$. Thus, each vertex in $M$ other than $m(\hat{v})$ has the same degree in $M'$: either the vertex was unaffected by the change from $M$ to $M'$ or a single edge was removed from the vertex and a single edge was added. Therefore the vertices of $M'$ are all breakable with degree $k$ or unbreakable with degree $4$. 

Next note that the two edges added to the multigraph are both already present. Increasing the multiplicity of an edge in a multigraph does not affect the planarity of the multigraph, and neither does removal of vertices and edges. Thus, the operation transforming $M$ into $M'$ maintains the planarity of the multigraph.

We can conclude that we have our desired properties: $M'$ is a planar labeled multigraph whose vertices are all breakable with degree $k$ or unbreakable with degree $4$. This can be seen for the Figure~\ref{figure:reduction_planar_example} example in Figure~\ref{figure:reduction_planar_example_M'}.

\begin{figure}[!htb]
    \centering
    \includegraphics{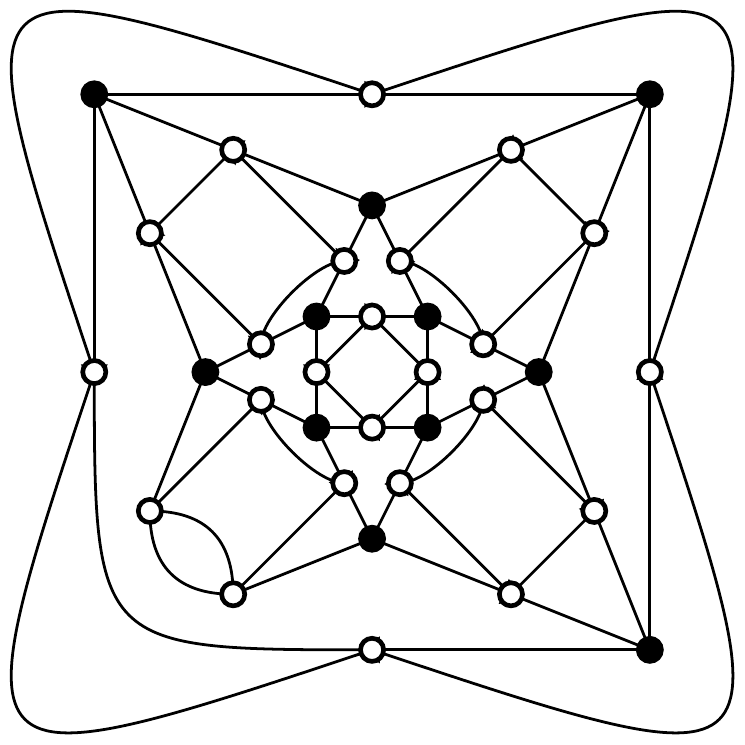}
    \caption{One possible $M'$ for the $M$ in Figure~\ref{figure:reduction_planar_example}, where $\hat{v}$ is chosen to be the bottom left vertex.  $M'$ is a planar labeled multigraph whose vertices are all breakable with degree $k$ or unbreakable with degree $4$.}
    \label{figure:reduction_planar_example_M'}
\end{figure}
\end{proof}

The following is an additional, trivial, property of $R_k$:

\begin{lemma}
$R_k$ runs in polynomial time.
\end{lemma}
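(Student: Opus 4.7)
The plan is to bound each phase of $R_k$ separately and observe that the total work is polynomial (in fact linear) in the size of $G$, treating $k$ as part of the input.

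First I would note the size of the output. Since $G$ has in-degree and out-degree exactly $2$ at every vertex, $|E(G)| = 2|V(G)|$, so the intermediate multigraph $M$ has $|V(G)| + |E(G)| = 3|V(G)|$ vertices, one unbreakable vertex per vertex of $G$ and one breakable vertex per edge of $G$. For each $v \in V(G)$ the construction adds $4 + 1 + (k-3) = k+2$ edges to $M$, so $|E(M)| = (k+2)|V(G)|$, which is polynomial in the input size.

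Next I would bound the time to build $M$. Identifying the two in-edges $(u_1,v), (u_2,v)$ and the two out-edges $(v, w_1), (v, w_2)$ at each $v$ takes constant time per incident edge using the adjacency lists of $G$, and then emitting the $k+2$ edges incident to $v$ takes $O(k)$ time. Summed over all $v \in V(G)$, this is $O(k \cdot |V(G)|)$. The final modification step, in which we pick an arbitrary $\hat{v}$, delete $m(\hat{v})$ together with its (at most $k+2$) incident edges, and add two new edges, clearly takes $O(k)$ time.

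The only conceptual step, which is not really an obstacle, is confirming that ``polynomial time'' means polynomial in the total input size (which includes the representation of $G$ and, implicitly, $k$); since $k$ appears only as a multiplicative factor $O(k)$ in the edge-count, the algorithm runs in time $O(k \cdot |V(G)|)$, which is linear in the size of the output and hence polynomial in the size of the input. I would therefore conclude with a one-sentence statement that $R_k$ runs in polynomial time, making this a direct verification rather than a substantive argument.
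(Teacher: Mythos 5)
Your accounting is correct: the paper itself offers no proof, labeling the lemma a ``trivial'' property of $R_k$, and your explicit bookkeeping ($3|V(G)|$ vertices, $(k+2)|V(G)|$ edges, $O(k)$ work per vertex plus $O(k)$ for the $\hat{v}$ modification) is exactly the routine verification being taken for granted. Your remark about $k$ is also consistent with the paper's framing, where $k$ is a fixed parameter of the algorithm family, so the bound is polynomial either way.
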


Consider the following lemma:

\begin{lemma}
Suppose $R_k$ outputs $M'$ on input $G$ and there exists a solution to the TRVB problem on $M'$. Then the set of edges $e$ in $G$ such that $m(e)$ is not broken is a disjoint cycle cover of $G$.
\end{lemma}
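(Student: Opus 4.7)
The plan is to combine a global Euler-characteristic count with a local connectivity constraint on broken vertices.

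First I would count. Since every vertex of $G$ has in- and out-degree $2$, we have $|E(G)| = 2|V(G)|$, so I can tally the edges contributed per vertex of $G$ in the construction of $M$ and then adjust for the two-edge swap that produces $M'$, obtaining closed forms for $|V(M')|$ and $|E(M')|$. If $s$ of the $2|V(G)|$ degree-$k$ breakable vertices get broken, the residual has $|V(M')| + (k-1)s$ vertices and the same number of edges as $M'$. Imposing the tree identity $|E| = |V| - 1$ and using $k \ge 4$ forces $s = |V(G)|$. In particular, exactly $|V(G)|$ of the breakable vertices $m(e)$ are left unbroken --- precisely the right count for an edge set meeting each vertex of $G$ once at the head and once at the tail.

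Next I would prove a local constraint: no edge of $M'$ between two breakable vertices may have both endpoints broken. Indeed if breakable vertices $a$ and $b$ are both broken and joined by $\mu \ge 1$ parallel edges, then breaking $a$ and $b$ replaces the two endpoints of each such edge by fresh degree-$1$ leaves whose sole incidence is that very edge, turning each of the $\mu$ parallel edges into an isolated $K_2$ component and disconnecting the residual. Now every breakable-to-breakable edge in $M'$ lies between $m(e_1)$ and $m(e_2)$ for a conflicting pair $e_1, e_2$ of edges of $G$, and the relevant multiplicities (the values $k-3$ and $1$, each bumped by $1$ at $\hat v$) are all positive since $k \ge 4$. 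Hence at every vertex $v$ of $G$, of the two in-edges at most one has $m(\cdot)$ broken, and similarly for the two out-edges.

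To conclude, I would double count. The local constraint says each $v \in V(G)$ has at least one non-broken in-edge and at least one non-broken out-edge. Summing the number of non-broken in-edges over all $v$ counts each non-broken $m(e)$ exactly once (at its head vertex), giving total $|V(G)|$ by the first step; with $|V(G)|$ summands each at least $1$, each equals exactly $1$. The analogous statement holds for out-edges. Therefore every vertex of $G$ has exactly one non-broken in-edge and exactly one non-broken out-edge, which is precisely the condition for the set of edges $e$ with $m(e)$ not broken to be a disjoint cycle cover of $G$. The main obstacle I expect is formalizing the connectivity lemma in the middle paragraph --- showing that parallel edges between two broken vertices each independently produce an isolated $K_2$ --- because it requires unpacking the vertex-breaking operation on a multigraph with some care; the rest reduces to routine algebra and double counting.
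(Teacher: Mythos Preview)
Your proposal is correct, and takes a genuinely different route from the paper's own argument. Both proofs share the same local ingredient: if $m(e_1)$ and $m(e_2)$ are adjacent breakable vertices in $M'$ (which happens precisely when $e_1, e_2$ are conflicting edges of $G$, and the relevant multiplicities are indeed positive since $k \ge 4$), then at most one of them can be broken, else an isolated $K_2$ appears. Where you diverge is in establishing the other inequality. The paper observes directly that for each conflicting pair $e_1, e_2$ there is a short cycle in $M'$ whose only breakable vertices are $m(e_1)$ and $m(e_2)$ --- the triangle $m(e_1), m(e_2), m(v)$ when $v \ne \hat v$, or a $2$-cycle when $v = \hat v$ --- so at least one must be broken to kill that cycle. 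This immediately gives ``exactly one broken per conflicting pair'' without any counting. You instead compute $|V(M')|$ and $|E(M')|$, impose the tree identity to force exactly $|V(G)|$ breakable vertices broken (hence exactly $|V(G)|$ unbroken), and then double-count heads and tails of unbroken edges against the local lower bound.

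Both are clean. The paper's approach is purely local and slightly shorter, since the cycle observation is immediate from the construction. Your approach has the mild advantage that the Euler count you perform here is exactly the count the paper carries out later in the converse direction (showing a Hamiltonian cycle of $G$ yields a TRVB solution), so you are effectively front-loading a computation that will be needed anyway. The ``obstacle'' you flag --- that breaking both endpoints of a multiedge yields isolated $K_2$'s --- is not a real difficulty: once one endpoint is broken, the edge in question now ends at a degree-$1$ vertex, and breaking the other endpoint leaves that edge with degree-$1$ vertices on both sides.
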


\begin{proof}
Consider any pair of conflicting edges $e_1$ and $e_2$ in $G$ that share endpoint $v$. There exists at least one edge in $M$ between $m(e_1)$ and $m(e_2)$, and this edge is still present in $M'$. Thus, in order to avoid disconnecting that edge from the rest of the graph, either $m(e_1)$ or $m(e_2)$ must not be broken. $M$ also contains a cycle on three vertices $m(e_1)$, $m(e_2)$, and $m(v)$. If $v = \hat{v}$, then the third vertex is missing in $M'$, but in that case there is instead a cycle in $M'$ with just $m(e_1)$ and $m(e_2)$. In any case, $M'$ contains at least one cycle whose only breakable vertices are $m(e_1)$ and $m(e_2)$. In order for the resulting graph to be acyclic, at least one of these two vertices must be broken. This shows that in any solution to the TRVB problem on $M'$, exactly one out of every pair of conflicting edges $(e_1, e_2)$ has $m(e_i)$ broken.

Consider the set $C$ of edges $e$ in $G$ such that $m(e)$ is not broken. For every vertex $v$ of $G$, the two edges out of $v$ conflict and the two edges into $v$ conflict. Since every pair of conflicting edges $(e_1, e_2)$ has exactly one $m(e_i)$ broken, we conclude that $C$ contains one edge that enters $v$ and one that exists it. Thus $C$ is a disjoint cycle cover of $G$, as desired.
\end{proof}

Based on this lemma, we can define the following correspondence:

\begin{definition}
For any solution of TRVB instance $M'$, define $C$ to be the disjoint cycle cover of $G$ consisting of edges $e$ such that $m(e)$ is not broken in the given solution of $M'$.
\end{definition}

As per this definition, we can derive a disjoint cycle cover of $G$ from any solution to TRVB instance $M'$. Similarly, for any disjoint cycle cover of $G$, we can derive a candidate solution (though not necessarily an actual solution) for $M'$: simply break every vertex $m(e)$ where $e$ is an edge of $G$ that is not in the given disjoint cycle cover. Then for some suitable definition of ``candidate solution,'' there is a bijection between disjoint cycle covers of $G$ and candidate solutions of TRVB instance $M'$. We will show below that in fact, a disjoint cycle cover of $G$ is actually a Hamiltonian cycle if and only if the corresponding candidate solution for $M'$ is actually a solution. For example, see Figure~\ref{figure:reduction_planar_example_solutions}.

\begin{figure}[!htb]
    \centering
    \raisebox{-.5\height}{\includegraphics{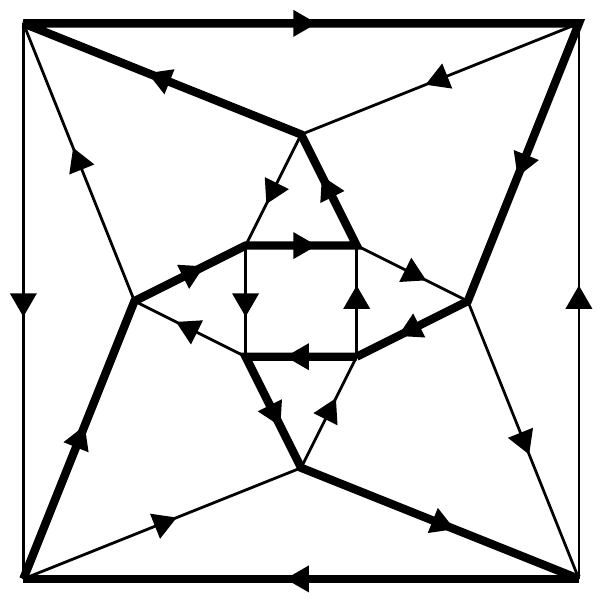}}~
    \raisebox{-.5\height}{\includegraphics{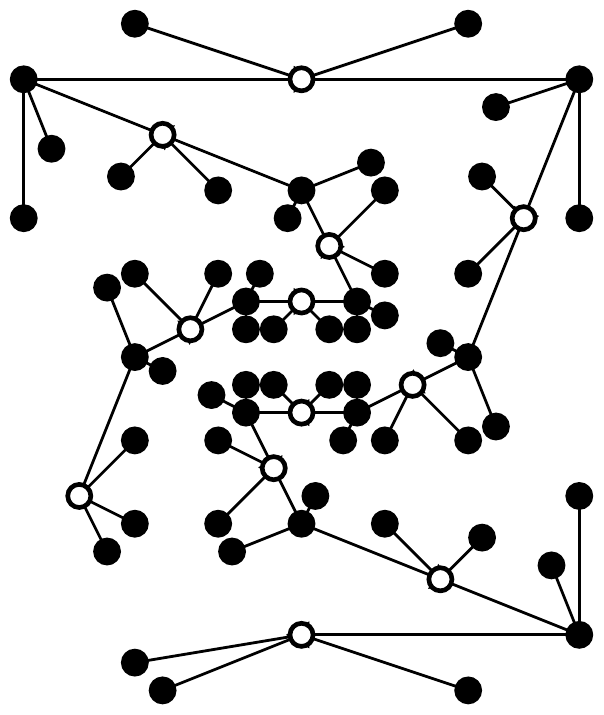}}
    \caption{This figure shows a Hamiltonian cycle in example graph $G$ from Figure~\ref{figure:reduction_planar_example} (left) and the corresponding solution of TRVB instance $M'$ shown in Figure~\ref{figure:reduction_planar_example_M'} (right).}
    \label{figure:reduction_planar_example_solutions}
\end{figure}

\begin{lemma}
Suppose $R_k$ outputs $M'$ on input $G$. If there exists a solution to the TRVB problem on $M'$, then the corresponding cycle cover of $G$ is actually a Hamiltonian cycle.
\end{lemma}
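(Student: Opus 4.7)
The plan is to argue by contradiction: assuming the TRVB solution exists, the resulting residual multigraph $T$ is a tree, and I will show that if $C$ contained more than one cycle, $T$ would contain a genuine cycle, contradicting tree-ness. Since the previous lemma already established that $C$ is a disjoint cycle cover of $G$, showing $C$ is a single cycle completes the proof (a single-cycle disjoint cycle cover of $G$ is exactly a Hamiltonian cycle).

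First I would determine the local structure of $T$. For each unbreakable $m(v)$ with $v \ne \hat{v}$, exactly two of its four $M'$-neighbors are unbroken -- namely the $m(e)$ for the unique in-edge and out-edge of $v$ chosen by $C$ -- while the other two are broken; so in $T$, the vertex $m(v)$ is attached to two ``cycle-continuation'' edge-vertices and two pendant stubs. Symmetrically, for each unbroken $m(e)$ with $e = (v,w)$, the two ``endpoint'' edges to $m(v)$ and $m(w)$ survive, while all the conflict edges from $m(e)$ to the other breakable neighbors become stubs, since by the cycle-cover property the conflicting edge-vertices are all broken. Hence, after stripping off pendant stubs, the core of $T$ is a disjoint union of paths and simple cycles in which $m(v)$-type and $m(e)$-type vertices alternate.

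Next I would trace the core along each cycle of $C$. A cycle $C_i = v_1 \to v_2 \to \cdots \to v_\ell \to v_1$ that avoids $\hat{v}$ produces a closed loop $m(v_1) - m(e_1) - m(v_2) - \cdots - m(v_\ell) - m(e_\ell) - m(v_1)$ of length $2\ell$ in $T$, which is a genuine cycle. The unique cycle of $C$ through $\hat{v}$, if any, instead produces a path in $T$ whose endpoints are the two edge-vertices for the in-edge and out-edge of $\hat{v}$ used by the cycle, because the would-be closing vertex $m(\hat{v})$ has been deleted in passing from $M$ to $M'$, and the two replacement edges added to $M'$ only connect the in-pair among itself and the out-pair among itself -- never across -- so they all become stubs (since each pair contains at most one unbroken vertex).

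Combining these observations, if $C$ contained any cycle not passing through $\hat{v}$, the core of $T$ would contain a genuine cycle, contradicting that $T$ is a tree. Since $C$ is a disjoint cycle cover, $\hat{v}$ lies on exactly one cycle of $C$, and the above argument forces this to be the only cycle, making $C$ a Hamiltonian cycle. The main subtlety I expect is the bookkeeping around $\hat{v}$: one must verify that removing $m(\hat{v})$ and adding the two replacement parallel edges truly opens the would-be $\hat{v}$-cycle into a path -- i.e., that no residual connection survives between the in-side and out-side of $\hat{v}$ -- and that the parallel conflict edges in $M'$ never introduce spurious cycles, which is essentially immediate because at most one vertex in any conflict pair is unbroken.
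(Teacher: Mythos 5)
Your proof is correct, but it takes a genuinely different route from the paper's. You argue via \emph{acyclicity} of the residual multigraph $T = M'_S$: any cycle of the cycle cover $C$ that avoids $\hat{v}$ lifts to a closed walk
$m(v_1)\text{--}m(e_1)\text{--}m(v_2)\text{--}\cdots\text{--}m(e_\ell)\text{--}m(v_1)$
of distinct unbroken vertices and surviving edges, i.e.\ a genuine cycle in $T$, contradicting tree-ness; since $\hat{v}$ can lie on only one cycle of a disjoint cycle cover, $C$ must be that single cycle and hence Hamiltonian. The paper instead argues via \emph{connectivity}: it reinstates $m(\hat v)$ to form $M_{\mathrm{solved}}$ (which inherits connectivity from the tree $M'_{\mathrm{solved}}$), and then proves by induction on path length that any path in $M_{\mathrm{solved}}$ starting at $m(v)$ and using only original vertices of $M$ stays among vertices of the form $m(x)$ with $x$ on the cycle $C_v$ through $v$; full connectivity then forces $C_v$ to contain every vertex. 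Both arguments are sound and roughly comparable in length. Your acyclicity argument has the advantage of avoiding the $M$ vs.\ $M'$ comparison and the induction, and of not needing any case analysis at $\hat v$ (the observations you make about the $\hat v$-cycle opening into a path are nice sanity checks but are not actually required for the contradiction). The paper's connectivity argument, on the other hand, localizes the reasoning to a single cycle $C_v$ and works entirely inside $M_{\mathrm{solved}}$, which some may find conceptually cleaner. One small stylistic note: your intermediate claim that ``after stripping off pendant stubs, the core of $T$ is a disjoint union of paths and simple cycles'' is more structure than you need; the direct observation that a $\hat v$-avoiding cycle of $C$ yields a cycle in $T$ suffices.
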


\begin{proof}
Let $C$ be the disjoint cycle cover of $G$ consisting of edges $e$ such that $m(e)$ is not broken in the given solution of $M'$. We know that $C$ is a union of disjoint cycles and we wish to show that there is exactly one cycle in $C$. Let $v$ be a vertex in $G$ and let $C_v$ be the cycle in $C$ containing $v$. We will prove that $C$ contains exactly one cycle by proving that $C_v$ contains every vertex of $G$.

Let $M'_{solved}$ be the solved version of $M'$ (after breaking vertices) and let $M_{solved}$ be a version of $M$ in which the same vertices are broken. Consider the difference between $M$ and $M'$ from a connectivity standpoint. In $M$, vertex $m(\hat{v})$ connects its four neighbors, while in $M'$, these neighbors are instead connected in pairs with two edges. Thus, $M$ is at least as connected as $M'$. This connectivity pattern carries through to the solved versions of these multigraphs: $M_{solved}$ is at least as connected as $M'_{solved}$. Since $M'_{solved}$ is a tree, it is fully connected, and so $M_{solved}$ is also fully connected. 

From this, we see that there exists a path in $M_{solved}$ from $m(v)$ to $m(v')$ for any vertex $v'$ in $G$. This path starts in $m(v)$, ends in $m(v')$, and passes through vertices that all have degree at least $2$. Therefore every vertex in this path is a vertex from the original multigraph $M$ that was not broken. We prove below that every path in $M_{solved}$ using only vertices originally in $M$ which starts in $m(v)$ must end at a vertex of the form $m(x)$ where $x$ is a vertex or edge in cycle $C_v$. Since there exists a path in $M_{solved}$ using only vertices originally in $M$ from $m(v)$ to $m(v')$, we conclude that $v'$ is a vertex in $C_v$. Applying this to every vertex in $G$, we see that $C_v$ is a cycle containing every vertex in $G$, and therefore $C = C_v$ is a Hamiltonian cycle.

Consider any path in $M_{solved}$ using only vertices originally in $M$ which starts in $m(v)$. We prove by induction on the path length that this path ends at a vertex of the form $m(x)$ where $x$ is a vertex or edge in cycle $C_v$.

If the path length is zero, then the end vertex is $m(v)$, which is certainly of the correct form.

Next, suppose that the statement holds for all paths of length $i-1$ or less. Then given a path of length $i$, we can apply the inductive hypothesis to this path without the last step. Thus we have that the pre-last node in the given path is of the form $m(x)$ where $x$ is a vertex or edge in cycle $C_v$. The final node in the path is a neighbor of $m(x)$ that is in $M$ and not a broken vertex. 

If $x$ is a vertex, then the only possible non-broken neighbors of $m(x)$ are the two nodes $m(e_1)$ and $m(e_2)$ where $e_1$ and $e_2$ are the two edges into and out of $x$ in $C_v$.

If $x$ is an edge, then the neighbors of $m(x)$ are nodes of the form $m(y)$ where $y$ is either a conflicting edge in $G$ or an endpoint of $x$. But since $m(x)$ is in $C_v$, it was not broken, which means that the vertices in $M$ corresponding to the conflicting edges were broken. Thus the only possible non-broken neighbors of $m(x)$ are the two nodes $m(e_1)$ and $m(e_2)$ where $e_1$ and $e_2$ are the two endpoints of $x$. Since $x$ is in $C_v$, so are its endpoints.

We conclude that in either case, the final node in the path is of the form $m(y)$ where $y$ is a vertex or edge in cycle $C_v$, proving the inductive step. By induction, any path in $M_{solved}$ using only vertices originally in $M$ which starts in $m(v)$ ends at a vertex of the form $m(x)$ where $x$ is a vertex or edge in cycle $C_v$.

As argued above, this implies that $C = C_v$ is a Hamiltonian cycle. Thus, we have shown that if the TRVB-problem $M'$ has a solution, then the corresponding cycle cover of $G$ is actually a Hamiltonian cycle.
\end{proof}

\begin{lemma}
Suppose $R_k$ outputs $M'$ on input $G$ and there exists a Hamiltonian cycle in $G$. Then the corresponding candidate solution of the TRVB instance $M'$ is a solution.
\end{lemma}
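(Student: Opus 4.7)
The plan is to show that after breaking exactly the vertices $m(e)$ with $e \notin H$ (where $H$ is the given Hamiltonian cycle in $G$), the residual multigraph is a tree. The key preliminary observation I will establish is that every edge of $M'$ has at least one unbroken endpoint. Edges of the form $\{m(v), m(e)\}$ trivially qualify, because $m(v)$ is always unbroken (we only break $m(e)$'s, and $m(\hat{v})$ has already been deleted from $M'$). Every other edge of $M'$ connects $m(e_1)$ to $m(e_2)$ for a pair of conflicting edges $e_1, e_2$ of $G$: this covers the single $m((v,w_1))$--$m((v,w_2))$ edge and the $k-3$ parallel $m((u_1,v))$--$m((u_2,v))$ edges coming from the original $M$-construction, as well as the two edges added during the $M\to M'$ transformation (which connect conflicting pairs at $\hat{v}$). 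Since any Hamiltonian cycle contains exactly one edge from each conflicting pair, exactly one endpoint of each such $m(e_1)$--$m(e_2)$ edge is unbroken.

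Next I will examine the ``backbone'' subgraph $B$ of $M'$ induced on the unbroken vertices $\{m(v) : v \in V(G)\setminus\{\hat{v}\}\} \cup \{m(e) : e \in H\}$. By the preceding paragraph, every conflicting-pair edge has a broken endpoint and so does not appear in $B$; the edges of $B$ are therefore exactly the edges $\{m(v), m(e)\}$ with $v \neq \hat{v}$ and $e \in H$ incident to $v$. Relabeling each $m(v)$ as $v$, this $B$ is precisely the subdivision of the cycle $H$ (each edge $e = (x,y) \in H$ expanded into a path $m(x)$--$m(e)$--$m(y)$) with the single vertex $m(\hat{v})$ removed. A subdivided cycle is a cycle, and deleting one vertex of a cycle yields a path, so $B$ is a path and hence a tree.

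Finally, I will account for the leaves created by breaking. Each broken $m(e)$ is replaced by $k$ fresh degree-$1$ vertices, one per edge originally incident to $m(e)$, each attached by a single edge to the corresponding $M'$-neighbor of $m(e)$. By the first-paragraph observation, every such neighbor is unbroken, i.e., a vertex of $B$. Therefore the residual multigraph is obtained from the tree $B$ by attaching a collection of new degree-$1$ vertices, each via a single edge, to existing vertices of $B$; this operation preserves the tree property, so the residual is a tree and the candidate solution is a genuine solution. The main place where care is required is the first paragraph: one must be sure to enumerate every edge type in $M'$, including the two replacement edges introduced in the $M \to M'$ step, in order to confirm that the ``conflicting pair'' classification really covers every non-$\{m(v), m(e)\}$ edge.
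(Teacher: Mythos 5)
Your proof is correct and takes a genuinely different route from the paper's. The paper proves $M'_S$ is a tree indirectly by establishing two facts: (1) connectedness, via a three-stage argument showing the degree-$1$ vertices attach to unbroken $m(e)$'s, the unbroken $m(e)$'s attach to unbroken $m(v)$'s, and the unbroken $m(v)$'s are pairwise connected by subpaths of the Hamiltonian cycle avoiding $\hat v$; and (2) a count showing the number of edges is one less than the number of vertices, using the degree-$k$ hypothesis and bookkeeping over $n$, $k$, and $|S|$. You instead exhibit the tree structure directly: you observe that every non-$\{m(v),m(e)\}$ edge of $M'$ joins a conflicting pair, so after breaking, the induced subgraph on unbroken vertices is exactly the subdivided Hamiltonian cycle with $m(\hat v)$ deleted, i.e.\ a path; the residual multigraph then adds only pendant leaves to this path, because your first observation guarantees no edge has both endpoints broken. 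This avoids both the piecewise connectivity argument and the edge/vertex count entirely, and makes the correspondence between the Hamiltonian cycle and the resulting tree explicit and visual rather than arithmetic. Your closing remark about checking all edge types (especially the two replacement edges introduced in the $M\to M'$ step) is well placed, since that enumeration is exactly what makes the ``every edge has an unbroken endpoint'' claim airtight. Both arguments silently assume $G$ has at least two vertices so that deleting $m(\hat v)$ leaves a nonempty path, but this is a triviality inherited from the reduction's source problem.
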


\begin{proof}

Suppose that the Hamiltonian cycle of $G$ is $C$. Then let $S$ be the set of vertices $m(e)$ such that $e$ is an edge of $G$ not in $C$. Let $M'_S$ be the result of breaking the vertices of $S$ in $M'$. Note that $M'_S$ is the candidate solution corresponding to cycle $C$. We will show below that $M'_S$ is a connected graph and that $M'_S$ has one fewer edges than it has vertices. In other words, $M'_S$ is a tree and so the candidate solution of the TRVB instance $M'$ corresponding to $C$ is an actual solution.

To begin, we show that $M'_S$ is connected. Let $X_0$ be the set of vertices in $M'_S$, let $X_1$ be the set of vertices in $M'_S$ that were in $M'$ before breaking the vertices of $S$, and let $X_2$ be the set of vertices in $M'_S$ of the form $m(v)$ for some vertex $v$ in $G$ with $v \ne \hat{v}$. We will show that (1) every vertex in $X_0 \setminus X_1$ is adjacent to a vertex in $X_1$ in graph $M'_S$, (2) every vertex in $X_1 \setminus X_2$ is adjacent to a vertex in $X_2$ in graph $M'_S$, and (3) there exists a path in $M'_S$ between any two vertices of $X_2$. Together, these three facts are sufficient to conclude that $M'_S$ is a connected graph.

The first fact we wish to show is that every vertex in $X_0 \setminus X_1$ is adjacent to a vertex in $X_1$ in graph $M'_S$. The vertices in $X_0 \setminus X_1$ are exactly the degree-$1$ vertices created when breaking the vertices of $S$ in $M'$. The vertices in $X_1$ are exactly the vertices in $M'_S$ that are originally in $M'$. Thus, we wish to show that every degree-$1$ vertex created by breaking vertices of $S$ in $M'$ ends up adjacent to a vertex originally in $M'$. This can fail to be the case only if two degree-$1$ vertices created by breaking vertices of $S$ in $M'$ end up adjacent to each other. This, in turn, is possible only if two vertices of $S$ are adjacent in $M'$. But if $m(e_1)$ and $m(e_2)$ are two vertices in $S$, then $e_1$ and $e_2$ cannot be conflicting edges (since out of every pair of conflicting edges exactly one is in $C$), and so there is no edge between $m(e_1)$ and $m(e_2)$. Thus $S$ cannot contain a pair of adjacent vertices, and so, as desired, every vertex in $X_0 \setminus X_1$ is adjacent to a vertex in $X_1$ in graph $M'_S$.

Next, we wish to show that every vertex in $X_1 \setminus X_2$ is adjacent to a vertex in $X_2$ in graph $M'_S$. The vertices in $X_1 \setminus X_2$ are exactly the vertices of the form $m(e)$ where $e$ is an edge in $C$. The vertices in $X_2$ are exactly the vertices of the form $m(v)$ where $v$ is a vertex of $G$ with $v \ne \hat{v}$. Then consider any vertex $m(e)$ in $X_1 \setminus X_2$ (where $e$ is an edge in $C$). Edge $e$ has two endpoints, so at least one of the two endpoints, call it $v$, is not equal to $\hat{v}$. Since $v$ is a vertex of $G$ with $v \ne \hat{v}$, we know that $m(v)$ is a vertex in $M'$ and furthermore, since $v$ is an endpoint of $e$, we know that $m(v)$ is adjacent to $m(e)$ in $M'$. Neither $m(v)$ nor $m(e)$ is in $S$, so the two remain adjacent in $M'_S$. Notice that $m(v)$ is a vertex in $X_2$, so as desired, $m(e)$, an arbitrary vertex in $X_1 \setminus X_2$, is adjacent to a vertex in $X_2$ in graph $M'_S$.

Finally, we wish to show that there exists a path in $M'_S$ between any two vertices of $X_2$. Vertices in $X_2$ are of the form $m(v)$ where $v \ne \hat{v}$ is a vertex in $G$. Thus, let $v_1$ and $v_2$ be two vertices in $G$ other than $\hat{v}$. We will demonstrate a path in $M'_S$ between $m(v_1)$ and $m(v_2)$.  Consider the path $v_1 = x_1, x_2, \ldots, x_l = v_2$ from $v_1$ to $v_2$ in $G$ which is part of Hamiltonian cycle $C$ but does not pass through vertex $\hat{v}$. Then consider the following list of vertices: 
$$m(x_1), m((x_1, x_2)), m(x_2), m((x_2, x_3)), m(x_3), \ldots, m((x_{l-1}, x_l)), m(x_l).$$
This list of vertices is a path in $M$, so since $m(\hat{v})$ is not in the list, this list is also a path in $M'$. Thus we have a path in $M'$ from $m(v_1)$ to $m(v_2)$. 

As described above, this allows us to conclude that $M'_S$ is connected. Next we will show that $M'_S$ has one fewer edge than it has vertices.

Suppose $G$ has $n$ vertices. Then the number of edges in $G$ is $2n$. The number of edges of $G$ not in $C$ is $n$, so $|S| = n$. Then the number of vertices in $M$ is $n + 2n = 3n$, the total number of vertices and edges in $G$. The number of edges in $M$ is $\frac{n \times 4 + 2n \times k}{2} = n(k+2)$. Transitioning from $M$ to $M'$ requires converting one vertex and four edges into zero vertices and two edges. Thus $M'$ has $3n-1$ vertices and $n(k+2) - 2$ edges. Each vertex in $S$ has degree $k$, so breaking the $n$ vertices of $S$ in $M'$ increases the number of vertices in the resulting multigraph by $n(k-1)$. Thus $M'_S$ has $3n-1 + n(k-1) = n(k+2)-1$ vertices and $n(k+2)-2$ edges. As desired, $M'_S$ has one fewer edge than it has vertices.

We showed above that $M'_S$ is connected and has one fewer edge than it has vertices so we can conclude that it is a tree. We have shown that if $G$ has a Hamiltonian cycle, breaking the vertices in $M'$ of set $S$ as defined above yields a tree. Thus we have that in the case that $G$ has a Hamiltonian cycle, the corresponding candidate solution of the TRVB instance $M'$ is a solution.
\end{proof}

\begin{theorem}
\label{theorem:planar_breakable_k_unbreakable_4}
Planar $(\{k\}, \{4\})$-TRVB is NP-hard for any $k \ge 4$.
\end{theorem}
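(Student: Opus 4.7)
The plan is to assemble the lemmas already established in Section~\ref{section:hardness_1_b} into a single polynomial-time many-one reduction from the Planar Non-Alternating Indegree-$2$ Outdegree-$2$ Hamiltonicity Problem to Planar $(\{k\}, \{4\})$-TRVB. Since the source problem has just been shown NP-hard, this suffices. Concretely, the reduction is simply the algorithm $R_k$: given an instance $G$ of the source problem, compute $M' = R_k(G)$ and return it.

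First I would verify that $R_k$ is a valid reduction syntactically. By the lemma asserting that $M'$ is a planar labeled multigraph whose vertices are all breakable of degree $k$ or unbreakable of degree $4$, the output $M'$ lies in the domain of Planar $(\{k\}, \{4\})$-TRVB, and by the subsequent lemma $R_k$ runs in polynomial time. So $R_k$ is a well-defined polynomial-time map to legal instances of the target problem.

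Next I would argue correctness by splitting into the two directions of the equivalence $G \text{ is Hamiltonian} \iff M' \text{ is a ``yes'' instance of TRVB}$. The forward direction is exactly the content of the last lemma: if $G$ has a Hamiltonian cycle $C$, then breaking the set $S = \{m(e) : e \notin C\}$ in $M'$ yields a tree, so $M'$ is a ``yes'' instance. The reverse direction uses the two preceding lemmas in sequence: any solution to TRVB on $M'$ induces, via the edges $e$ of $G$ with $m(e)$ unbroken, a disjoint cycle cover of $G$, and that cycle cover is in fact a single Hamiltonian cycle.

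Combining these observations, $R_k$ is a polynomial-time many-one reduction from an NP-hard problem to Planar $(\{k\}, \{4\})$-TRVB, yielding the theorem. There is no real obstacle left at this point; the only minor care is to note that every input to the source problem is in the domain on which $R_k$ is defined (a planar non-alternating directed graph with every in- and out-degree equal to $2$), which matches exactly, so no preprocessing step is needed.
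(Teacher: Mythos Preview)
Your proposal is correct and follows essentially the same approach as the paper: assemble $R_k$ together with the preceding lemmas (well-formed output, polynomial time, and the two directions of the Hamiltonicity $\Leftrightarrow$ TRVB equivalence) into a many-one reduction from the Planar Non-Alternating Indegree-$2$ Outdegree-$2$ Hamiltonicity Problem. The only cosmetic difference is that the paper inserts an explicit check that the input $G$ satisfies the promised structural constraints (outputting a fixed ``no'' instance otherwise), whereas you treat those constraints as part of the source problem's domain; both treatments are standard and equivalent.
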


\begin{proof}
Consider the following reduction from Planar Non-Alternating Indegree-$2$ Outdegree-$2$ Hamiltonicity Problem to Planar $(\{k\}, \{4\})$-TRVB. On input a graph $G$, we first check whether $G$ is a planar non-alternating graph all of whose in- and out-degrees are $2$. If yes, we run $R_k$ on input $G$ and output the result. Otherwise, we simply output any ``no'' instance of Planar $(\{k\}, \{4\})$-TRVB. 

Since $R_k$ runs in polynomial time, the above is clearly a polynomial-time reduction. Furthermore, $R_k$ always outputs a planar labeled multigraph whose vertices are all breakable with degree $k$ or unbreakable with degree $4$. As a result, in order to show that the above reduction is answer-preserving, it is sufficient to show that for all planar non-alternating graphs $G$ whose in- and out-degrees are $2$, $G$ has a Hamiltonian cycle if and only if the corresponding output $M'$ of $R_k$ on input $G$, when interpreted as a TRVB instance, has a solution. This is exactly what we showed in the previous two lemmas.

Since the Planar Non-Alternating Indegree-$2$ Outdegree-$2$ Hamiltonicity Problem is NP-hard, we conclude that for any $k \ge 4$, Planar $(\{k\}, \{4\})$-TRVB is NP-hard.
\end{proof}

\section{Planar TRVB and TRVB are NP-complete with high-degree breakable vertices}
\label{section:hardness_2}

The goal of this section is to show that Planar $(B, U)$-TRVB and $(B, U)$-TRVB are NP-complete if $B$ contains any $k \ge 4$. To do this, we will show that Planar $(\{k\}, \emptyset)$-TRVB is NP-hard for any $k \ge 4$.

\begin{lemma}
For any $k \ge 4$, there exists a reduction from either Planar $(\{k\}, \{3\})$-TRVB or Planar $(\{k\}, \{4\})$-TRVB to Planar $(\{k\}, \emptyset)$-TRVB.
\label{lemma:reduction_without_unbreakable_exists}
\end{lemma}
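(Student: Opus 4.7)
My plan is to construct, for each $k \ge 4$, a planar gadget $\Gamma$ built entirely from degree-$k$ breakable vertices that simulates an unbreakable vertex of some degree $d \in \{3, 4\}$. The reduction then replaces each unbreakable vertex in the input multigraph with a copy of $\Gamma$, yielding an equivalent instance of Planar $(\{k\}, \emptyset)$-TRVB.

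The choice of $d$ is dictated by a parity argument. A gadget with $n$ internal vertices (all of degree $k$), $m$ internal edges, and $d$ external edges must satisfy $nk = 2m + d$. For $k$ even this forces $d$ to be even, so I take $d = 4$ and reduce from Planar $(\{k\}, \{4\})$-TRVB (which is NP-hard by Theorem~\ref{theorem:planar_breakable_k_unbreakable_4}). For $k$ odd I instead take $d = 3$ and reduce from Planar $(\{k\}, \{3\})$-TRVB. This case split explains the ``either \ldots or \ldots'' phrasing of the lemma.

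The gadget $\Gamma$ must be designed so that in every TRVB solution of the output graph, its internal vertices are broken in a pattern whose residue is a tree connecting all $d$ external ports as leaves; equivalently, $\Gamma$ should act as a single unbreakable degree-$d$ vertex. For $k = 4, d = 4$, a concrete candidate uses four breakable vertices $u_1, u_2, u_3, u_4$ forming a 4-cycle with a double chord $u_1 u_3$, with the four external ports attached in pairs to $u_2$ and $u_4$. A short case analysis should show that the only break patterns inside this gadget that are consistent with the overall graph being a tree are ``break exactly one of $\{u_1, u_3\}$,'' and each of those leaves a tree spanning all four external ports. For general $k$, I expect to build analogous gadgets by adding extra parallel edges and small helper subgadgets of degree-$k$ breakable vertices to absorb the excess degree without introducing uncontrollable cycles.

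The reduction then embeds each gadget in a region respecting the cyclic order of the external neighbors of the replaced unbreakable vertex, which preserves planarity; each gadget has constant size, so the reduction is polynomial. Correctness reduces to exhibiting a bijection between TRVB solutions of the input and TRVB solutions of the output in which each gadget is broken in one of its canonical patterns. The main obstacle I anticipate is the gadget design itself: simultaneously enforcing that every internal vertex has degree exactly $k$, that the gadget is planar with the right cyclic placement of external ports, and that the only tree-producing break patterns are the ``unbreakable-simulating'' ones. For larger $k$ the internal structure must grow to absorb the extra degree, and for $d = 4$ I may need either a symmetric gadget or several variants to accommodate every possible cyclic order of the four external neighbors in the input embedding.
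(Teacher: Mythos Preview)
Your approach is essentially the paper's: replace each unbreakable vertex by a constant-size planar gadget made of degree-$k$ breakable vertices that simulates an unbreakable degree-$d$ vertex, with $d \in \{3,4\}$ determined by the parity of $k$. Your $k=4$ gadget is correct and close to the paper's; the key observation you should make explicit in the case analysis is that no two adjacent gadget vertices can both be broken (that would isolate the edge between them), which together with the surviving $u_1u_3$ double edge immediately rules out every break pattern except $\{u_1\}$ and $\{u_3\}$.

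The only real gap is the construction for $k \ge 5$, which you leave at ``I expect to build analogous gadgets.'' The paper's solution here is simpler than you seem to anticipate: take a single hub vertex $Q$ with $k-2a$ external edges and $2a$ further edges to vertices $P_0, \ldots, P_{2a-1}$, and place $k-1$ parallel edges between $P_{2i}$ and $P_{2i+1}$ for each $i$. Every vertex then has degree exactly $k$; the $(k-1)$-fold multi-edge forces exactly one of each pair $\{P_{2i}, P_{2i+1}\}$ to be broken, which in turn forbids breaking $Q$, and the residue is a tree joining the $k-2a$ external ports. Choosing $a>0$ so that $k-2a \in \{3,4\}$ (always possible for $k \ge 5$) gives the required simulator. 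No helper subgadgets are needed, and your worry about needing several cyclic-order variants is unfounded: an unbreakable vertex has no internal structure, so any planar embedding of the gadget with the $d$ ports on its outer face can be matched to the original vertex's neighbors in their given cyclic order.
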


\begin{proof}
Below we will show that if $k \ge 4$, it is possible to simulate either a degree-$4$ unbreakable vertex or a degree-$3$ unbreakable vertex with a small gadget consisting of degree-$k$ breakable vertices. As a result, for every $k \ge 4$, we can construct a reduction from either Planar $(\{k\}, \{4\})$-TRVB or Planar $(\{k\}, \{3\})$-TRVB to Planar $(\{k\}, \emptyset)$-TRVB. 

In particular, our reduction simply replaces every occurrence of an unbreakable degree-$3$ or degree-$4$ vertex with the corresponding gadget made of breakable degree-$k$ vertices. Provided we can design gadgets of constant size (with respect to the size of $G$, not with respect to $k$) whose behavior is the same as the vertex they are simulating, this reduction will be correct and will run in polynomial time.

To design the gadgets, we have two cases.

For $k = 4$, we use the gadget shown in Figure~\ref{figure:unbreakable_4_from_breakable_4}. Suppose that the gadget shown was included in a Tree-Residue Vertex-Breaking instance. In order to break the cycle between $P_0$ and $P_1$ without disconnecting the edges between them from the rest of the graph, exactly one of those two vertices $P_i$ must be broken. But then the neighbor $Q_i$ of $P_i$ cannot be broken without disconnecting the edge $(P_i, Q_i)$ from the rest of the graph. On the other hand, the node $Q_{1-i}$ cannot be broken either since breaking it would disconnect $P_{1-i}$ from the rest of the graph. Thus any valid solution of the Tree-Residue Vertex-Breaking instance must break neither $Q_i$ and exactly one $P_i$. The resulting graph connects the other four neighbors of the $Q_i$s without forming any cycles. In other words the behavior of this gadget in a graph is the same as the behavior of an unbreakable degree-$4$ vertex.

\begin{figure}[!htb]
    \centering
    \includegraphics{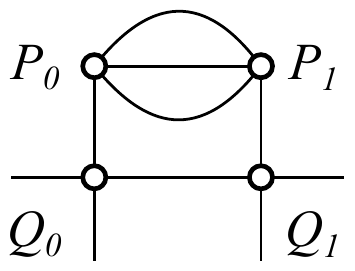}
    \caption{A gadget simulating an unbreakable degree-$4$ vertex using a planar arrangement of only breakable degree-$4$ vertices arranged.}
    \label{figure:unbreakable_4_from_breakable_4}
\end{figure}


For $k \ge 5$, we use the gadget shown in Figure~\ref{figure:unbreakable_?_from_breakable_k}. In this gadget, breakable vertex $Q$ has $2a$ edges to other vertices $P_0, P_1, \ldots, P_{2a-1}$ in the gadget and $k-2a$ edges out of the gadget. In addition, there are $k-1$ edges between $P_{2i}$ and $P_{2i+1}$ for every $i$ in $\{0, 1, \ldots, a-1\}$. Note that the degree of each vertex is $k$, as desired. When solving a graph containing this gadget, the cycle between $P_{2i}$ and $P_{2i+1}$ guarantees that exactly one of the two vertices must be broken. In order to not disconnect the other vertex from the rest of the graph, $Q$ cannot be broken in any valid solution. Thus, provided $a > 0$, every valid solution will break exactly one $P_{2i+j}$ with $j \in \{0,1\}$ for each $i$ and will not break $Q$. If this is done, the part of the resulting graph corresponding to this gadget will connect the $k-2a$ external neighbors of $Q$ to each other without forming any cycles. In other words the behavior of this gadget in a graph is the same as the behavior of an unbreakable degree-$(k-2a)$ vertex. 

\begin{figure}[!htb]
    \centering
    \includegraphics{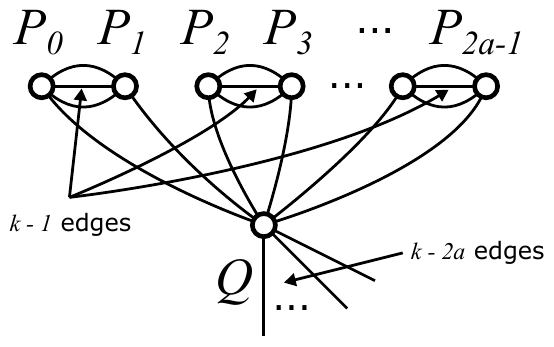}
    \caption{A gadget simulating an unbreakable degree-$(k-2a)$ vertex using only breakable degree-$k$ vertices arranged in a planar manner.}
    \label{figure:unbreakable_?_from_breakable_k}
\end{figure}


Since $k \ge 5$, it is possible to choose $a > 0$ such that $k-2a \in \{3, 4\}$. Then for every $k$, we are able to make a gadget to simulate either an unbreakable degree-$4$ vertex or an unbreakable degree-$3$ vertex. In all cases we can make the required gadgets, and so the reductions go through.
\end{proof}

We already know that Planar $(\{k\}, \{4\})$-TRVB is NP-hard from Section~\ref{section:hardness_1}, so to obtain NP-hardness from the previous lemma, all that is left is to show that Planar $(\{k\}, \{3\})$-TRVB is NP-hard.

\begin{lemma}
\label{lemma:planar_breakable_k_unbreakable_3}
Planar $(\{k\}, \{3\})$-TRVB is NP-hard for any $k \ge 4$.
\end{lemma}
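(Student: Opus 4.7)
My plan is to reduce from Planar $(\{k\}, \{4\})$-TRVB, which is NP-hard by Theorem~\ref{theorem:planar_breakable_k_unbreakable_4}, to Planar $(\{k\}, \{3\})$-TRVB by exhibiting a planar gadget built only from unbreakable degree-3 vertices that behaves exactly like a single unbreakable degree-4 vertex. The gadget consists of two unbreakable vertices $A$ and $B$ joined by a single edge; each of $A$ and $B$ has two remaining half-edges, giving four external attachment points in total. Given an input $G$ of Planar $(\{k\}, \{4\})$-TRVB, for every unbreakable degree-4 vertex $V$ of $G$ I replace $V$ by a fresh copy of the gadget, distributing $V$'s four incident edges over the four attachment points. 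The breakable degree-$k$ vertices of $G$ are left untouched, so the output graph has only breakable degree-$k$ vertices and unbreakable degree-$3$ vertices, and the transformation runs in polynomial time.

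To keep the output planar I use the rotation at $V$ in the given embedding: if the four edges at $V$ appear in cyclic order $e_1, e_2, e_3, e_4$, I attach the consecutive pair $\{e_1, e_2\}$ to $A$ and $\{e_3, e_4\}$ to $B$, drawing the $A$--$B$ edge through the region formerly occupied by $V$. This yields rotations $(e_1, e_2, e_{AB})$ at $A$ and $(e_3, e_4, e_{AB})$ at $B$ that are compatible with a planar embedding. Self-loops and multi-edges incident to $V$ in the input multigraph can be handled by insisting that the two endpoints of any single edge land on opposite sides of the gadget, so that no vertex of the gadget exceeds degree~$3$.

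Correctness follows from showing that the gadget and the single unbreakable degree-4 vertex it replaces have identical TRVB behavior. Because $A$, $B$, and the $A$--$B$ edge are all unbreakable, they appear in every residual multigraph, so the gadget always contributes a two-vertex, one-edge tree that fuses its four external attachment points into a single component without creating any internal cycle --- exactly the effect of an unbreakable degree-4 vertex. Quantitatively, substituting the gadget for $V$ adds one vertex and one edge, leaving $|E|-|V|$ unchanged, and leaves the connectivity among the four external neighbors identical. Hence, for any set $S$ of breakable vertices to break, the residual multigraph obtained from the original instance is a tree if and only if the residual multigraph obtained from the reduced instance is a tree.

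The main obstacle is the planarity bookkeeping: a ``crossing'' 2-2 pairing of the four half-edges around $V$ would destroy the embedding, so the rotation-respecting consecutive pairing described above is essential. Once this is in place, the reduction produces a valid Planar $(\{k\}, \{3\})$-TRVB instance with the same TRVB answer as the input Planar $(\{k\}, \{4\})$-TRVB instance, and the lemma follows.
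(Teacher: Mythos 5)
Your proposal is correct and takes essentially the same approach as the paper: reduce from Planar $(\{k\},\{4\})$-TRVB by replacing each unbreakable degree-$4$ vertex with two adjacent unbreakable degree-$3$ vertices, distributing the four incident edges two apiece while respecting the cyclic order to preserve planarity. The paper states this slightly more tersely (it simply notes the replacement can be done planarly and that the pair of vertices behaves like the original), but the gadget, the planarity bookkeeping, and the correctness argument are the same.
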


\begin{proof}
We reduce from Planar $(\{k\}, \{4\})$-TRVB to Planar $(\{k\}, \{3\})$-TRVB. 

Consider any unbreakable degree-$4$ vertex $v$. We can replace $v$ with two new degree-$3$ unbreakable vertices $u$ and $u'$ with edges between the two new vertices and the neighbors of $v$ and an edge between $u$ and $u'$. Note that if we allocate two neighbors of $v$ to each of $u$ and $u'$, we succeed in making $u$ and $u'$ have degree $3$. Also note that it is possible to do this while maintaining the planarity of a multigraph. See Figure~\ref{figure:unbreakable_4_from_unbreakable_3}.

\begin{figure}[!htb]
    \centering
    \includegraphics{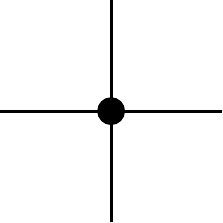}~
    \includegraphics{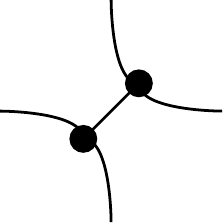}
    \caption{The degree-$4$ unbreakable vertex on the left can be simulated with two degree-$3$ unbreakable vertices as shown on the right while maintaining planarity.}
    \label{figure:unbreakable_4_from_unbreakable_3}
\end{figure}

Note that this pair of vertices ``behaves'' exactly the same as the original vertex did; in other words this change does not affect the answer to the Tree-Residue Vertex-Breaking question. As a result, applying this change to every unbreakable degree-$4$ vertex $v$ converts a Planar $(\{k\}, \{4\})$-TRVB instance into a Planar $(\{k\}, \{3\})$-TRVB instance.

By Theorem~\ref{theorem:planar_breakable_k_unbreakable_4}, Planar $(\{k\}, \{4\})$-TRVB is NP-hard for any $k \ge 4$, and so the existence of this reduction proves that Planar $(\{k\}, \{3\})$-TRVB is NP-hard for any $k \ge 4$.
\end{proof}

\begin{corollary}
Planar $(\{k\}, \emptyset)$-TRVB is NP-hard for any $k \ge 4$.
\label{corollary:planar_breakable_k}
\end{corollary}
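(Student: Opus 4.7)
The plan is to simply chain together the two results that immediately precede this corollary. By Lemma~\ref{lemma:reduction_without_unbreakable_exists}, for every $k \ge 4$ there is a polynomial-time reduction to Planar $(\{k\}, \emptyset)$-TRVB from at least one of Planar $(\{k\}, \{3\})$-TRVB or Planar $(\{k\}, \{4\})$-TRVB. So it suffices to know that both of these source problems are NP-hard, because then whichever source the reduction happens to use is NP-hard, and NP-hardness transfers along the reduction to the target problem.

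Fortunately, that is already in hand: Theorem~\ref{theorem:planar_breakable_k_unbreakable_4} gives NP-hardness of Planar $(\{k\}, \{4\})$-TRVB for every $k \ge 4$, and Lemma~\ref{lemma:planar_breakable_k_unbreakable_3} gives NP-hardness of Planar $(\{k\}, \{3\})$-TRVB for every $k \ge 4$. Thus I would write a one-sentence proof: fix $k \ge 4$; apply Lemma~\ref{lemma:reduction_without_unbreakable_exists} to obtain a polynomial-time reduction from whichever of Planar $(\{k\}, \{3\})$-TRVB or Planar $(\{k\}, \{4\})$-TRVB the lemma provides, and observe that the chosen problem is NP-hard by Theorem~\ref{theorem:planar_breakable_k_unbreakable_4} or Lemma~\ref{lemma:planar_breakable_k_unbreakable_3}.

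There is essentially no obstacle here, since all the real work has been done in the preceding two results. The only mild subtlety worth double-checking is that Lemma~\ref{lemma:reduction_without_unbreakable_exists} is stated as ``either/or'': for a given $k$ we may not have a free choice of source problem, but that does not matter because both candidate source problems are NP-hard, so the NP-hardness conclusion for Planar $(\{k\}, \emptyset)$-TRVB holds regardless of which case of the lemma applies.
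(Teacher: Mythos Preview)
Your proposal is correct and matches the paper's own proof essentially verbatim: the paper simply cites Lemma~\ref{lemma:reduction_without_unbreakable_exists}, Lemma~\ref{lemma:planar_breakable_k_unbreakable_3}, and Theorem~\ref{theorem:planar_breakable_k_unbreakable_4} together to conclude the result. Your observation about the ``either/or'' formulation of Lemma~\ref{lemma:reduction_without_unbreakable_exists} being harmless since both source problems are NP-hard is exactly the point.
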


\begin{proof}
Lemmas~\ref{lemma:reduction_without_unbreakable_exists} and~\ref{lemma:planar_breakable_k_unbreakable_3}, together with Theorem~\ref{theorem:planar_breakable_k_unbreakable_4}, allow us to conclude the desired result.
\end{proof}

\begin{theorem}
Planar $(B, U)$-TRVB is NP-complete if $B$ contains any $k \ge 4$. Also $(B, U)$-TRVB is NP-complete if $B$ contains any $k \ge 4$.
\end{theorem}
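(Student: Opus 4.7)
The plan is to combine Corollary~\ref{corollary:planar_breakable_k} with the trivial reductions of Lemma~\ref{lemma:trivial_reductions} and the NP membership result of Corollary~\ref{corollary:np_membership}. NP-completeness splits into two pieces: NP membership (already handled in full generality by Corollary~\ref{corollary:np_membership}) and NP-hardness, so the real content is establishing NP-hardness of both Planar $(B,U)$-TRVB and $(B,U)$-TRVB under the stated hypothesis on $B$.

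For the hardness of Planar $(B, U)$-TRVB, I would fix some $k \ge 4$ with $k \in B$, which exists by assumption. Since $\{k\} \subseteq B$ and $\emptyset \subseteq U$, the second bulleted clause of Lemma~\ref{lemma:trivial_reductions} supplies a polynomial-time reduction from Planar $(\{k\}, \emptyset)$-TRVB to Planar $(B, U)$-TRVB. Combined with the NP-hardness of Planar $(\{k\}, \emptyset)$-TRVB from Corollary~\ref{corollary:planar_breakable_k}, this gives NP-hardness of Planar $(B, U)$-TRVB.

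For the hardness of $(B, U)$-TRVB, I would chain one more trivial reduction. By the first bullet of Lemma~\ref{lemma:trivial_reductions}, Planar $(B, U)$-TRVB reduces to $(B, U)$-TRVB in polynomial time, so NP-hardness transfers. Finally, Corollary~\ref{corollary:np_membership} guarantees that both Planar $(B, U)$-TRVB and $(B, U)$-TRVB lie in NP regardless of $B$ and $U$, completing the NP-completeness conclusion.

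There is no real obstacle here: everything reduces to invoking prior results in the right order. The one thing to double-check is that the composition of reductions stays polynomial-time and that the hypotheses of Lemma~\ref{lemma:trivial_reductions} (containment of the allowed-degree sets) are correctly instantiated, but both are immediate from $\{k\} \subseteq B$ and $\emptyset \subseteq U$.
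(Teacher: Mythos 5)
Your proposal is correct and matches the paper's proof essentially verbatim: both invoke Corollary~\ref{corollary:planar_breakable_k} for hardness of Planar $(\{k\},\emptyset)$-TRVB, chain the trivial reductions of Lemma~\ref{lemma:trivial_reductions} (first the degree-set-containment bullet, then the planar-to-general bullet), and finish with NP membership from Corollary~\ref{corollary:np_membership}. No gaps.
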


\begin{proof}
By Lemma~\ref{lemma:trivial_reductions}, there is a reduction from Planar $(\{k\}, \emptyset)$-TRVB to Planar $(B, U)$-TRVB if $B$ contains $k$. Thus, if $k \ge 4$ and $B$ contains $k$, then Planar $(B, U)$-TRVB is NP-hard. Lemma~\ref{lemma:trivial_reductions} also gives a reduction from Planar $(B, U)$-TRVB to $(B, U)$-TRVB, so we see that if $k \ge 4$ and $B$ contains $k$, then $(B, U)$-TRVB is also NP-hard. 

Using Corollary~\ref{corollary:np_membership}, we see that as desired, if $k \ge 4$ and $B$ contains $k$, then Planar $(B, U)$-TRVB and $(B, U)$-TRVB are both NP-complete.
\end{proof}

\section{Graph TRVB is NP-complete with high-degree breakable vertices}
\label{section:hardness_3}

The goal of this section is to show that Graph $(B, U)$-TRVB is NP-complete if $B$ contains any $k \ge 4$. To do this, we will show that Graph $(\{k\}, \emptyset)$-TRVB is NP-hard for any $k \ge 4$.

\begin{lemma}
\label{lemma:graph_breakable_k_unbreakable_2}
Graph $(\{k\}, \{2\})$-TRVB is NP-hard if $k \ge 4$.
\end{lemma}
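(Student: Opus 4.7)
The plan is to reduce from the known NP-hard problem $(\{k\}, \emptyset)$-TRVB (which follows from Corollary~\ref{corollary:planar_breakable_k} together with Lemma~\ref{lemma:trivial_reductions}) to Graph $(\{k\}, \{2\})$-TRVB. The only obstacle to calling a $(\{k\}, \emptyset)$-TRVB instance directly a Graph instance is the possible presence of self-loops and multi-edges, so the entire reduction consists of a simple subdivision trick using degree-$2$ unbreakable vertices.

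First I would dispose of self-loops. If the input multigraph $G$ contains a self-loop at any breakable vertex $v$, then whether or not $v$ is broken the loop forces a ``no'' answer: if $v$ is not broken the loop is a cycle, while if $v$ is broken the two degree-$1$ stubs produced by the loop form a disconnected edge. So if any self-loop is present we output a fixed trivial ``no'' instance of Graph $(\{k\}, \{2\})$-TRVB; otherwise every edge of $G$ joins two distinct vertices.

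Next, for each edge $e = (u, v)$ of $G$, I would subdivide $e$ by inserting one new unbreakable degree-$2$ vertex $w_e$, replacing $e$ by the two edges $(u, w_e)$ and $(w_e, v)$. Call the resulting multigraph $G'$. Every original vertex keeps its degree (and breakability) and every new vertex $w_e$ is unbreakable of degree $2$, so $G'$ satisfies the $(\{k\},\{2\})$ degree constraint. Moreover $G'$ is a simple graph: distinct parallel copies of $(u, v)$ in $G$ go through distinct subdivision vertices, so they become distinct length-$2$ paths with no repeated edge, and no two subdivision vertices can share both endpoints since each $w_e$ is adjacent only to the two endpoints of its own edge $e$.

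The main thing to verify is that $G$ and $G'$ have the same TRVB answer. The key observation is that for any set $S$ of breakable vertices of $G$ (equivalently of $G'$, since the two share the same breakable vertex set), breaking $S$ in $G'$ and then contracting every remaining unbreakable degree-$2$ subdivision vertex $w_e$ yields exactly the graph obtained by breaking $S$ in $G$. Contraction of a degree-$2$ vertex preserves both connectedness and the ``$|E|=|V|-1$'' tree count, so the residual multigraph is a tree in $G$ iff it is a tree in $G'$. Hence $G$ has a TRVB solution iff $G'$ does, which combined with the observations above and the polynomial running time of the construction establishes that Graph $(\{k\}, \{2\})$-TRVB is NP-hard for every $k \ge 4$.
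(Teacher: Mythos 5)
Your proof takes essentially the same approach as the paper: reduce from $(\{k\},\emptyset)$-TRVB (the paper uses the planar variant, equivalent via Lemma~\ref{lemma:trivial_reductions}) by subdividing edges with unbreakable degree-$2$ vertices, which preserves the TRVB answer; the only difference is that the paper inserts two subdivision vertices per edge so that self-loops are handled automatically, whereas you insert one and dispose of self-loops with a separate (correct) trivial check. One small slip worth fixing: the clause ``no two subdivision vertices can share both endpoints'' is false when $G$ has parallel edges between $u$ and $v$ (both $w_{e_1}$ and $w_{e_2}$ are then adjacent to exactly $\{u,v\}$), but this is harmless since no edge of $G'$ is repeated, so the clause can simply be deleted.
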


\begin{proof}
In Corollary~\ref{corollary:planar_breakable_k} we saw that Planar $(\{k\}, \emptyset)$-TRVB is NP-hard if $k \ge 4$. We will reduce from this problem to Graph $(\{k\}, \{2\})$-TRVB.

In order to do so, we must convert a given multigraph into a graph. One way to do this is to insert two degree-$2$ unbreakable vertices into every edge. After doing this, there will no longer be any duplicated edges or self loops, and so the result will be graph. Furthermore, adding an unbreakable degree-$2$ vertex into the middle of an edge does not influence the answer to a Tree-Residue Vertex-Breaking question. Thus applying this transformation is a valid reduction.

We conclude that as desired, Graph $(\{k\}, \{2\})$-TRVB is NP-hard if $k \ge 4$.
\end{proof}

\begin{theorem}
\label{theorem:graph_breakable_k}
Graph $(\{k\}, \emptyset)$-TRVB is NP-hard if $k \ge 4$.
\end{theorem}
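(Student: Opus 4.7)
The plan is to reduce from Graph $(\{k\},\{2\})$-TRVB, which Lemma~\ref{lemma:graph_breakable_k_unbreakable_2} shows is NP-hard, to Graph $(\{k\},\emptyset)$-TRVB. The reduction eliminates each unbreakable degree-$2$ vertex by replacing it with a constant-size gadget consisting entirely of breakable degree-$k$ vertices, arranged as a simple graph.

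The main step is to design, for every $k \ge 4$, a gadget $H_k$ with two designated terminals $t_1, t_2$ such that (i) every internal vertex of $H_k$ is breakable and has degree exactly $k$, (ii) $H_k$ is a simple graph, (iii) $t_1$ and $t_2$ each have one edge leaving the gadget and $k-1$ edges inside it, and (iv) the TRVB-behavior of $H_k$ on these two external edges is identical to that of a single unbreakable degree-$2$ vertex connecting them. Property (iv) amounts to showing that (a) some set $S$ of internal breakable vertices, when broken, turns $H_k$ (together with the new degree-$1$ vertices created by breaking) into a single tree containing $t_1$ and $t_2$ with every new degree-$1$ vertex appearing as a leaf, and (b) no other breaking of internal vertices yields a connected acyclic residue whose connectivity behavior between $t_1$ and $t_2$ differs from that of a single edge. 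For $k=4$, one can realize $H_k$ as a $6$-vertex graph on $t_1, t_2, u_1, u_2, u_3, u_4$ with edges $t_1 u_1, t_1 u_2, t_1 u_3, t_2 u_2, t_2 u_3, t_2 u_4, u_1 u_2, u_1 u_3, u_1 u_4, u_2 u_4, u_3 u_4$, for which the unique valid breaking is $S=\{u_2,u_3\}$ (any other set of broken vertices either introduces disconnected pendant-pair components from a broken edge or leaves an internal cycle). For $k \ge 5$, an analogous gadget can be constructed by adding further breakable degree-$k$ side vertices to absorb the extra incident edges.

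Given $H_k$, the reduction is direct: on input a Graph $(\{k\},\{2\})$-TRVB instance $G$, produce $G'$ by replacing each unbreakable degree-$2$ vertex $v$ with a fresh copy of $H_k$, connecting the two former neighbors of $v$ to the two terminals of the new copy. Since the terminals are fresh vertices, no multi-edges arise; the resulting $G'$ is a Graph $(\{k\},\emptyset)$-TRVB instance, and, by property (iv), a yes-instance if and only if $G$ is. Together with the reduction direction of Lemma~\ref{lemma:trivial_reductions} from Graph $(\{k\},\emptyset)$-TRVB to $(\{k\},\emptyset)$-TRVB, this completes the argument.

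The main obstacle is the explicit construction and verification of $H_k$ for all $k \ge 4$. The gadgets in Section~\ref{section:hardness_2} rely critically on length-$2$ cycles made of multi-edges to force exactly one of $\{P_{2i},P_{2i+1}\}$ to be broken; reproducing this forcing in a simple graph requires longer cycles together with careful degree-filling, particularly when $k$ is odd and the natural even-$k$ construction does not apply symmetrically.
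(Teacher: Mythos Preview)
Your approach is exactly the paper's: reduce from Graph $(\{k\},\{2\})$-TRVB by replacing every unbreakable degree-$2$ vertex with a simple-graph gadget built entirely from breakable degree-$k$ vertices. Your $k=4$ gadget is in fact isomorphic to the paper's specialisation to $k=4$ (under $t_1\mapsto Q_1$, $u_1\mapsto Q_2$, $u_4\mapsto Q_3$, $t_2\mapsto Q_4$, $\{u_2,u_3\}\mapsto\{P_1,P_2\}$), and your claimed unique breaking set $\{u_2,u_3\}$ matches the paper's conclusion that exactly the $P_i$ are broken.

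The gap you yourself flag for $k\ge 5$ is real, and ``adding further breakable degree-$k$ side vertices to absorb the extra incident edges'' is not a construction. The paper closes this with one uniform gadget working for every $k\ge 4$: take vertices $P_1,\dots,P_{k-2}$ and $Q_1,\dots,Q_k$; put an edge between every pair $(P_i,Q_j)$; put an edge between $Q_i$ and $Q_{i+1}$ for $1\le i<k$; attach the two external edges at $Q_1$ and $Q_k$. Then every $P_i$ has degree $k$ (it sees all $Q_j$), every interior $Q_j$ has degree $(k-2)+2=k$, and $Q_1,Q_k$ have degree $(k-2)+1+1=k$; the graph is simple. The forcing argument is short: if some $P_1$ is broken, then no $Q_j$ can be broken (else the edge $P_1Q_j$ is isolated), and then the triangle $Q_1Q_2P_j$ forces every $P_j$ to be broken; if no $P_i$ is broken, the $4$-cycles $Q_aP_1Q_bP_2$ force all but at most one $Q_j$ to be broken, which isolates a $Q_iQ_{i+1}$ edge. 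Hence the unique local solution breaks all $P_i$ and no $Q_j$, leaving the path $Q_1,\dots,Q_k$ joining the two external edges. This is precisely the missing piece; with it your argument is complete and coincides with the paper's. (The appeal to Lemma~\ref{lemma:trivial_reductions} at the end is unnecessary: the theorem is already about Graph $(\{k\},\emptyset)$-TRVB.)
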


\begin{proof}
In Lemma~\ref{lemma:graph_breakable_k_unbreakable_2} we saw that Graph $(\{k\}, \{2\})$-TRVB is NP-hard if $k \ge 4$. We wish to reduce from that problem to Graph $(\{k\}, \emptyset)$-TRVB.

In order to do this, we construct a constant sized (in the size of $G$) gadget using degree-$k$ breakable vertices that behaves the same as a degree-$2$ unbreakable vertex. Simply replacing every degree-$2$ unbreakable vertex with a copy of this gadget is a valid reduction, allowing us to conclude that Graph $(\{k\}, \emptyset)$-TRVB is NP-hard if $k \ge 4$.

The gadget is shown in Figure~\ref{figure:unbreakable_2_from_breakable_k}. The gadget consists of $2k-2$ breakable vertices, each of degree $k$. Call them $P_1, P_2, \ldots, P_{k-2}$ and $Q_1, Q_2, \ldots, Q_k$. The gadget contains an edge between each pair $(P_i, Q_j)$ and an edge between each pair $(Q_i, Q_{i+1})$. Finally, both $Q_1$ and $Q_k$ will have one edge leaving the gadget. 

\begin{figure}[!htb]
    \centering
    \includegraphics{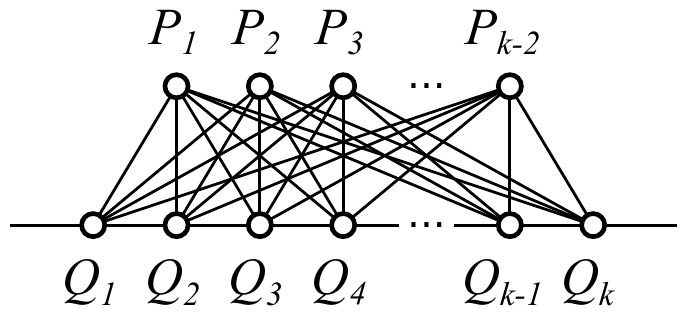}
    \caption{A gadget simulating an unbreakable degree-$2$ vertex using only breakable degree-$k$ vertices arranged without self loops or duplicated edges.}
    \label{figure:unbreakable_2_from_breakable_k}
\end{figure}

In a solution to this gadget, either $P_1$ is broken or not. If $P_1$ is broken, then to avoid disconnecting the edge $(Q_i, P_1)$ from the rest of the graph, $Q_i$ must not be broken. But $(Q_1, Q_2, P_i)$ is a cycle for every $i$, so in order to avoid having that cycle in the final graph, $P_i$ must also be broken.

If $P_1$ is not broken, then either $Q_1$ or $Q_2$ must be broken (due to cycle $(Q_1, Q_2, P_i)$). Then if $Q_i$ is broken, $P_j$ must not be broken in order to avoid disconnecting edge $(Q_i, P_j)$ from the rest of the graph. This means that every $P_j$ will not be broken. In that case, however, the existence of cycle $(Q_{i_1}, P_1, Q_{i_2}, P_2)$ guarantees that either $Q_{i_1}$ or $Q_{i_2}$ will be broken for every pair $i_1, i_2$. In other words, at most one $Q_i$ can be unbroken. This means, however, that either both $Q_1$ and $Q_2$ or both $Q_3$ and $Q_4$ will be broken, in either case isolating an edge from the rest of the graph. Thus we see that this case is not possible.

We can conclude that the only solution to this gadget is to break all of the $P_i$s but none of the $Q_i$s, thereby connecting the external neighbors of $Q_1$ and $Q_k$ (through the path of $Q_i$s) without leaving any cycles or disconnecting the graph. In other words, this gadget behaves like an unbreakable degree-$2$ vertex, as desired.

Thus we see that the reduction goes through and Graph $(\{k\}, \emptyset)$-TRVB is NP-hard if $k \ge 4$.
\end{proof}

\begin{corollary}
Graph $(B, U)$-TRVB is NP-complete if $B$ contains any $k \ge 4$. 
\end{corollary}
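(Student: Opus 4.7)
The plan is to mirror the argument used at the end of Section~\ref{section:hardness_2} for the Planar and general TRVB cases, now specialized to the simple-graph setting. The overall strategy is to combine the hardness result already established for the most restrictive relevant variant, namely Graph $(\{k\}, \emptyset)$-TRVB, with the trivial reductions that enlarge the allowed degree sets and then with NP membership.

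First I would fix $k \ge 4$ such that $k \in B$. Applying Lemma~\ref{lemma:trivial_reductions} with $(B', U') = (B, U)$ and $(B, U)$ there instantiated as $(\{k\}, \emptyset)$ (noting $\{k\} \subseteq B$ and $\emptyset \subseteq U$), I obtain a polynomial-time reduction from Graph $(\{k\}, \emptyset)$-TRVB to Graph $(B, U)$-TRVB. By Theorem~\ref{theorem:graph_breakable_k}, the source problem is NP-hard, so Graph $(B, U)$-TRVB inherits NP-hardness.

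For the upper bound, I would invoke Corollary~\ref{corollary:np_membership}, which asserts that Graph $(B, U)$-TRVB lies in NP for every choice of $(B, U)$. Combining NP-hardness with membership in NP yields NP-completeness, completing the proof.

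There is essentially no main obstacle here: every ingredient (the hardness of Graph $(\{k\}, \emptyset)$-TRVB, the subset-monotone reductions of Lemma~\ref{lemma:trivial_reductions}, and NP membership) has already been established in the paper, and the corollary is a direct assembly of these pieces. The only thing to be careful about is citing the correct instantiation of Lemma~\ref{lemma:trivial_reductions} (the Graph variant with $B \subseteq B'$ and $U \subseteq U'$) rather than one of the other bullets.
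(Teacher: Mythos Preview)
Your proposal is correct and matches the paper's own proof essentially line for line: fix $k \ge 4$ in $B$, invoke Lemma~\ref{lemma:trivial_reductions} to reduce Graph $(\{k\}, \emptyset)$-TRVB to Graph $(B, U)$-TRVB, apply Theorem~\ref{theorem:graph_breakable_k} for NP-hardness, and finish with Corollary~\ref{corollary:np_membership} for NP membership.
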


\begin{proof}
We saw in Theorem~\ref{theorem:graph_breakable_k} that Graph $(\{k\}, \emptyset)$-TRVB is NP-hard if $k \ge 4$, and we saw in Lemma~\ref{lemma:trivial_reductions}, there is a reduction from Graph $(\{k\}, \emptyset)$-TRVB to Graph $(B, U)$-TRVB if $B$ contains $k$. Thus, if $k \ge 4$ and $B$ contains $k$, then Graph $(B, U)$-TRVB is NP-hard. 
Using Corollary~\ref{corollary:np_membership}, we see that as desired, if $k \ge 4$ and $B$ contains $k$, then Graph $(B, U)$-TRVB is NP-complete.
\end{proof}

\section{Planar Graph TRVB is NP-hard with both low-degree vertices and high-degree breakable vertices}
\label{section:hardness_4}

The goal of this section is to show that Planar Graph $(B, U)$-TRVB is NP-complete if (1) either $B \cap \{1,2,3,4,5\} \ne \emptyset$ or $U \cap \{1,2,3,4\} \ne \emptyset$ and (2) there exists a $k \ge 4$ with $k \in B$. To do this, we will demonstrate that these conditions are sufficient to guarantee that it is possible to build small planar gadgets which behave like unbreakable degree-$2$ vertices. Inserting two copies of such a gadget into every edge converts a planar multigraph into a planar graph while keeping the answer to the TRVB question the same. This is a reduction from Graph $(B, U)$-TRVB to Planar Graph $(B, U)$-TRVB (provided both conditions (1) and (2) above hold).

Below, we prove the existence of the desired gadgets.

\begin{lemma}
There exists a planar gadget that simulates an unbreakable vertex of degree-$2$ built out of breakable degree-$k$ vertices (for any $k \ge 4$) and unbreakable degree-$4$ vertices such that the number of nodes is constant with respect to the size of a given multigraph $G$.
\end{lemma}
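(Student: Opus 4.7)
Plan. I would build the gadget around a single unbreakable degree-$4$ vertex $v$: two of $v$'s edges become the gadget's external ports, and the other two feed into a planar simple ``cap'' subgadget. Since $v$ is unbreakable, every valid residue contains $v$ with all four of its edges, so if the cap is designed so that in every valid breaking it resolves to a pair of disjoint subtrees rooted at the two cap-side edges of $v$, then $v$ glues the two external subtrees together just as an unbreakable degree-$2$ vertex would, with the cap merely hanging off $v$ as additional tree foliage. This ``hub plus cap'' design neatly side-steps the fact that no unbreakable degree-$2$ vertex is directly available, since the required input/output behavior only depends on how the two external ports are joined.

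Building the cap. I would realize the cap as two independent copies of a ``one-port termination,'' each attached to $v$ through one of the two cap-side edges (this keeps the gadget simple near $v$, since no two edges of $v$ end up going to the same vertex). A one-port termination is a planar simple subgraph with a single port edge such that in every valid breaking its residue together with the port edge forms a subtree. The engine of a termination is a pair of breakable degree-$k$ vertices $P_0, P_1$ for which exactly one must be broken, implemented by the same ``forcing cycle'' principle as in Lemma~\ref{lemma:reduction_without_unbreakable_exists}. Since parallel edges are not allowed in a simple graph, I would replace the parallel edges used there by the planar $4$-cycle $P_0 \!-\! w_1 \!-\! P_1 \!-\! w_2 \!-\! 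P_0$ with unbreakable degree-$4$ vertices $w_1, w_2$, and absorb the remaining degree of $w_1, w_2, P_0, P_1$ into further small planar simple substructures built from the allowed vertex types. The total vertex count of the gadget depends only on $k$, not on the size of the input multigraph $G$.

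Main obstacle. The routine parts of the argument are planarity (obtained by drawing the two terminations side by side below $v$) together with direct checks of simplicity and of the required degree conditions on every breakable and unbreakable vertex. The core difficulty lies in proving that the simple planar forcing cycle really enforces the same ``at least one of $P_0, P_1$ is broken'' constraint that parallel edges did in Lemma~\ref{lemma:reduction_without_unbreakable_exists}, and that when combined with the degree-padding structures inside the termination it admits exactly one valid breaking pattern, with the port edge left attached to a subtree. Verifying this reduces to a finite case analysis on which subsets of breakable vertices inside one termination can be broken without creating a cycle or disconnection in its residue. Once this forcing lemma is established, connecting two terminations to $v$ immediately yields the claimed simulation of an unbreakable degree-$2$ vertex.
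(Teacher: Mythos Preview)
Your plan is plausible but takes a substantially heavier route than the paper's. The paper's gadget rests on a forcing idea you did not consider: place each breakable degree-$k$ vertex on a cycle made entirely of unbreakable degree-$4$ vertices. Then every breakable vertex \emph{must} be broken (otherwise that cycle survives), so the local solution is completely determined with no case analysis at all, and the residue is immediately seen to connect the two external edges as a tree. The whole gadget has $k+2$ vertices and the correctness argument is two sentences.

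By contrast, you are porting the ``exactly one of $P_0,P_1$ is broken'' mechanism from Lemma~\ref{lemma:reduction_without_unbreakable_exists}, whose multigraph version was driven by parallel edges. Replacing those by a simple $4$-cycle through $w_1,w_2$ is a reasonable move, but it leaves a lot of residual degree to absorb at $P_0,P_1,w_1,w_2$ (the leftover degree sum is odd, so additional vertices are unavoidable), and you must then verify that whatever padding you attach does not create new breaking patterns or unintended cycles. You correctly flag this as the main obstacle, and it is indeed finite and checkable, but it is precisely the work the paper's design sidesteps by forcing \emph{every} breakable vertex to break rather than \emph{exactly one of a pair}. Your hub-plus-two-terminations decomposition is sound and would eventually go through; it is just an over-engineered engine for a situation where a direct ``unbreakable cycle forces the lone breakable vertex on it'' argument suffices.
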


\begin{proof}
The gadget for this theorem is shown in Figure~\ref{figure:unbreakable_2_from_breakable_k_unbreakable_4}. For each breakable vertex in this figure, there exists a cycle in the gadget containing the vertex and no other breakable vertex. To avoid leaving this cycle in the final graph, the two breakable vertices in the gadget must both be broken in a valid solution. This fully determines the solution of the gadget. 

\begin{figure}[!htb]
    \centering
    \includegraphics{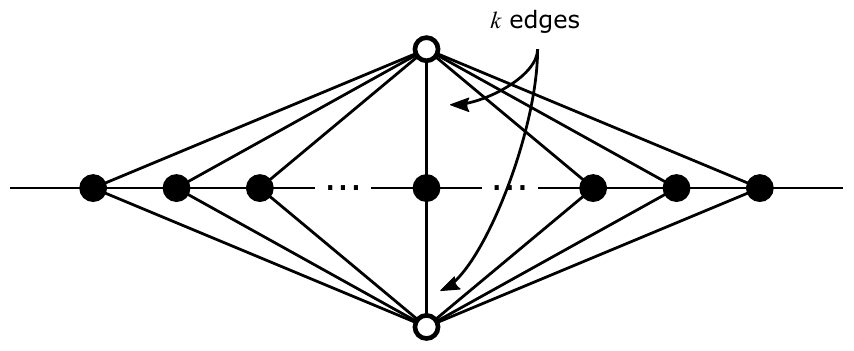}
    \caption{A gadget simulating an unbreakable degree-$2$ vertex using only breakable degree-$k$ and unbreakable degree-$4$ vertices arranged in a planar manner without self loops or duplicate edges.}
    \label{figure:unbreakable_2_from_breakable_k_unbreakable_4}
\end{figure}

Thus, if this gadget is included in a graph, the two breakable vertices must be broken, resulting in the gadget connecting the two edges that extend out to the rest of the graph. In other words, the gadget behaves like a degree-$2$ unbreakable vertex.

Note also that this gadget uses $k + 2$ nodes, which is constant with respect to the size of a given multigraph $G$.
\end{proof}

\begin{lemma}
There exists a planar gadget that simulates an unbreakable degree-$2$ vertex built out of breakable degree-$k$ vertices (for any $k \ge 4$) and unbreakable degree-$3$ vertices such that the number of nodes is constant with respect to the size of a given multigraph $G$.
\end{lemma}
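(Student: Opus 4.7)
The plan is to build on the gadget from the previous lemma and apply the degree-$4$-to-degree-$3$ substitution from Lemma~\ref{lemma:planar_breakable_k_unbreakable_3}. That previous gadget already simulates an unbreakable degree-$2$ vertex using breakable degree-$k$ vertices and unbreakable degree-$4$ vertices, as a planar simple graph of constant size $k+2$. Since Lemma~\ref{lemma:planar_breakable_k_unbreakable_3} shows how to replace any unbreakable degree-$4$ vertex by two unbreakable degree-$3$ vertices (joined by a new edge, with the four neighbors split two-and-two) while preserving both planarity and TRVB behaviour, it suffices to apply that substitution to each of the (constantly many) unbreakable degree-$4$ vertices in the previous gadget.

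After performing the substitution at every unbreakable degree-$4$ vertex, the node count grows from $k+2$ to at most $2+2k$, which is still constant in the size of the input multigraph $G$. Planarity and the overall ``unbreakable degree-$2$ vertex'' behaviour then follow from Lemma~\ref{lemma:planar_breakable_k_unbreakable_3}, applied once per degree-$4$ vertex: each cycle of the previous gadget that forced a particular breakable vertex to be broken is still present, and still plays the same role, so the unique valid solution of the new gadget continues to connect its two external edges without creating cycles or disconnections.

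The main obstacle is preserving the simple graph property. The newly added edge $(u,u')$ between the two replacement vertices cannot coincide with any existing edge, because $u$ and $u'$ are freshly introduced. The real concern is that two replacement vertices coming from two different unbreakable degree-$4$ vertices could end up sharing two common neighbors, which would create parallel edges. I would control this by partitioning the four neighbors of each degree-$4$ vertex according to their cyclic order in the planar embedding, so that each new degree-$3$ vertex inherits two cyclically consecutive neighbors; a parallel-edge conflict then corresponds to a very specific local pattern that can be ruled out by a finite case analysis on the constant-size gadget from the previous lemma, or, if it does occur in a particular drawing, eliminated by choosing a different two/two split at one of the two offending degree-$4$ vertices. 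Since this verification is entirely local and finite, it does not pose a fundamental difficulty but must be carried out carefully to conclude that the final gadget is indeed a planar simple graph built only from breakable degree-$k$ and unbreakable degree-$3$ vertices, simulating an unbreakable degree-$2$ vertex, with constantly many nodes.
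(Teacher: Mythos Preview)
Your approach is correct but differs from the paper's. The paper does not compose the previous gadget with the degree-$4$-to-degree-$3$ substitution; instead it exhibits a direct gadget (Figure~\ref{figure:unbreakable_2_from_breakable_k_unbreakable_3}) consisting of a single breakable degree-$k$ vertex together with $k$ unbreakable degree-$3$ vertices, arranged so that the breakable vertex lies on a cycle containing no other breakable vertex and hence must be broken. This gives a planar simple gadget of size $k+1$, slightly smaller than your $2k+2$. Your modular route---reuse the degree-$4$ gadget and then invoke Lemma~\ref{lemma:planar_breakable_k_unbreakable_3} vertex by vertex---is a perfectly legitimate alternative, and arguably cleaner in that it avoids introducing a new picture.

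One comment on your discussion of simplicity: the concern you raise is not actually an obstacle, and you should simply dispatch it rather than defer to a ``finite case analysis.'' When a degree-$4$ vertex $v$ with distinct neighbours $a,b,c,d$ (distinct because the previous gadget is already a simple graph) is replaced by new vertices $u,u'$ with $u$ adjacent to $a,b,u'$ and $u'$ adjacent to $c,d,u$, no multi-edge can appear: the only new edge is $(u,u')$ between fresh vertices, and every other edge at $u$ or $u'$ goes to a distinct pre-existing vertex. Replacements at different degree-$4$ vertices cannot interfere either, since each substitution only relabels one endpoint of each incident edge to a fresh vertex. So the substitution preserves simplicity automatically; the scenario you describe (two replacement vertices ``sharing two common neighbours'') does not produce parallel edges in any case. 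With that observation your argument is complete.
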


\begin{proof}
The gadget for this theorem is shown in Figure~\ref{figure:unbreakable_2_from_breakable_k_unbreakable_3}. The one breakable vertex in this figure is in a cycle in the gadget (with no other breakable vertex). To avoid leaving this cycle in the final graph, the breakable vertex must be broken in a valid solution. This fully determines the solution of the gadget. 

\begin{figure}[!htb]
    \centering
    \includegraphics{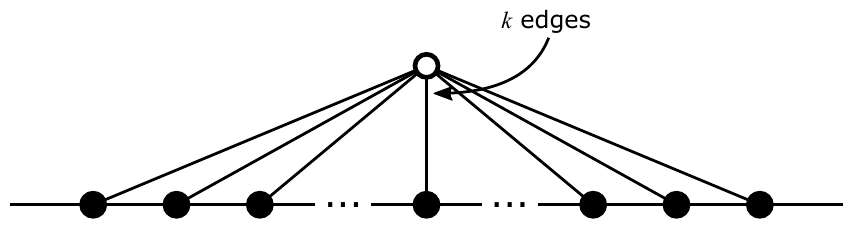}
    \caption{A gadget simulating an unbreakable degree-$2$ vertex using only breakable degree-$k$ and unbreakable degree-$3$ vertices arranged in a planar manner without self loops or duplicate edges.}
    \label{figure:unbreakable_2_from_breakable_k_unbreakable_3}
\end{figure}

Thus, if this gadget is included in a graph, the breakable vertex must be broken, resulting in the gadget connecting the two edges that extend out to the rest of the graph. In other words, the gadget behaves like a degree-$2$ unbreakable vertex.

Note also that this gadget uses $k + 1$ nodes, which is constant with respect to the size of a given multigraph $G$.
\end{proof}

\begin{lemma}
There exists a planar gadget that simulates an unbreakable degree-$2$ vertex built out of breakable degree-$k$ vertices (for any $k \ge 4$) and unbreakable degree-$1$ vertices such that the number of nodes is constant with respect to the size of a given multigraph $G$.
\end{lemma}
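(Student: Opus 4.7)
My plan is to mirror the structure of the two preceding lemmas but to invert the mechanism: rather than using a cycle to force a breakable vertex to be broken, I will use pendant leaves to forbid a breakable vertex from being broken. The gadget itself consists of a single breakable degree-$k$ vertex $v$, two of whose incident edges are designated as the external connections of the gadget, and whose remaining $k-2$ incident edges each terminate at a fresh unbreakable degree-$1$ leaf. This uses $k-1$ vertices in total, which is constant with respect to $|G|$; the drawing is simply a star centered at $v$, so the gadget is planar and contains no self-loops or duplicate edges.

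The key step is to argue that $v$ cannot be broken in any valid TRVB solution. If $v$ were broken, the vertex-breaking operation would replace $v$ with $k$ degree-$1$ vertices, so each of the $k-2$ edges from $v$ to a leaf would become an isolated two-vertex component consisting of the original leaf together with a newly created degree-$1$ vertex. Since $k \ge 4$ produces at least two such isolated components, the residual multigraph would be disconnected and hence could not be a tree. Therefore $v$ is forced to remain unbroken in every valid solution. With $v$ unbroken, the two external edges pass through $v$ exactly as they would through an unbreakable degree-$2$ vertex, while the $k-2$ pendant leaves attach to the rest of the multigraph only via $v$ and therefore contribute leaves of the residual graph without introducing cycles or extra components. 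So from the perspective of the ambient TRVB instance, the gadget behaves identically to an unbreakable degree-$2$ vertex.

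The main thing I expect to verify carefully is the disconnection argument after breaking $v$: specifically, that the isolated leaf-components produced by breaking really have no other connection to the rest of the graph. This is immediate because each such leaf was constructed to be a fresh degree-$1$ vertex whose only incident edge went to $v$. Everything else — the vertex count, planarity, absence of self-loops and duplicate edges, and the correctness of the simulation once $v$ is known to be unbroken — follows directly from the construction, so no essential difficulty remains beyond this bookkeeping step.
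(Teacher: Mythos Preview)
Your proposal is correct and essentially identical to the paper's proof: the paper also uses a single breakable degree-$k$ vertex with $k-2$ pendant unbreakable degree-$1$ leaves and two external edges, argues that breaking the central vertex would disconnect the leaves from the rest of the graph, and notes the gadget has $k-1$ nodes. The only difference is that you spell out the disconnection argument in slightly more detail (counting the isolated two-vertex components explicitly), whereas the paper simply asserts that breaking would ``separate the unbreakable vertices from the rest of the graph.''
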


\begin{proof}
The gadget for this theorem is shown in Figure~\ref{figure:unbreakable_2_from_breakable_k_unbreakable_1}. The one breakable vertex in this figure cannot be broken (as that would separate the unbreakable vertices from the rest of the graph).

\begin{figure}[!htb]
    \centering
    \includegraphics{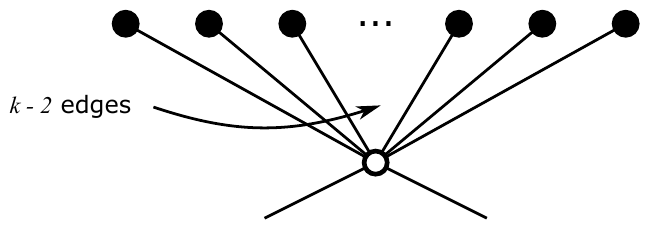}
    \caption{A gadget simulating an unbreakable degree-$2$ vertex using only breakable degree-$k$ and unbreakable degree-$1$ vertices arranged in a planar manner without self loops or duplicate edges.}
    \label{figure:unbreakable_2_from_breakable_k_unbreakable_1}
\end{figure}

Thus, if this gadget is included in a graph, the breakable vertex must not be broken, resulting in the gadget connecting the two edges that extend out to the rest of the graph. In other words, the gadget behaves like a degree-$2$ unbreakable vertex.

Note also that this gadget uses $k -1$ nodes, which is constant with respect to the size of a given multigraph $G$.
\end{proof}

\begin{lemma}
There exists a planar gadget that simulates an unbreakable degree-$2$ vertex built out of breakable degree-$k$ vertices (for any $k \ge 4$) and breakable degree-$1$ vertices such that the number of nodes is constant with respect to the size of a given multigraph $G$.
\end{lemma}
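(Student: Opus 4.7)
The plan is to reuse exactly the gadget from the preceding lemma (the one using unbreakable degree-$1$ leaves), with the leaves simply relabeled as breakable. That is, I would take a single breakable degree-$k$ vertex $v$, attach $k-2$ breakable degree-$1$ leaves to it, and leave the remaining two edges of $v$ as the gadget's external connections. This is planar, uses $k-1$ vertices, and contains no self-loop or duplicate edge.

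The crucial observation is that breaking a breakable degree-$1$ vertex is a structural no-op. Indeed, if $\ell$ is a breakable degree-$1$ vertex with unique neighbor $u$, the breaking operation deletes $\ell$ and introduces one fresh degree-$1$ vertex $\ell'$ joined to $u$ by a single edge; the resulting multigraph is isomorphic to the original. Consequently, whether or not a particular leaf is placed in the broken set has no effect on whether the residue is a tree, so a breakable degree-$1$ vertex is interchangeable with an unbreakable degree-$1$ vertex in any TRVB instance.

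With this observation, the analysis of the gadget reduces to the argument already used in the previous lemma. If the central vertex $v$ were broken, each of the $k-2$ edges from $v$ to a leaf would become an isolated two-vertex component (the leaf together with a freshly created degree-$1$ vertex), and the external endpoints would be disconnected from the rest of the graph through the gadget; hence the residue could not be a tree. Therefore $v$ must remain unbroken, at which point the gadget connects its two external edges through $v$ while contributing $k-2$ leaves to the overall residue. This is exactly the behavior of an unbreakable degree-$2$ vertex (since pendant leaves never obstruct the tree property). Since the gadget has $k-1$ vertices, its size is constant in $|G|$.

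The only subtlety is the no-op observation above; once that is noted, there is no genuine obstacle, and the proof mirrors the degree-$1$ unbreakable case.
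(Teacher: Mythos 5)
Your proposal is correct and matches the paper's argument: the paper's proof also simply observes that breaking a degree-$1$ vertex produces an isomorphic multigraph, so breakable and unbreakable degree-$1$ vertices are interchangeable, and then reuses the gadget from the preceding lemma verbatim. Your additional detail (making the isomorphism explicit and re-verifying that breaking the central vertex disconnects the graph) is a sound elaboration of the same idea.
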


\begin{proof}
Breaking a degree-$1$ vertex does nothing, so breakable degree-$1$ vertices are essentially the same as unbreakable degree-$1$ vertices. Thus we can simply use the same construction as for the previous lemma.
\end{proof}

\begin{lemma}
There exists a planar gadget that simulates an unbreakable degree-$2$ vertex built out of breakable degree-$k$ vertices (for any $k \ge 4$) and breakable degree-$2$ vertices such that the number of nodes is constant with respect to the size of a given multigraph $G$.
\end{lemma}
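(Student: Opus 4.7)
The plan is to exhibit a planar simple gadget consisting of two breakable degree-$k$ ``hub'' vertices $v$ and $v'$, together with $k-2$ breakable degree-$2$ vertices $a_1,\ldots,a_{k-2}$ arranged as length-$2$ paths from $v$ to $v'$ (so each $a_i$ has edges to $v$ and to $v'$), plus two more breakable degree-$2$ vertices $b_1,b_2$ forming a triangle with $v'$ via edges $(v',b_1)$, $(v',b_2)$, and $(b_1,b_2)$. The two external edges of the gadget both emanate from $v$. A direct check gives $\deg(v)=\deg(v')=k$ and $\deg(a_i)=\deg(b_j)=2$, and the gadget uses exactly $k+2$ vertices, which is constant in $|G|$. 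Planarity follows from embedding $v$ and $v'$ as two poles joined by $k-2$ parallel subdivided paths (a planar realization of a subdivided $K_{2,k-2}$), with the small triangle drawn in an outer face at $v'$; and the graph is simple because no pair of vertices is joined by more than one edge.

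Next I would analyze the valid TRVB solutions when this gadget appears inside a larger graph. Since every edge leaving the gadget is incident to $v$, breaking $v$ would sever the component containing $v',a_1,\ldots,a_{k-2},b_1,b_2$ from the rest of the graph, so $v$ must remain unbroken; symmetrically, breaking $v'$ isolates the $b_1$--$b_2$ fragment from the rest of the gadget, so $v'$ must also remain unbroken. With $v$ and $v'$ both unbroken, every pair $\{a_i,a_j\}$ creates a $4$-cycle $v,a_i,v',a_j$, so at most one $a_i$ can remain unbroken; on the other hand, at least one $a_i$ must remain unbroken, otherwise $v'$ and the attached $b$-triangle become disconnected from $v$. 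Hence exactly one $a_i$ survives. Finally, the $3$-cycle $v',b_1,b_2$ forces exactly one of $b_1,b_2$ to be broken (zero broken leaves a cycle, two broken isolates the edge $(b_1,b_2)$ as a separate component). Under this uniquely determined solution, the residual gadget is a subtree whose only external attachments are $v$--$u_1$ and $v$--$u_2$, precisely the contribution that a single unbreakable degree-$2$ vertex between $u_1$ and $u_2$ would make.

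The main obstacle is a parity issue: a tempting simpler construction would hang $(k-2)/2$ disjoint triangles of breakable degree-$2$ vertices onto a single hub $v$, each triangle forcing $v$ unbroken by threatening to disconnect its $a$-pair if $v$ were broken, but each such triangle consumes exactly two of $v$'s internal edges and so this approach fails when $k$ is odd. Introducing the second hub $v'$ resolves the parity issue uniformly for all $k \ge 4$: any number $k-2$ of subdivided parallel edges can be routed between $v$ and $v'$, and the remaining two ports at $v'$ are closed off by the single auxiliary triangle regardless of the parity of $k$.
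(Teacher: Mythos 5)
Your gadget is correct, but it takes a genuinely different route from the paper's. The paper's construction is precisely the ``tempting simpler construction'' you set aside: a single hub $Q$ of degree $k$ together with $a$ triangles $Q,P_{2i},P_{2i+1}$ of breakable degree-$2$ vertices, which simulates an unbreakable degree-$(k-2a)$ vertex. The paper resolves the parity obstacle you identified not by changing the gadget but by choosing $a$ with $k-2a \in \{2,3\}$ and, when $k$ is odd (so the result has degree $3$), composing with the earlier lemma's gadget that builds an unbreakable degree-$2$ vertex from breakable degree-$k$ vertices and unbreakable degree-$3$ vertices. Your construction instead eliminates the parity casework entirely by adding a second hub $v'$, routing $k-2$ subdivided parallel edges between $v$ and $v'$, and capping $v'$ with a single fixed triangle; the analysis (neither hub may break, else disconnection; exactly one $a_i$ survives, else a $4$-cycle or disconnection; exactly one of $b_1,b_2$ breaks) is sound, and the residual gadget is indeed a tree hanging off the two external ports at $v$. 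The trade-off is that your gadget is self-contained and uniform in $k\ge 4$ with no appeal to prior lemmas, at the cost of a slightly larger argument (two hubs, and a more delicate counting of which $a_i$ survives); the paper's gadget is smaller and each triangle is analyzed in one line, but its correctness for odd $k$ is inherited from the earlier degree-$3$-unbreakable gadget.
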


\begin{proof}
We begin by constructing the gadget shown in Figure~\ref{figure:unbreakable_?_from_breakable_2_k}. In this gadget, breakable vertex $Q$ has $2a$ edges to other vertices $P_0, P_1, \ldots, P_{2a-1}$ in the gadget and $k-2a$ edges out of the gadget. In addition, there is an edge between $P_{2i}$ and $P_{2i+1}$ for every $i$ in $\{0, 1, \ldots, a-1\}$. Note that the degree of $Q$ is $k$ and the degree of each $P_i$ is $2$. When solving a graph containing this gadget, the cycle $(Q, P_{2i}, P_{2i+1})$ guarantees that exactly one of the three vertices in the cycle must be broken. $Q$, however, cannot be broken without disconnecting $P_{2i}$ and $P_{2i+1}$ from the rest of the graph. Thus, provided $a > 0$, every valid solution will break exactly one $P_{2i+j}$ out of every pair $(P_{2i}, P_{2i+1})$ and will not break $Q$. If this is done, the part of the resulting graph corresponding to this gadget will connect the $k-2a$ external neighbors of $Q$ to each other without forming any cycles. In other words the behavior of this gadget in a graph is the same as the behavior of an unbreakable degree-$(k-2a)$ vertex. 

\begin{figure}[!htb]
    \centering
    \includegraphics{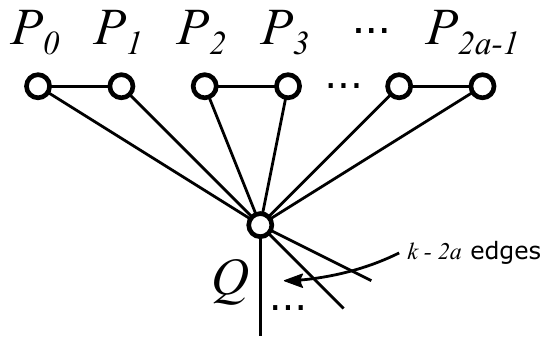}
    \caption{A gadget simulating an unbreakable degree-$(k-2a)$ vertex using only breakable degree-$k$ and degree-$2$ vertices arranged in a planar manner without self loops or duplicate edges.}
    \label{figure:unbreakable_?_from_breakable_2_k}
\end{figure}

Note that the number of nodes in the above gadget is constant with respect to the size of a given multigraph $G$ (in particular, there are $2a+1\le k+1$ nodes).

Since $k \ge 4$, we can select $a > 0$ such that $k-2a \in \{2, 3\}$. In other words, the above gadget behaves either as an unbreakable degree-$2$ vertex gadget or as an unbreakable degree-$3$ vertex gadget.

If the gadget behaves as an unbreakable degree-$3$ vertex gadget, then an unbreakable degree-$2$ vertex gadget can be built (as in a previous lemma) using breakable degree-$k$ vertices and unbreakable degree-$3$ vertex gadgets. In this case, the size of the new combined gadget is at most a constant times the size of the above gadget. 

Thus in all cases we can construct a gadget simulating an unbreakable degree-$2$ vertex using only degree-$k$ and degree-$2$ breakable vertices.
\end{proof}

\begin{lemma}
There exists a planar gadget that simulates an unbreakable degree-$2$ vertex built out of breakable degree-$3$ vertices such that the number of nodes is constant with respect to the size of a given multigraph $G$.
\end{lemma}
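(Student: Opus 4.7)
I would use a gadget of four breakable degree-$3$ vertices $A, B, C, D$ with internal edge set $\{AB, BC, CD, DA, BD\}$ (i.e., $K_4$ with edge $AC$ removed), attaching the gadget's two external half-edges at the non-adjacent terminals $A$ and $C$. Each vertex then has degree exactly $3$; the gadget is planar and simple, and it uses a constant number (namely $4$) of nodes independent of the input multigraph $G$. A parity check shows $4$ is essentially the minimum possible, since three degree-$3$ vertices with two external half-edges would leave an odd number of internal edge-ends.

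To verify that this simulates an unbreakable degree-$2$ vertex, I would case-analyse every subset $S \subseteq \{A,B,C,D\}$ of vertices to break, using three observations: (1) if $S$ contains two internally-adjacent vertices, the edge joining them becomes an isolated two-leaf component after breaking, disconnecting the residual multigraph; (2) acyclicity of the residue requires $S$ to meet both internal triangles $BCD$ and $ABD$, so $S \cap \{B,C,D\} \ne \emptyset$ and $S \cap \{A,B,D\} \ne \emptyset$; and (3) if $S = \{A,C\}$, then breaking both terminals leaves the unbroken vertices $B$ and $D$ (together with the edge $BD$ and the four leaves they inherit) in a component with no edge to the outside, again violating the tree property. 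Observations (1) and (2) together force $S$ to lie in $\{\,\{B\},\{D\},\{A,C\}\,\}$, and observation (3) eliminates $\{A,C\}$. A direct check then shows that for each of the remaining choices $S = \{B\}$ and $S = \{D\}$, the gadget's residue is a $6$-vertex, $5$-edge tree attached to the outside only via the two external half-edges at $A$ and $C$, i.e., the gadget behaves exactly as an unbreakable degree-$2$ vertex, following the same style used for the other gadget lemmas in this section.

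The main obstacle is establishing observation (3): the set $\{A,C\}$ is the unique candidate passing observations (1) and (2) that is not $\{B\}$ or $\{D\}$, and its failure is subtler than the others, being a disconnection rather than a cycle. I need to argue carefully that after peeling the external half-edges at $A$ and $C$ off as leaves attached only to the outside, the internal $\{B, D\}$-component has no remaining path to either external, precisely because the only internal neighbours of $B$ and $D$ besides each other were $A$ and $C$, now broken. Once this stranding argument is nailed down, the rest of the case analysis is mechanical and the gadget substitution follows in the standard way.
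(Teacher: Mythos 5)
Your proposed gadget is exactly the paper's gadget under the relabeling $B\mapsto P$, $D\mapsto Q_2$, $A\mapsto Q_1$, $C\mapsto Q_3$ (the paper's edges are $PQ_1, PQ_2, PQ_3, Q_1Q_2, Q_2Q_3$ with externals at $Q_1,Q_3$, i.e., $K_4$ minus $Q_1Q_3$), and your three observations reproduce the paper's argument (breaking $P$ or $Q_2$ forbids breaking anything else by adjacency; not breaking either forces breaking both $Q_1,Q_3$ to kill the two triangles, which strands $P$ and $Q_2$). The proposal is correct and takes essentially the same approach as the paper.
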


\begin{proof}
The gadget for this theorem is shown in Figure~\ref{figure:unbreakable_2_from_breakable_3}. If either vertex $P$ or vertex $Q_2$ is broken, then none of the others can be (since all the non-$P$ vertices are adjacent to $P$ and all the non-$Q_2$ vertices are adjacent to $Q_2$). If neither $P$ nor vertex $Q_2$ is broken, then in order to avoid having cycles, both $Q_1$ and $Q_3$ must be broken; this however, disconnects $P$ and $Q_2$ from the rest of the graph. Thus the only valid solutions of this gadget break exactly one of $P$ and $Q_2$ and nothing else. In either case, the resulting graph piece connects the two edges that extend out to the rest of the graph. In other words, the gadget behaves like a degree-$2$ unbreakable vertex.

\begin{figure}[!htb]
    \centering
    \includegraphics{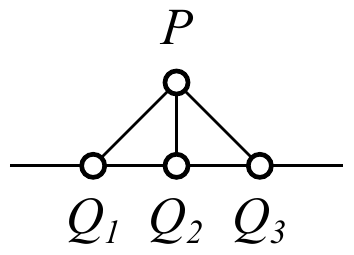}
    \caption{A gadget simulating an unbreakable degree-$2$ vertex using only breakable degree-$3$ vertices arranged in a planar manner without self loops or duplicate edges.}
    \label{figure:unbreakable_2_from_breakable_3}
\end{figure}

Note also that this gadget uses $4$ nodes, which is constant with respect to the size of a given multigraph $G$.
\end{proof}

\begin{lemma}
There exists a planar gadget that simulates an unbreakable degree-$2$ vertex built out of breakable degree-$4$ vertices such that the number of nodes is constant with respect to the size of a given multigraph $G$.
\end{lemma}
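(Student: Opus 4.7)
The plan is to bootstrap from constructions already in hand rather than design a fresh gadget from scratch. The first lemma of this section provides a planar simple-graph gadget that simulates an unbreakable degree-$2$ vertex using breakable degree-$k$ and unbreakable degree-$4$ vertices (Figure~\ref{figure:unbreakable_2_from_breakable_k_unbreakable_4}); specializing to $k=4$ yields a construction that uses only breakable degree-$4$ and unbreakable degree-$4$ vertices. Therefore it suffices to exhibit a constant-size, planar, simple-graph sub-gadget that simulates an unbreakable degree-$4$ vertex using only breakable degree-$4$ vertices, and then substitute one copy of that sub-gadget for each unbreakable degree-$4$ vertex of the outer gadget.

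For the sub-gadget I would begin with the multigraph construction from Figure~\ref{figure:unbreakable_4_from_breakable_4} used in Lemma~\ref{lemma:reduction_without_unbreakable_exists}. Its case analysis already shows the intended behavior: the forced $2$-cycle between $P_0$ and $P_1$ compels exactly one of $\{P_0,P_1\}$ to be broken, while $\{Q_0,Q_1\}$ must remain unbroken, so the four external neighbors become tree-connected through the gadget. The only defect for our purposes is the pair of parallel edges between $P_0$ and $P_1$. I would eliminate them by routing the two forced edges through a short simple cycle that passes through two new breakable degree-$4$ vertices $X_0,X_1$, each of which is padded up to degree exactly $4$ by attaching a small auxiliary cluster of breakable degree-$4$ vertices containing a local cycle that forces those auxiliary vertices to be broken and leaves no residual cycles. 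The forcing cycle through $\{P_0,P_1,X_0,X_1\}$ remains present, so the ``exactly one of $P_0,P_1$ broken'' constraint survives, and one verifies by the same bookkeeping as before that the external behavior is that of an unbreakable degree-$4$ vertex.

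The main obstacle is the uniform degree-$4$ parity constraint: unlike the degree-$2$ or degree-$3$ settings considered in earlier lemmas of this section, one cannot simply subdivide a parallel edge with a single new vertex, because that vertex would have degree $2$ rather than $4$. So the real work is designing the auxiliary padding that raises the two new subdivision vertices to degree $4$ without inadvertently introducing either residual cycles after all forced breakings or a disconnection of the external terminals from the body. I expect this to reduce to checking a short finite case analysis on the few forced-vs.-free choices inside the auxiliary cluster.

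Once the unbreakable degree-$4$ sub-gadget is verified, substituting it into the outer construction from Figure~\ref{figure:unbreakable_2_from_breakable_k_unbreakable_4} with $k=4$ yields the desired gadget. Planarity is preserved because both the outer gadget and the sub-gadget are planar and the substitution is performed at a vertex (replacing a degree-$4$ vertex with a planar piece having four boundary terminals in the correct cyclic order around the outer face). The simple-graph property is preserved because the substituted piece is internally simple and each external terminal of the sub-gadget inherits exactly one edge of the outer gadget, which go to distinct neighbors. The total node count is a constant (a constant-size outer gadget with each of its few unbreakable degree-$4$ vertices expanded into a constant-size sub-gadget), independent of the size of $G$.
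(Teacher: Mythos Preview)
Your route is considerably more roundabout than the paper's, and it leaves a genuine gap at precisely the step you flag as ``the real work.'' You reduce the task to building a planar simple-graph simulator for an unbreakable degree-$4$ vertex out of breakable degree-$4$ vertices, proposing to modify the multigraph gadget of Figure~\ref{figure:unbreakable_4_from_breakable_4} by replacing its $2$-cycle with a longer simple cycle through new vertices $X_0,X_1$ and then padding those up to degree~$4$ via an ``auxiliary cluster.'' But that cluster is never actually exhibited. It must simultaneously (i) contribute exactly two further edges to each $X_i$, (ii) contain an internal cycle forcing its own vertices to be broken, (iii) leave no residual cycle and no disconnection after the forced breakings, (iv) use only degree-$4$ vertices, and (v) keep the whole graph simple and planar. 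These constraints interact nontrivially (for instance, a forced-to-break auxiliary vertex adjacent to $X_i$ would, after breaking, leave a pendant at $X_i$ that must not disconnect anything, yet $X_i$ itself might be the $P_j$ that gets broken). Saying you ``expect this to reduce to checking a short finite case analysis'' is not a proof; until the cluster is written down and verified, the argument is incomplete.

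By contrast, the paper's proof is a one-liner: the gadget already built in Theorem~\ref{theorem:graph_breakable_k} for general $k$, specialized to $k=4$, happens to be planar. Concretely, it is $K_{2,4}$ on $\{P_1,P_2\}\times\{Q_1,Q_2,Q_3,Q_4\}$ together with the path $Q_1Q_2Q_3Q_4$ and two external edges at $Q_1,Q_4$; one can draw $P_1$ above and $P_2$ below a horizontal row $Q_1,\ldots,Q_4$. Its correctness as a degree-$2$ unbreakable simulator was already established in that theorem, so nothing remains but to observe planarity. Six vertices, no sub-gadget substitution, no auxiliary padding.
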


\begin{proof}
The gadget for this theorem is shown in Figure~\ref{figure:unbreakable_2_from_breakable_4}. Note that this is actually the same gadget as described in Theorem~\ref{theorem:graph_breakable_k} for $k = 4$. Thus we have already argued the correctness of this gadget.

\begin{figure}[!htb]
    \centering
    \includegraphics{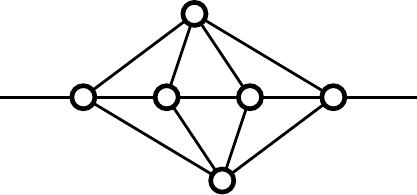}
    \caption{A gadget simulating an unbreakable degree-$2$ vertex using only breakable degree-$4$ vertices arranged in a planar manner without self loops or duplicate edges.}
    \label{figure:unbreakable_2_from_breakable_4}
\end{figure}

Note also that this gadget uses $6$ nodes, which is constant with respect to the size of a given multigraph $G$.
\end{proof}

\begin{lemma}
There exists a planar gadget that simulates an unbreakable degree-$2$ vertex built out of breakable degree-$5$ vertices such that the number of nodes is constant with respect to the size of a given multigraph $G$.
\end{lemma}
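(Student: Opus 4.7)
The plan is to reduce this case to earlier constructions in this section. The previous lemmas of Section~\ref{section:hardness_4} already give planar simple gadgets that simulate an unbreakable degree-$2$ vertex out of breakable degree-$k$ vertices (for any $k \ge 4$) together with unbreakable degree-$3$ vertices. Therefore it suffices to build a planar simple gadget that simulates an unbreakable degree-$3$ vertex using only breakable degree-$5$ vertices, and then substitute that sub-gadget in place of every unbreakable degree-$3$ vertex appearing in the earlier construction. Since each of the composed pieces has constant size, the resulting gadget also has a number of vertices constant with respect to $|G|$.

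To build the intermediate unbreakable degree-$3$ gadget, I would start from the construction of Lemma~\ref{lemma:reduction_without_unbreakable_exists} applied with $k=5$ and $a=1$. That gives a three-vertex gadget $Q, P_0, P_1$ in which $Q$ is a central breakable degree-$5$ vertex with three external edges and two internal edges to $P_0, P_1$, and in which $k-1 = 4$ parallel edges between $P_0$ and $P_1$ force at least one of $P_0, P_1$ to be broken, while $Q$ cannot be broken without disconnecting $P_0, P_1$ from the external edges. The only obstruction to this already serving as our unbreakable degree-$3$ gadget is that the four parallel edges violate simplicity. My plan is to replace that multi-edge by internally-disjoint simple paths of length two through auxiliary breakable degree-$5$ vertices $A_1,\dots,A_r$ (so that every pair $P_0, A_i, P_1, A_j$ still spans a short cycle forcing breaks), and then pad each $A_i$ up to degree $5$ using a small cluster of additional breakable degree-$5$ vertices arranged in the same forced-break triangle pattern as in Figure~\ref{figure:unbreakable_2_from_breakable_3}, so that each auxiliary vertex (and its padding cluster) has a unique forced local behavior.

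The main obstacle is controlling this padding. Each auxiliary vertex introduced to eliminate a multi-edge must itself be made degree $5$ using more breakable degree-$5$ vertices, and the padding must simultaneously (i) preserve planarity, (ii) preserve simplicity, (iii) keep the intended forced solution (namely $Q$ unbroken, $P_0$ and $P_1$ broken, together with the forced choices inside each padding cluster) as a valid TRVB solution, and (iv) rule out alternative solutions in which $Q$ is broken or both $P_0$ and $P_1$ are left unbroken. Verification will be a finite case analysis paralleling the arguments in the preceding lemmas of this section: for each breakable vertex in the gadget, exhibit a cycle forcing its break (or a cut forcing it to remain), and then check that the residual graph under the forced choices is exactly a tree whose externals match those of an unbreakable degree-$3$ vertex. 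Once this gadget is in hand, plugging it into the earlier breakable-degree-$k$/unbreakable-degree-$3$ construction yields the desired unbreakable degree-$2$ gadget built entirely from breakable degree-$5$ vertices, planar and simple, with constantly many nodes.
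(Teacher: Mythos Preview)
Your high-level strategy---reduce to the already-proved ``breakable degree-$k$ plus unbreakable degree-$3$'' lemma by first manufacturing an unbreakable degree-$3$ gadget out of breakable degree-$5$ vertices---is reasonable in principle, but the proposal does not actually carry it out. You correctly identify the obstacle (padding the auxiliary $A_i$ up to degree $5$ while keeping the gadget simple, planar, and with fully forced behavior) and then stop, saying ``once this gadget is in hand.'' That is exactly the hard part, and nothing in the proposal resolves it: the ``forced-break triangle pattern'' you cite is for degree-$3$ breakable vertices, not degree~$5$, so it does not transplant directly; and each layer of padding introduces new vertices that themselves need degree~$5$, so you must argue the recursion terminates. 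There is also an internal inconsistency: in the source gadget from Lemma~\ref{lemma:reduction_without_unbreakable_exists} exactly one of $P_0,P_1$ is broken, yet you state the intended forced solution has both $P_0$ and $P_1$ broken. After subdividing the $P_0$--$P_1$ multi-edges through the $A_i$, breaking both $P_0$ and $P_1$ leaves each $A_i$ attached only to degree-$1$ stubs on those two sides, so connectivity then depends entirely on the unspecified padding clusters; you would need to argue carefully which pattern is actually forced and that it yields a tree.

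For comparison, the paper sidesteps all of this by exhibiting a single explicit $32$-vertex planar simple gadget built only from breakable degree-$5$ vertices, with two outgoing edges. The argument is a global edge/vertex count: with $32$ vertices and $79$ internal edges, any candidate local solution has $79$ edges, so a one-tree outcome needs $80$ vertices (i.e., $12$ breaks, since each break adds $4$ vertices) while a two-tree outcome would need $81$ vertices, which is unreachable modulo~$4$. Hence every valid local solution connects the two outgoing edges, and an explicit $12$-break solution is exhibited. This counting trick avoids the local case analysis your approach would require.
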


\begin{proof}
The gadget for this theorem is shown in Figure~\ref{figure:unbreakable_2_from_breakable_5}. 

\begin{figure}[!htb]
    \centering
    \includegraphics[width=\textwidth]{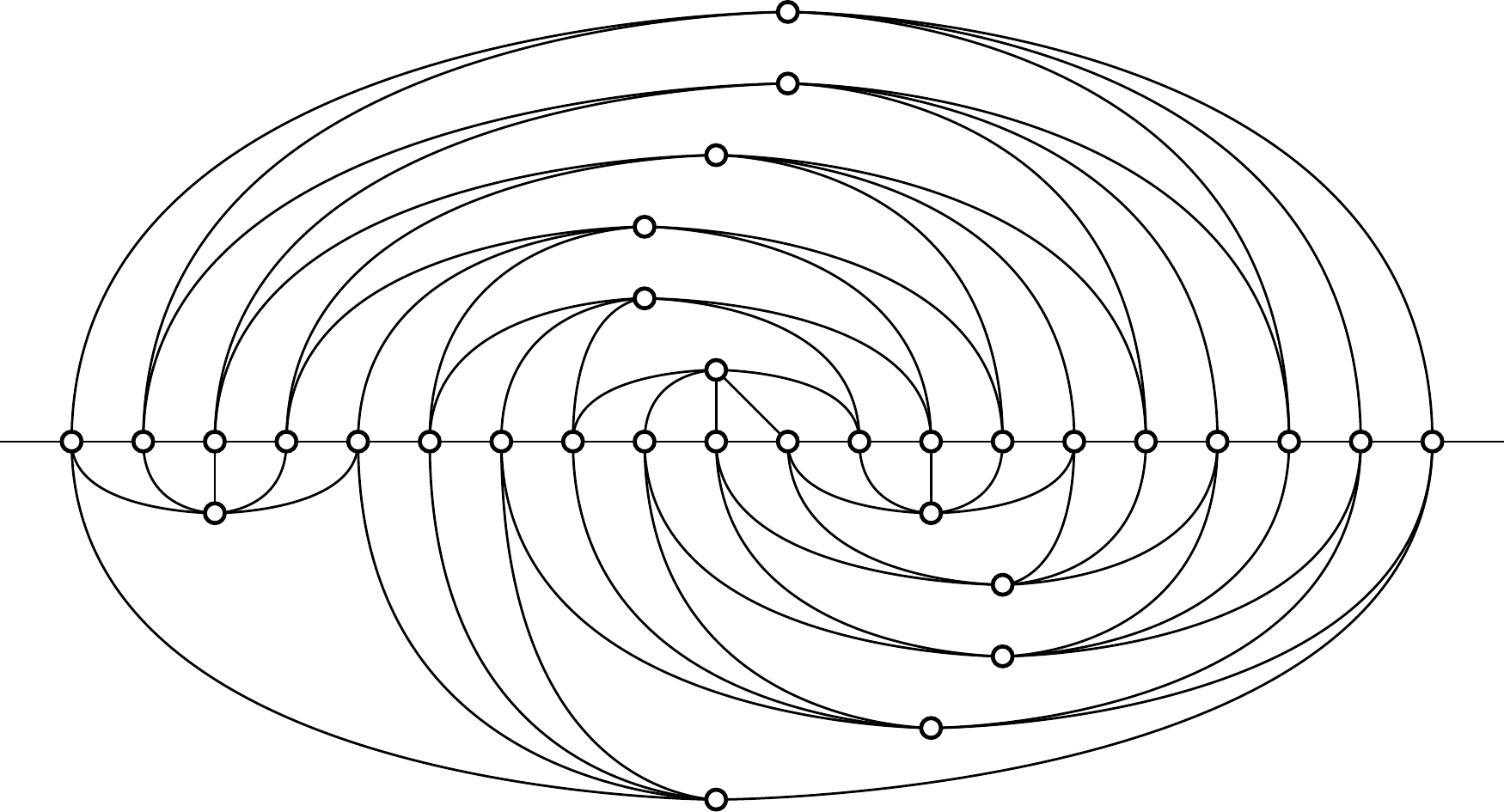}
    \caption{A gadget simulating an unbreakable degree-$2$ vertex using only breakable degree-$5$ vertices arranged in a planar manner without self loops or duplicate edges.}
    \label{figure:unbreakable_2_from_breakable_5}
\end{figure}

This gadget contains exactly thirty-two degree-$5$ vertices with two edges leaving the gadget. A choice of vertices to break within the gadget could be a valid solution if either (1) the resulting graph restricted to the vertices within the gadget is a tree (with the two edges extending out of the gadget connected to this tree) or (2) the resulting graph restricted to the vertices within the gadget consists of two trees (with the two edges extending out of the gadget connected to one of these trees each). Since there are $32$ degree-$5$ vertices with two edges out of the gadget, the number of edges in the gadget is $\frac{32\times5 - 2}{2} = 79$. Note that breaking vertices does not affect the number of edges in a graph, so the final tree or pair of trees in a valid solution of the gadget will also have $79$ edges. A tree with $79$ edges has $80$ vertices while two trees with $79$ edges have $81$ vertices. Breaking one vertex increases the number of vertices by $4$. Thus, it is possible to achieve the one-tree solution by breaking $\frac{80-32}{4} = 12$ vertices and it is impossible to achieve the two-tree solution.

The one-tree solution corresponds with the situation in which the gadget connects the two edges that extend out to the rest of the graph. In other words, provided that the gadget can be solved, every possible solution is one under which the gadget behaves like a degree-$2$ unbreakable vertex. There is, however, still the question of whether the gadget can be solved. In fact, breaking every vertex above or below the center horizontal line of the gadget (and leaving the 20 vertices along the center line unbroken) is a valid solution of the gadget.

Therefore this gadget behaves like a degree-$2$ unbreakable vertex.

Note also that this gadget uses $32$ nodes, which is constant with respect to the size of a given multigraph $G$.
\end{proof}

With these gadgets, we can now reduce from the Planar TRVB variants:

\begin{theorem}
Planar Graph $(B, U)$-TRVB is NP-complete if (1) either $B \cap \{1,2,3,4,5\} \ne \emptyset$ or $U \cap \{1,2,3,4\} \ne \emptyset$ and (2) there exists a $k \ge 4$ with $k \in B$.
\end{theorem}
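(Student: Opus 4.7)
The plan is to reduce from Planar $(B,U)$-TRVB, which is already NP-hard whenever condition (2) holds by the theorem in Section~\ref{section:hardness_2}, to Planar Graph $(B,U)$-TRVB. The idea is exactly the same as in Lemma~\ref{lemma:graph_breakable_k_unbreakable_2}: inserting two unbreakable degree-$2$ vertices into every edge of a multigraph eliminates self-loops and parallel edges, turning a multigraph into a simple graph. This transformation preserves planarity (each edge is replaced by a path routed along the original edge) and preserves the TRVB answer (a degree-$2$ unbreakable vertex acts as a pass-through). The complication is that $U$ may not contain $2$, so literal degree-$2$ unbreakable vertices may be unavailable; this is exactly what condition (1) and the preceding string of lemmas are designed to address.

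First, I would observe that the preceding lemmas collectively construct, for every low-degree vertex type allowed by condition (1), a constant-sized planar simple gadget whose every vertex is either a breakable degree-$k$ vertex (with $k \in B$ by condition (2)) or a vertex of that particular low-degree type, and whose external behavior matches that of an unbreakable degree-$2$ vertex. Explicitly, the eight gadgets cover the cases $4 \in U$, $3 \in U$, $1 \in U$, $1 \in B$, $2 \in B$, $3 \in B$, $4 \in B$, and $5 \in B$; and the remaining subcase $2 \in U$ is trivial since genuine unbreakable degree-$2$ vertices are then already allowed. Under condition (1) at least one of these subcases applies, so at least one suitable gadget exists whose vertex types all lie in $B \cup U$.

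The reduction then takes a Planar $(B,U)$-TRVB instance $G$ and replaces each edge of $G$ by a path consisting of two copies of the chosen gadget, routed along the original edge. I would verify that: (i) the resulting graph is planar, since each replacement is local and planarity-preserving; (ii) it is simple, since every original edge becomes a path of length at least $3$, so no parallel edges or self-loops survive; (iii) all vertex degrees still lie in $B$ or $U$ appropriately, since the original vertices of $G$ are untouched and the gadgets only use permitted vertex types; and (iv) the TRVB answer is preserved, because each gadget, once inserted, is forced by its internal structure to behave exactly as an unbreakable degree-$2$ vertex, and inserting unbreakable degree-$2$ vertices into edges does not change whether some subset of breakable vertices can be broken to leave a tree.

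The main ``obstacle'' is really just the case analysis: I must make sure the eight gadget lemmas together cover every way condition (1) can hold, and that each gadget's vertex inventory is in fact a subset of $B \cup U$ under the corresponding subcase. Once this is checked, combined with membership in NP from Corollary~\ref{corollary:np_membership}, NP-completeness of Planar Graph $(B,U)$-TRVB follows immediately under conditions (1) and (2).
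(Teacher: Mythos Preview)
Your proposal is correct and follows essentially the same approach as the paper: insert two copies of an unbreakable degree-$2$ gadget into every edge, using the preceding lemmas' case analysis to ensure such a gadget can be built from vertex types permitted by $(B,U)$. The only cosmetic difference is that the paper reduces from the more specific Planar $(\{k\},\emptyset)$-TRVB (Corollary~\ref{corollary:planar_breakable_k}) rather than from Planar $(B,U)$-TRVB, but since the latter's NP-hardness is derived from the former, the two choices amount to the same argument.
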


\begin{proof}
Suppose that for some $k \ge 4$, we have that $k \in B$ and also that either $B \cap \{1,2,3,4,5\} \ne \emptyset$ or $U \cap \{1,2,3,4\} \ne \emptyset$. Then we can reduce from Planar $(\{k\}, \emptyset)$-TRVB to Planar Graph $(B, U)$-TRVB. Our reduction works by inserting either two degree-$2$ unbreakable vertices or two degree-$2$ unbreakable vertex gadgets (built out of vertices whose types are allowed in $(B, U)$-TRVB) into each edge. In either case, the resulting multigraph uses only vertices with allowed degrees (with the answer staying the same), but is now also a graph. 

There are several cases:

If $B \cap \{3,4,5\} \ne \emptyset$, then let $b$ be an element of $B \cap \{3,4,5\}$. We can build a constant size planar gadget which behaves like an unbreakable degree-$2$ vertex using only breakable degree-$b$ vertices. 

If $B \cap \{1,2\} \ne \emptyset$, then let $b$ be an element of $B \cap \{1,2\}$. We can build a constant size planar gadget which behaves like an unbreakable degree-$2$ vertex using only breakable degree-$b$ vertices and breakable degree-$k$ vertices.

If $U \cap \{1,3,4\} \ne \emptyset$, then let $u$ be an element of $U \cap \{1,3,4\}$. We can build a constant size planar gadget which behaves like an unbreakable degree-$2$ vertex using only unbreakable degree-$u$ vertices and breakable degree-$k$ vertices.

If $2 \in U$, then the problem we are reducing to allows degree-$2$ unbreakable vertices.

Thus, in all cases either (1) the problem we are reducing to allows degree-$2$ unbreakable vertices or (2) the problem we are reducing to allows vertex types with which we can build a constant sized gadget which behaves like a degree-$2$ unbreakable vertex. 

Thus, our desired reduction from Planar $(\{k\}, \emptyset)$-TRVB to Planar Graph $(B, U)$-TRVB is possible. Since Planar $(\{k\}, \emptyset)$-TRVB is NP-hard (by Corollary~\ref{corollary:planar_breakable_k}), we see that Planar Graph $(B, U)$-TRVB is NP-hard. By Corollary~\ref{corollary:np_membership}, Planar Graph $(B, U)$-TRVB is in NP, so as desired, Planar Graph $(B, U)$-TRVB is NP-complete.
\end{proof}

\section{Planar Graph TRVB is polynomial-time solvable without small vertex degrees}
\label{section:polynomial}

The overall purpose of this section is to show that variants of Planar Graph TRVB which disallow all small vertex degrees are polynomial-time solvable because the answer is always ``no.'' Consider for example the following theorem.

\begin{theorem}
If $b > 5$ for every $b \in B$ and $u > 5$ for every $u \in U$, then Planar Graph $(B,U)$-TRVB has no ``yes'' inputs. As a result, Planar Graph $(B,U)$-TRVB problem is polynomial-time solvable.
\end{theorem}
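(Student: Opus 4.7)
The plan is to show that the stated hypotheses on $B$ and $U$ force every valid input to be a planar simple graph of minimum degree at least $6$, and that no nonempty such graph exists. Consequently the problem has no nontrivial valid inputs at all, the empty graph is itself a ``no'' instance (since the empty graph is not a tree), and the constant-time algorithm that always outputs ``no'' solves Planar Graph $(B,U)$-TRVB correctly. No analysis of the vertex-breaking operation is needed.

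First I would unpack the hypotheses. Since every element of $B$ exceeds $5$, every breakable vertex has degree at least $6$; since every element of $U$ exceeds $5$, every unbreakable vertex also has degree at least $6$. Combined with the paper's standing convention (stated just after the definition of TRVB) that inputs have no degree-$0$ vertices, every vertex of every valid input has degree at least $6$.

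Then I would invoke the standard density bound for planar simple graphs coming from Euler's formula: any planar simple graph on $n \geq 3$ vertices satisfies $|E| \leq 3n - 6$, so by the handshaking lemma $\sum_v \deg(v) = 2|E| \leq 6n - 12 < 6n$. On the other hand, a minimum degree of $6$ would force $\sum_v \deg(v) \geq 6n$, a contradiction. For $n \in \{1, 2\}$, the maximum degree in a simple graph is at most $n-1 \leq 1 < 6$, which also contradicts the minimum-degree-$6$ condition. Hence no nonempty planar simple graph meets the input restrictions.

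The main obstacle is essentially nonexistent: the theorem is a direct corollary of the density bound for planar simple graphs, and the only edge case to dispatch is the empty graph, which is handled by the usual convention that the empty multigraph is not considered a tree (it has no connected components, whereas a tree has exactly one). From ``every input is a `no' instance'' one concludes polynomial-time solvability immediately.
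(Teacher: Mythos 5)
Your proof is correct and takes essentially the same approach as the paper, which simply asserts that the average degree in a planar graph is less than $6$ and so there are no planar simple graphs with minimum degree at least $6$. You spell out the Euler-formula density bound and the small-$n$ and empty-graph edge cases explicitly, but the core idea is identical.
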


\begin{proof}
The average degree of a vertex in a planar graph must be less than $6$, so there are no planar graphs with all vertices of degree at least $6$. Thus, if $b > 5$ for every $b \in B$ and $u > 5$ for every $u \in U$, then every instance of Planar Graph $(B,U)$-TRVB is a ``no'' instance.
\end{proof}

In fact, we will strengthen this theorem below to disallow ``yes'' instances even when degree-$5$ unbreakable vertices are present by using the particular properties of the TRVB problem. Note that this time, planar graph inputs which satisfy the degree constraints are possible, but any such graph will still yield a ``no'' answer to the Tree-Residue Vertex-Breaking problem. 

We begin with the proof idea in Section~\ref{section:polynomial_2_proof_idea}, and proceed through the details in Section~\ref{section:polynomial_2_proof}

\subsection{Proof idea}
\label{section:polynomial_2_proof_idea}

Consider the hypothetical situation in which we have a solution to the TRVB problem in a planar graph whose unbreakable vertices each have degree at least $5$ and whose breakable vertices each have degree at least $6$. The general idea of the proof is to show that this situation is impossible by assigning a scoring function (described below) to the possible states of the graph as vertices are broken. The score of the initial graph can easily be seen to be zero and assuming the TRVB instance has a solution, the score of the final tree can be shown to be positive. It is also the case, however, that if we break the vertices in the correct order, no vertex increases the score when broken, implying a contradiction. 

Next, we introduce the scoring mechanism. Consider one vertex in the graph after some number of vertices have been broken. This vertex has several neighbors, some of which have degree $1$. We can group the edges of this vertex that lead to degree-$1$ neighbors into ``bundles'' seperated by the edges leading to higher degree neighbors. For example, in Figure~\ref{figure:bundles_example}, the vertex shown has two bundles of size $2$ and one bundle of size $3$. Each bundle is given a score according to its size, and the score of the graph is equal to the cumulative score of all present bundles. In particular, if a bundle has a size of $1$, then we assign the bundle a score of $-1$, and otherwise we assign the bundle a score of $n-1$ where $n$ is the size of the bundle. 

\begin{figure}[!htb]
    \centering
    \includegraphics{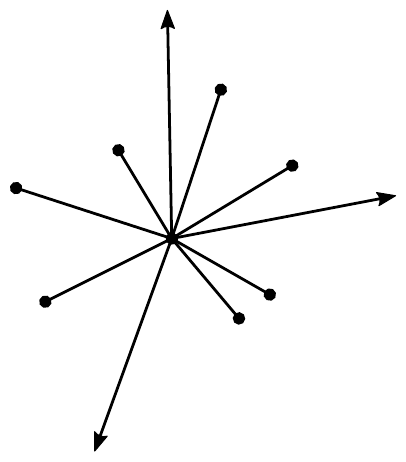}
    \caption{A degree-$10$ vertex with seven degree-$1$ neighbors (shown) and three other neighbors (not shown). The edges to the degree-$1$ neighbors form two bundles of size $2$ and one bundle of size $3$. }
    \label{figure:bundles_example}
\end{figure}

As it turns out, under this scoring mechanism, any tree all of whose non-leaves have degree at least $5$ always has a positive score. In fact, it is easy to see that in our TRVB instance, if breaking some set of breakable vertices $S$ results in a tree, then this degree constraint applies: the non-leaves are vertices from the original graph and therefore have degree at least $5$. Thus, the score of the original graph is zero (since there are no bundles), and the score after all the vertices in $S$ are broken is positive. 

Next, we define a breaking order for the vertices of $S$. In short, we will break the vertices of $S$ starting on the exterior of the graph and moving inward. More formally, we will repeatedly do the following step until all vertices in $S$ have been broken. Consider the external face of the graph at the current stage of the breaking process. Since not every vertex in $S$ has been broken, the graph is not yet a tree and the current external face is a cycle. Every cycle in the graph must contain a vertex from $S$ (in order for the final graph to be a tree), so choose a vertex from $S$ on the current external face and break that vertex next. 

Breaking the vertices of $S$ in this order has an interesting effect on the bundles in the graph: since every vertex from $S$ is on the external face when it is broken, every degree-$1$ vertex ends up within the external face when it appears. Thus all bundles are within the external face of the graph at all times.

Consider the effect that breaking one vertex from $S$ with degree $d \ge 6$ has on the score of the graph. Any vertex in $S$ on the external face has exactly two edges which border this face. The remaining $d-2$ edges must all leave the vertex into the interior of the graph. When the vertex is broken, each of these $d-2$ edges becomes a new bundle (since the interior of the graph never has any bundles). Thus, breaking the vertex creates $d-2$ new bundles of size $1$, thereby decreasing the score of the graph by $d-2$. On the other hand, the two edges which were on the external face are now each added to a bundle, thereby increasing the size of that bundle by one and increasing its score by at most two (in the case that the size was originally $1$). Thus, the increase in the score of the graph due to these two edges is at most $4$. In summary, breaking one vertex decreases the graph's score by $d-2 \ge 4$ and increases the graph's score by at most $4$. Thus, the total score of the graph does not increase.

Since the score of the graph does not increase with any step of the process, the final result should have at most the same score as the original graph. This contradicts the fact that the tree at the end of the process has positive score while the original graph has score zero. By contradiction, we conclude that $S$ cannot exist, giving us our desired result.

\subsection{Proof}
\label{section:polynomial_2_proof}

In this section, we will follow the proof outline given in the previous section. We begin with a sequence of definitions leading to a formal definition of the scoring function used above. 

Throughout this section, we will be considering a planar graph $G$ whose breakable vertices each have degree at least $6$ and whose unbreakable vertices each have degree at least $5$. 

\begin{definition} 
We say that $G'$ is a \emph{state} of $G$ if we can obtain $G'$ by breaking some vertices of $G$.
\end{definition}

We choose a particular planar embedding for $G$ to be the canonical planar embedding for $G$. When a vertex is broken in some state $G'$ of $G$, the new state $G''$ can inherit a planar embedding from $G'$ in the natural way: all vertices and edges unaffected by the vertex-breaking are embedded in the same place while the new vertices are placed so that the order of edges around each vertex is preserved. For example, breaking the top vertex in the planar embedding shown in the left part of Figure~\ref{figure:embedding_example} would yield the planar embedding shown in the center of the figure rather than the right part. Then every state of $G$ can (via a sequence of states) inherit the canonical planar embedding for $G$. We will then use this embedding as the canonical embedding for the state of $G$. With that done, we no longer have to specify which planar embedding we are using for the states of $G$: we will always use the canonical planar embedding.

\begin{figure}[!htb]
    \centering
    \hfill
    \includegraphics[scale=.5]{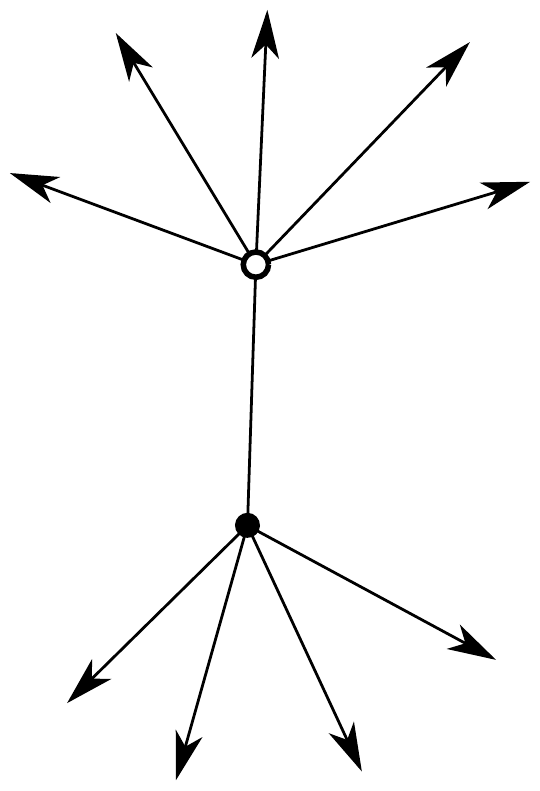}\hfill
    \includegraphics[scale=.5]{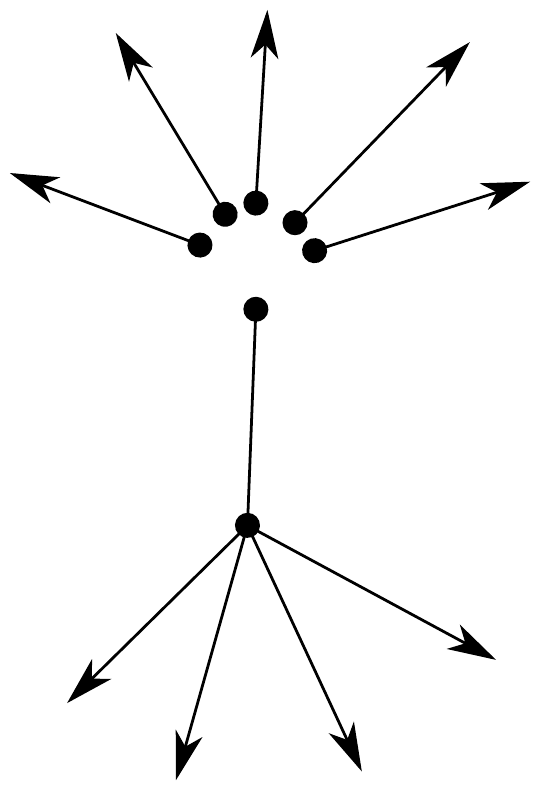}\hfill
    \includegraphics[scale=.5]{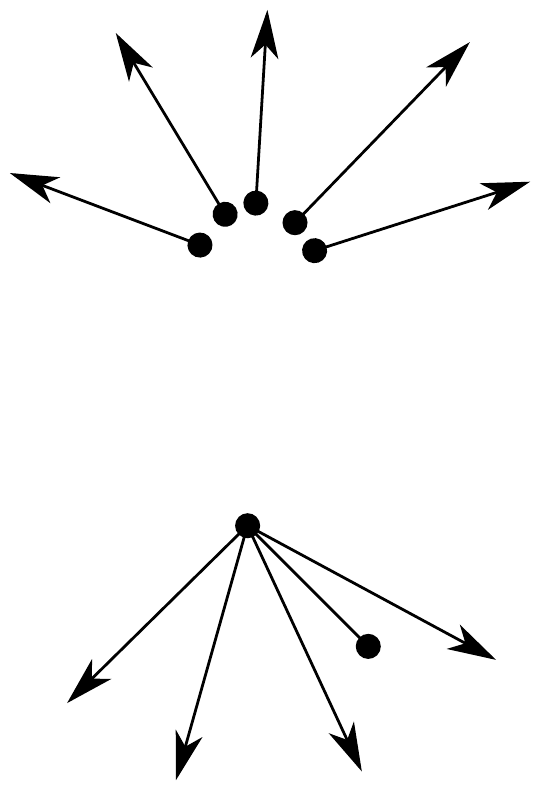}\hfill
    \hfill
    \caption{Breaking the top vertex in the first planar embedding should yield the second planar embedding rather than the third in order to maintain the order of edges around the bottom vertex.}
    \label{figure:embedding_example}
\end{figure}

\begin{definition}
We call $C$ a \emph{contiguous set of edges at $x$} if $C$ is a set of edges all sharing endpoint $x$ and we can proceed clockwise around $x$ starting and ending at some edge in $C$ such that the edges encountered are exactly those in $C$. Then a \emph{bundle} at vertex $x$ is a non-empty maximal contiguous set of edges at $x$ whose other endpoints have degree $1$.

Define the \emph{score of a bundle at vertex $x$} to be $-1$ if the bundle has size $1$ and $n-1$ if the bundle has size $n > 1$. Define the \emph{score} of $x$ to be the cumulative score of all the bundles at $x$. Define the \emph{score} of a state $G'$ of $G$ to be the cumulative score of all the vertices in $G'$.
\end{definition}

Suppose for the sake of contradiction that $S$ is a set of breakable vertices in $G$ such that breaking the vertices in $S$ yields state $T$ which is a tree. In other words, suppose that there exists a solution to the instance $G$ of Planar Graph $(\{6,7,8,\ldots\}, \{5,6,7,\ldots\})$-TRVB.

We begin by showing one side of the contradiction: that the score of $T$ is positive. To do this, we introduce the following lemma about trees:

\begin{lemma}
If $T'$ is a tree with at least $2$ vertices, then 
$$2(\text{number of leaves of}~T') + (\text{number of degree-$2$ vertices in}~T') > (\text{number of edges in}~T').$$ 
\end{lemma}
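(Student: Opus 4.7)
The plan is to prove this by a simple degree-counting argument. Let $L$, $D$, and $H$ denote respectively the number of leaves, degree-$2$ vertices, and vertices of degree at least $3$ in $T'$, and let $n = L + D + H$ be the total number of vertices and $e = n - 1$ the number of edges (using that $T'$ is a tree with at least $2$ vertices, so every vertex has degree at least $1$).

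First I would establish the standard fact that in any tree with at least two vertices, $L \geq H + 2$. By the handshake lemma,
\[
\sum_{v \in T'} \deg(v) = 2e = 2n - 2.
\]
Splitting this sum by degree class and using that each degree-$\geq 3$ vertex contributes at least $3$,
\[
2n - 2 \;=\; \sum_{v \in T'} \deg(v) \;\geq\; L + 2D + 3H.
\]
Substituting $n = L + D + H$ and rearranging yields $L \geq H + 2$.

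Finally, the inequality to be shown, $2L + D > e = L + D + H - 1$, simplifies to $L \geq H$, which is weaker than $L \geq H + 2$ and hence follows immediately. The hypothesis that $T'$ has at least $2$ vertices is used only to ensure that there are no degree-$0$ vertices, which is what allows the clean classification of vertices into the three degree classes $\{1\}$, $\{2\}$, $\{\geq 3\}$. There is no real obstacle here; the argument is a short counting exercise, and in fact the bound is not tight, as $L \geq H + 2$ would give the stronger inequality $2L + D \geq e + 2$.
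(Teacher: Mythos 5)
Your proof is correct, and it takes a genuinely different route from the paper's. The paper argues by induction on the number of vertices: it peels off a leaf, applies the inductive hypothesis to the smaller tree, and then tracks how the quantity $2n_1 + n_2$ and the edge count each change when the leaf is reattached. You instead give a direct degree-counting argument via the handshake lemma: classifying vertices into $L$ leaves, $D$ degree-$2$ vertices, and $H$ vertices of degree at least $3$, you extract $L \geq H + 2$ from $\sum_v \deg(v) = 2(n-1) \geq L + 2D + 3H$, and observe that the desired inequality reduces to $L \geq H$. Your argument is shorter, avoids the bookkeeping of an induction, and also exposes the slack in the bound, whereas the paper's inductive approach is self-contained. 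One minor slip in your closing remark: since $2L + D - e = L - H + 1$, the bound $L \geq H + 2$ actually yields $2L + D \geq e + 3$, not merely $e + 2$; this is only a parenthetical aside and does not affect the validity of the proof.
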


\begin{proof}
Define $n_1(T')$ to be the number of leaves in $T'$, define $n_2(T')$ to be the number of degree-$2$ vertices in $T'$, and define $n_e(T')$ to be the number of edges in $T'$. Then we wish to show that for any tree $T'$ with at least $2$ vertices, $2n_1(T') + n_2(T') > n_e(T')$. We will prove this by induction on the number of vertices in $T'$.

First consider the base case: if $X'$ is a tree on $2$ vertices, then $X'$ contains exactly one edge (between its two vertices) so the number of leaves of $X'$ is $2$, the number of degree-$2$ vertices is $0$ and the number of edges is $1$. We see then that $2n_1(X') + n_2(X') = 2\times2+0 = 4 > 1 = n_e(X')$ as desired.

Next suppose that for any tree $X$ on $i-1$ vertices it is the case that $2n_1(X) + n_2(X) > n_e(X)$. Let $X'$ be any tree on $i$ vertices and let $v$ be a leaf of $X'$. The graph $X' - \{v\}$ is a tree with $i-1$ vertices, so we can apply the inductive hypothesis: $2n_1(X' - \{v\}) + n_2(X' - \{v\}) > n_e(X' - \{v\})$.

Let $u$ be the sole neighbor of $v$ in $X$. The value $2n_1 + n_2$ (twice the number of leaves plus the number of degree-$2$ vertices) changes as we go from $X' - \{v\}$ to $X'$ due to the change in degree of $u$ and the addition of $v$. Adding a neighbor to $u$ either converts $u$ from being a leaf to a degree-$2$ vertex, converts $u$ from being a degree-$2$ vertex to a degree-$3$ vertex, or converts $u$ from being a vertex of degree $n > 2$ to a vertex of degree $n+1$. In all cases, the value $2n_1+n_2$ decreases by at most one due to the change in degree of $u$. The addition of the new leaf $v$, on the other hand, increases this value by $2$. Thus the overall increase of the value $2n_1+n_2$ when going from tree $X' - \{v\}$ to tree $X'$ is at least $1$. In other words, we have that $2n_1(X') + n_2(X') \ge 2n_1(X' - \{v\}) + n_2(X' - \{v\}) + 1$. Note also that the number of edges in $X'$ is one more than the number of edges in $X' - \{v\}$ (i.e. $n_e(X') = n_e(X' - \{v\}) + 1$).

Putting this all together, we see that 
$$2n_1(X') + n_2(X') \ge 2n_1(X' - \{v\}) + n_2(X' - \{v\}) + 1 > n_e(X' - \{v\}) + 1 = n_e(X').$$

As desired, we have shown for any tree $X'$ with $i$ vertices that $2n_1(X') + n_2(X') > n_e(X')$ concluding the inductive step. By induction, we have shown that for any tree $T'$ with at least $2$ vertices, $2n_1(T') + n_2(T') > n_e(T')$.
\end{proof}

\begin{lemma}
The score of $T$ is positive.
\end{lemma}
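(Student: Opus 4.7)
My plan is to lower-bound the score of $T$ by summing per-vertex contributions and then apply the tree lemma just proved to the subtree induced by the non-leaves of $T$. Since every non-leaf of $T$ comes from an unbroken vertex of $G$ (and breaking a vertex does not change the degree of any other vertex), every non-leaf of $T$ has degree at least $5$. First observe that the subgraph $T_I$ of $T$ induced on its non-leaves is itself a tree: for any two non-leaves, every internal vertex on the $T$-path between them has degree at least $2$ in $T$, so it is also a non-leaf. Write $N = |V(T_I)|$, and for each non-leaf $x$ let $k_x$ be the degree of $x$ in $T_I$ and $\ell_x = d_x - k_x$ the number of leaf-neighbors of $x$ in $T$, so $\ell_x + k_x \ge 5$. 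The case $N = 1$ is immediate: $T$ is a star whose center has degree $\ge 5$, giving a single bundle of size $d_x$ and score $d_x - 1 \ge 4$. So I may assume $N \ge 2$, where $k_x \ge 1$ for every non-leaf.

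Next I lower-bound the score contribution of each non-leaf $x$ as a function of $k_x$. The edges from $x$ to its leaf-neighbors in $T$ partition cyclically into $k_x$ arcs (one between each pair of cyclically consecutive non-leaf edges at $x$), and each non-empty arc is exactly one bundle. A direct case analysis on how $\ell_x$ leaves distribute among $k_x$ arcs yields three bounds: when $k_x = 1$, the unique arc contains all $\ell_x = d_x - 1 \ge 4$ leaves and contributes $\ell_x - 1 \ge 3$; when $k_x = 2$, $\ell_x \ge 3$ and the worst split $(1, \ell_x - 1)$ contributes $-1 + (\ell_x - 2) = \ell_x - 3 \ge 0$; and when $k_x \ge 3$, optimizing over all distributions of $\ell_x$ leaves into $k_x$ arcs gives a minimum contribution of $-k_x$, attained when $\ell_x = k_x$ and each arc is a singleton.

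Now let $L_I$, $N_I^{=2}$, and $N_I^{\ge 3}$ denote the number of $T_I$-vertices with $k_x = 1$, $k_x = 2$, and $k_x \ge 3$ respectively. Summing the three per-vertex bounds and using the handshake identity $\sum_x k_x = 2(N-1)$ to rewrite $\sum_{k_x \ge 3} k_x = 2(N-1) - L_I - 2 N_I^{=2}$ yields
\begin{equation*}
\mathrm{score}(T) \;\ge\; 3 L_I - \sum_{k_x \ge 3} k_x \;=\; 4 L_I + 2 N_I^{=2} - 2(N - 1).
\end{equation*}
Since $T_I$ is a tree with $N \ge 2$ vertices, the tree lemma just proved gives $2 L_I + N_I^{=2} > N - 1$; doubling this inequality shows that the right-hand side above is strictly positive, which yields the desired conclusion.

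The main obstacle is the case $k_x \ge 3$: one must verify by a careful optimization that no distribution of the $\ell_x \ge 5 - k_x$ leaves into $k_x$ arcs can drive the contribution at $x$ below $-k_x$, even though individual singleton arcs each contribute $-1$. Once that per-vertex bound is established, the rest is the direct calculation above combining the sum with the tree lemma.
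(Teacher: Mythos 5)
Your proposal is correct and follows essentially the same route as the paper: form the tree on the non-leaves of $T$, lower-bound each non-leaf's score by casework on its degree in that subtree (yielding the identical per-vertex bounds of $3$, $0$, and $-k_x$ for $k_x = 1$, $2$, and $\ge 3$), sum using the handshake identity, and close with the tree lemma. The only cosmetic difference is that the paper first establishes that the non-leaf subtree has at least $6$ vertices, whereas you handle $N=1$ as a separate base case.
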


\begin{proof}
Every vertex in $T$ is either a vertex originally in $G$ or a new vertex created by the breaking of some vertex in $S$. Vertices in $G$ have degree at least $5$ and vertices created by the breaking of a vertex have degree $1$. Thus every vertex in $T$ that is not a leaf is a vertex originally in $G$. Let $T'$ be the tree formed by removing every leaf from $T$. Notice that the vertices in $T'$ are exactly the vertices in $G \setminus S$.

Since every tree has at least one vertex of degree at most $1$ and $G$ does not, we know that $G$ is not a tree. Thus $G \ne T$ and so $S \ne \emptyset$. Consider any vertex $v \in S$; $v$ has at least $6$ neighbors, none of which can be in $S$ (since then breaking $S$ disconnects the graph). Thus the number of vertices in $T'$ is at least $6$.

Below, we will show that the score of $T$ is 
$$-2(\text{number of edges in}~T') + 4(\text{number of leaves of}~T') + 2(\text{number of degree-$2$ vertices in}~T').$$
But since $T'$ is a tree with at least $6$ vertices, the previous lemma applies to show that $$2(\text{number of leaves of}~T') + (\text{number of degree-$2$ vertices in}~T') > (\text{number of edges in}~T').$$ 
Simply rearranging (and doubling) this inequality immediately shows that the score of $T$ is positive. Thus, all that is left is to show that the expression given above for the score of $T$ is correct.

The score of $T$ is the sum over all vertices $x$ in $T$ of the score of $x$. If $x$ is not in $T'$, then $x$ is a leaf of $T$. If $x$ has a degree-$1$ neighbor in $T$, then the connected component of $x$ in $T$ would consist entirely of just those two vertices and as a result, $T$ would have no non-leaf nodes. This cannot be the case since $|T'| \ge 6$. Thus, $x$ has no degree-$1$ neighbors, and therefore there are no bundles at $x$. As a result, the score of $x$ is $0$. Thus, the score of $T$ is the sum over all vertices $x$ in $T'$ of the score of $x$.

For $x \in T'$, define $d(x)$ to be the degree of $x$ in tree $T'$. For any vertex $x \in T'$, we can lower-bound the score of $x$ using casework:
\begin{itemize}
\item $d(x) = 0$. This would imply that $x$ is the only vertex in $T'$, directly contradicting the fact that $|T'| \ge 6$. Thus, we can conclude that this case is impossible.
\item $d(x) = 1$. The vertex $x$ in $T'$ has exactly one edge in $T$ leading to a non-leaf neighbor. This necessarily implies that all the other edges incident on $x$ form a bundle. Notice that $x$ has degree at least $5$ in $T$ since it is a vertex that was originally in $G$. Thus, the one bundle at $x$ has size at least $4$. The score of this bundle is then at least $3$, and so the score of $x$ is also at least $3 = 4 - d(x)$.
\item $d(x) = 2$. The vertex $x$ has exactly two edges in $T$ leading to non-leaf neighbors. These two edges separate all of the other edges incident on $x$ into at most two bundles. Notice that $x$ has degree at least $5$ in $T$ since it is a vertex that was originally in $G$. Thus there are at least $3$ edges in the (at most) two bundles at $x$. If there is one bundle, then the bundle has size at least $3$ and score at least $2$. If there are two bundles, then the total size of the two bundles is at least $3$, implying that the minimum possible total score of the two bundles is $0$ (which occurs in the case that one bundle has size $1$ and the other has size $2$). In all cases, the score of $x$ is at least $0 = 2-d(x)$.
\item $d(x) > 2$. The vertex $x$ has exactly $d(x)$ edges leading to non-leaf neighbors. These $d(x)$ edges separate all of the other edges incident on $x$ into at most $d(x)$ bundles. Each bundle has a score of at least $-1$, so $x$ has a score of at least $-d(x)$.
\end{itemize}

If we use $\mathbf{1}_{X}$ to represent the indicator function (which outputs $1$ if $X$ is true and $0$ otherwise), then the above results can be summarized as follows: the score of $x \in T'$ is bounded below by $(-d(x))\times\mathbf{1}_{d(x) > 2}+(4-d(x)) \times \mathbf{1}_{d(x) = 1} + (2-d(x)) \times \mathbf{1}_{d(x) = 2}$. Equivalently, we have that the score of $x \in T'$ is bounded below by $-d(x)+4 \times \mathbf{1}_{d(x) = 1} + 2 \times \mathbf{1}_{d(x) = 2}$. 

Adding this up, we see that the score of $T$ is at least
$$\sum_{x \in T'}\left(-d(x)+4\times \mathbf{1}_{d(x) = 1} + 2\times \mathbf{1}_{d(x) = 2}\right) = -\sum_{x \in T'}d(x)+4 \sum_{x \in T'}\mathbf{1}_{d(x) = 1} + 2\sum_{x \in T'}\mathbf{1}_{d(x) = 2}.$$
Since $\sum_{x \in T'}d(x)$ is the total degree of vertices in tree $T'$, this value is twice the total number of edges in $T'$. The terms $\sum_{x \in T'}\mathbf{1}_{d(x) = 1}$ and $\sum_{x \in T'}\mathbf{1}_{d(x) = 2}$ are the number of leaves and number of degree-$2$ vertices in $T'$.

Thus the score of $T$ is at least 
$$-2(\text{number of edges in}~T') + 4(\text{number of leaves of}~T') + 2(\text{number of degree-$2$ vertices in}~T').$$
As argued above, this implies our desired result: that the score of $T$ is positive.
\end{proof}

Next, we proceed to the other side of the contradiction: showing that the score of $T$ is non-positive.

To begin, we define an ordering $s_1, \ldots, s_{|S|}$ of the vertices in $S$ as follows:

\begin{definition}
\label{def:break_order}
Let $G_0 = G$. Then for $i = 1, \ldots, |S|$, define $s_i$ and $G_i$ as follows: Let $s_i$ be any vertex of $S$ that is on the boundary of the external face of $G_{i-1}$ and let $G_i$ be $G_{i-1}$ with vertex $s_i$ broken.
\end{definition}

\begin{lemma}
Definition~\ref{def:break_order} is well defined.
\end{lemma}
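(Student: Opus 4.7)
The plan is to show that at each stage $i \in \{1, \ldots, |S|\}$, at least one vertex of the remaining set $S_i := S \setminus \{s_1, \ldots, s_{i-1}\}$ lies on the boundary of the outer face of $G_{i-1}$, so that the choice of $s_i$ prescribed by the definition is always possible. (For $i > |S|$ no choice is needed, and for $i = 1$ the remaining set is all of $S$.)

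First I would establish two structural facts about $G_{i-1}$. For connectivity, I use that vertex-breaking can only split connected components, never merge them; since $T = G_{|S|}$ is a tree and hence connected, every intermediate $G_j$ is connected as well. For the existence of a cycle, I first observe that every vertex in $S_i$ retains its original $G$-degree of at least $6$ in $G_{i-1}$, because breaking a vertex $v$ only reroutes $v$'s incident edges to new degree-$1$ vertices and does not alter the degree of $v$'s neighbors. If $G_{i-1}$ were a tree, then breaking $s_i$ (a non-leaf, since its degree is at least $2$) would disconnect $G_i$, contradicting the connectivity of $G_{|S|}$. Thus $G_{i-1}$ is connected but not a tree, so it contains a simple cycle.

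Next I combine two observations. First, every simple cycle of $G_{i-1}$ must contain a vertex of $S_i$: otherwise the cycle would persist after breaking the remaining $S$-vertices and hence appear in $T$, contradicting that $T$ is a tree. Second, the boundary walk of the outer face of $G_{i-1}$ contains a simple cycle of $G_{i-1}$ as a subgraph. Granting both, this simple cycle contains a vertex of $S_i$, and that vertex lies on the outer-face boundary and can be taken as $s_i$.

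The main obstacle is justifying the second observation, which is a topological statement about connected planar embedded graphs. My plan is to use the block decomposition of $G_{i-1}$: since $G_{i-1}$ contains a cycle, it has at least one $2$-connected block $B_0$. In the planar embedding of $G_{i-1}$ restricted to $B_0$, let $F_{B_0}$ be the face of $B_0$ that contains the outer face $F$ of $G_{i-1}$. Because $B_0$ is $2$-connected, the boundary of $F_{B_0}$ in $B_0$ is a simple cycle $C_{B_0}$. I then argue that every edge of $C_{B_0}$ lies on the boundary walk of $F$: any additional $G_{i-1}$-structure embedded inside $F_{B_0}$ can only be attached to $B_0$ at cut vertices, hence only at endpoints of edges of $C_{B_0}$ and never along their interiors. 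Consequently the interior of each edge of $C_{B_0}$ is adjacent to $F$ on its $F_{B_0}$-side, placing the entire simple cycle $C_{B_0}$ in the boundary walk of $F$, which is what I need to complete the proof.
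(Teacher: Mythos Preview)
Your strategy---show that the outer-face boundary of $G_{i-1}$ contains a simple cycle, which then must meet $S_i$---matches the paper's, though you reach it via block decomposition while the paper argues topologically (taking a connected component of the complement of the external face and observing that its boundary is a cycle lying on $\partial F$).

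There is, however, a genuine gap in your block step. You pick an \emph{arbitrary} $2$-connected block $B_0$, but the implication ``other structure attached only at cut vertices $\Rightarrow$ edges of $C_{B_0}$ are adjacent to $F$'' is false in general. Counterexample: let $G_{i-1}$ be a square $B_1=abcd$ with a triangle $B_0$ attached at the cut vertex $a$ and embedded \emph{inside} the square. The face $F_{B_0}$ of $B_0$ containing $F$ is the unbounded face of the triangle, and the rest of $G_{i-1}$ (namely $B_1$) is indeed attached to $B_0$ only at the single vertex $a$; yet on the $F_{B_0}$-side every triangle edge is adjacent to the bounded interior of the square, not to $F$. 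A block attached at a single cut vertex can still encircle $B_0$. The fix is to choose $B_0$ more carefully: take a $2$-connected block that actually contributes an edge to $\partial F$. Such a block exists because $G_{i-1}$ has a bounded face, so some edge on $\partial F$ separates $F$ from a bounded face; that edge is not a bridge, hence lies in a $2$-connected block, and for \emph{that} block your argument goes through. (Minor separate point: in your ``not a tree'' step you invoke $s_i$ before it is chosen; you mean any vertex of the nonempty set $S_i$.)
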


\begin{proof}
Notice that once a vertex is broken, it is no longer in the graph. Thus, it is impossible for the procedure given in Definition~\ref{def:break_order} to assign some element of $S$ to be both $s_i$ and $s_j$ for $i \ne j$.

With that said, in order to conclude that Definition~\ref{def:break_order} is well defined, it is sufficient to show that at each step, a choice of $s_i$ satisfying the conditions given in the definition is possible.

Fix $i \in \{1, \ldots, |S|\}$. Notice that $G_{i-1}$ is not a tree since that would mean that breaking a proper subset $\{s_1, \ldots, s_{i-1}\}$ of $S$ in $G$ yields a tree (in which case breaking the rest of $S$ would disconnect $G$). From this, we can conclude that $G_{i-1}$ has both an external face and at least one internal face. 

The points not in the external face form some set of connected regions; furthermore, this set is not empty since there is at least one internal face. Let $R$ be any such connected region. The boundary of $R$ must consist of a cycle of edges separating the external face from the internal faces inside $R$. This cycle of edges must contain at least one vertex of $S$ since otherwise the cycle would remain in $T$ after breaking every vertex of $S$ in $G$. Furthermore, since this cycle seperates the external face from an internal face, the cycle must be part of the boundary of the external face. 

Thus, when Definition~\ref{def:break_order} says to choose $s_i$ to be any vertex of $S$ that is on the boundary of the external face of $G_{i-1}$, this is well defined.
\end{proof}

\begin{definition}
We say that edge $e$ is an \emph{external edge} in $G_i$ if $e$ is an edge of $G_i$ with the external face on both sides. We say that $e$ is a \emph{boundary edge} of $G_i$ if the external face is on exactly one side of $e$. Finally, we say that $e$ is an \emph{internal edge} of $G_i$ if the external face is on neither side of $e$.  
\end{definition}

\begin{lemma}
Consider any edge $e$ incident on $s_i$ in graph $G_{i-1}$ and let $x$ be the other endpoint. When converting $G_{i-1}$ into $G_i$ by breaking $s_i$, 
\begin{itemize}
\item if $e$ is a boundary edge, then $e$ either joins one previously existing bundle at $x$ or becomes a new bundle at $x$ of size $1$.
\item if $e$ is an internal edge, then $e$ becomes a new bundle at $x$ of size $1$.
\end{itemize}
\end{lemma}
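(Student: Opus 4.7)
The plan is to exploit the simple graph hypothesis together with a face-structure invariant to pin down the bundles created at $x$. Since $G$ is a simple graph, $x$ has at most one edge to $s_i$, namely $e$, so $e$ is the only edge incident on $x$ whose other endpoint becomes degree $1$ in passing from $G_{i-1}$ to $G_i$. In particular, if $e^L$ and $e^R$ denote the two edges neighboring $e$ in the cyclic order around $x$ (which is unchanged by breaking $s_i$, since only $s_i$'s local structure is modified), then whether $e^L$ or $e^R$ is an edge to a degree-$1$ vertex is the same in $G_{i-1}$ and in $G_i$.

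The key ingredient is a ``leaf-is-in-the-external-face'' invariant, proved by induction on $i$: every degree-$1$ vertex of $G_{i-1}$ lies in the external face of $G_{i-1}$. The base case $i=1$ is vacuous because $G_0 = G$ has minimum degree at least $5$; the inductive step holds because $s_i$ is chosen on the external face, so the new degree-$1$ vertices $s_i'$ created by breaking $s_i$ land in the external face of $G_i$, while pre-existing leaves stay there (the external face only grows). Because a leaf edge $(x,z)$ with $\deg(z)=1$ has both of its sides in the same face (namely the one containing $z$), this invariant implies that for any bundle edge at $x$ in $G_{i-1}$, both of the local face-sides of that edge at $x$ belong to the external face of $G_{i-1}$.

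Now I would run the two cases. If $e$ is an internal edge of $G_{i-1}$, both local face-sides of $e$ at $x$ are internal faces. Were $e^L$ a bundle edge, the face-side of $e^L$ on the $e$-side at $x$ would have to be the external face by the previous paragraph; but that face-side coincides with the face-side of $e$ on the $e^L$-side, which is internal — contradiction. The same argument applies to $e^R$. Hence neither neighbor of $e$ in the cyclic order at $x$ is a bundle edge, and the newly eligible edge $e$ (now ending at the degree-$1$ vertex $s_i'$) constitutes its own maximal contiguous block of leaf edges, i.e.\ a new bundle of size $1$.

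For the boundary case, say the external face lies on one designated side of $e$ at $x$ in $G_{i-1}$ and an internal face $F$ on the other. The argument above rules out the neighbor of $e$ on the $F$-side from being a bundle edge. The neighbor on the external-face side, however, has its $e$-adjacent face-side equal to the external face, which is fully compatible with being a bundle edge; so if that neighbor is a bundle edge, then after breaking $s_i$ the edge $e$ extends that bundle by one, and otherwise $e$ again forms a bundle of size $1$ on its own. The main obstacle is conceptual rather than computational: one must argue cleanly that a leaf edge at $x$ has both face-sides at $x$ belonging to the same global face, and match that face against the invariant. Once this topological bookkeeping is in place, the case analysis is routine.
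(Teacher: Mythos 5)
Your proof is correct and takes essentially the same approach as the paper's. Both arguments hinge on the same inductive invariant (every degree-$1$ vertex lies in the external face), and both then translate the internal/boundary classification of $e$ into a constraint on which face-sides at $x$ can host bundle edges; your formulation in terms of the immediate cyclic neighbors $e^L, e^R$ is a minor repackaging of the paper's use of the arcs from $e$ to the first non-leaf edges $e_+, e_-$, and the two are interchangeable for deciding whether $e$ merges into an existing bundle.
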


\begin{proof}
We begin with a proof by induction that every degree-$1$ vertex in $G_i$ is inside the external face for $i \in \{0, \ldots, |S|\}$. Certainly, it is the case that every degree-$1$ vertex in $G = G_0$ is inside the external face since $G$ has no degree-$1$ vertices. Then suppose all the degree-$1$ vertices of $G_{i-1}$ are inside the external face of $G_{i-1}$ for some $i$. We obtain $G_i$ from $G_{i-1}$ by breaking vertex $s_i$ on the boundary of the external face of $G_{i-1}$. As a result, the external face in $G_i$ is equal to the union of the faces touching $s_i$ in $G_{i-1}$ (including the external face). Thus, every degree-$1$ vertex that was inside the external face of $G_{i-1}$ is still inside the external face. Furthermore, every new degree-$1$ vertex (formed by breaking $s_i$) is also created inside the external face of $G_i$. By induction, we have our desired result: for $i \in \{0, \ldots, |S|\}$, every degree-$1$ vertex in $G_i$ is inside the external face.

Notice that $x$ must have at least one neighbor other than $s_i$ that has degree more than $1$ in $G_{i-1}$ because otherwise breaking $s_i$ would disconnect $x$ and its neighbors from the rest of the graph. This means that if we start at edge $e$ and go clockwise around $x$, we will eventually encounter some edge leading to a neighbor with degree more than $1$. Call this edge $e_+$. Similarly, starting at edge $e$ and going counterclockwise around $x$, we can let the first encountered edge leading to a neighbor with degree more than $1$ be $e_-$. There may or may not be a bundle at $x$ between $e$ and $e_+$. Similarly there may or may not be a bundle at $x$ between $e$ and $e_-$. In any case, when $s_i$ is broken, these bundles, if they exist, are merged, and edge $e$ is added to the one resulting bundle. 

Suppose that there is a bundle between $e$ and $e_+$. This means that there are degree-$1$ vertices in the face that is clockwise from $e$ around $x$. Then since degree-$1$ vertices are always found inside the external face, we can conclude that the face on that side of $e$ is the external face of $G_{i-1}$. Similarly, we can use the same logic to show that if there is a bundle between $e$ and $e_-$, then the region on the other side of $e$ is the external face of $G_{i-1}$. These results are shown in Figure~\ref{figure:bundle_possibilities}.

\begin{figure}[!htb]
    \centering
    \includegraphics[scale=.5]{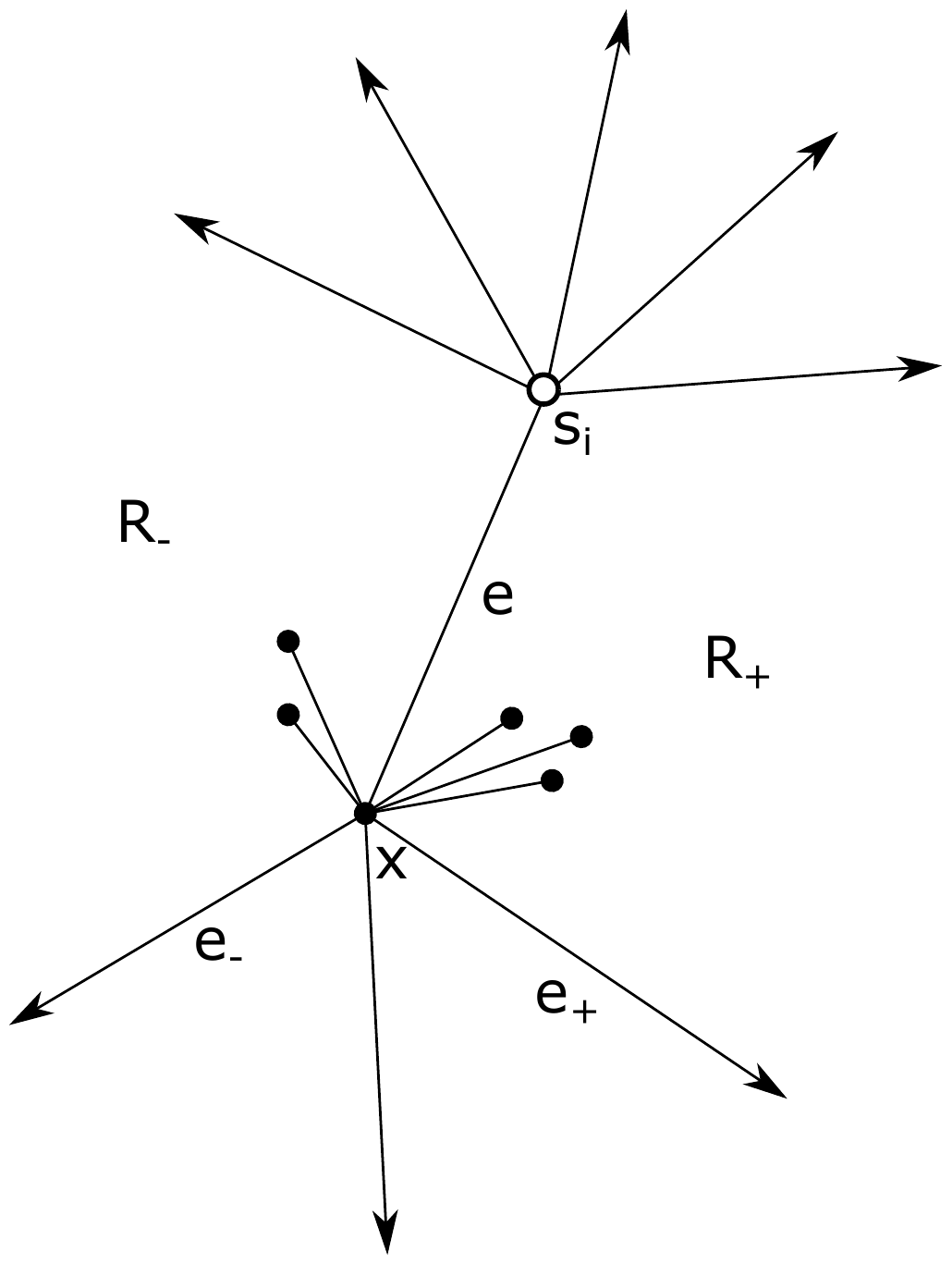}
    \caption{This figure shows vertices $s_i$ and $x$ with edge $e$ between them and edges $e_-$ and $e_+$ at $x$ as defined in the proof. Regions $R_+$ and $R_-$ are the regions on the two sides of $e$. The bundle shown in region $R_+$ at $x$ can be present only if $R_+$ is the external face. Similarly, the bundle shown in region $R_-$ at $x$ can be present only if $R_-$ is the external face.}
    \label{figure:bundle_possibilities}
\end{figure}

Next, consider the two cases in which edge $e$ is a boundary or internal edge. If $e$ is an internal edge, then it has the external face on zero sides, and so neither the bundle at $x$ between $e$ and $e_+$ nor the bundle at $x$ between $e$ and $e_-$ can exist. Then when $s_i$ is broken, edge $e$ forms a new bundle of size $1$ at $x$. If $e$ is a boundary edge, then it has the external face on exactly one side. In this case, at most one of the two bundles at $x$ (between $e$ and $e_+$ or $e$ and $e_-$) can exist. Then when $s_i$ is broken, edge $e$ either forms a new bundle of size $1$ (if neither bundle existed) or is added to a previously existing bundle at $x$ (if one of the two bundles existed). This is exactly what we wished to show.
\end{proof}

\begin{lemma}
When going from $G_{i-1}$ to $G_i$, the score cannot increase.
\end{lemma}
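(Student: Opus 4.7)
The plan is to compute $\text{score}(G_i) - \text{score}(G_{i-1})$ and show it is non-positive. Let $d = \deg_{G_{i-1}}(s_i) \ge 6$, since $s_i$ is a breakable vertex. Breaking $s_i$ affects the score only through (i) the bundles at $s_i$ itself, which vanish, and (ii) the bundle structure at each neighbor $x$ of $s_i$, because the edge $(s_i,x)$ now terminates at a fresh degree-$1$ vertex. I would classify the $d$ edges incident to $s_i$ as boundary, internal, or external, and process each contribution separately using the previous lemma.

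The key structural input is that $s_i$ lies on the boundary of the external face of $G_{i-1}$, so at most two of its incident edges are boundary edges, corresponding to the two sides of the face's boundary walk at $s_i$; the remaining $d-2 \ge 4$ edges are internal (external edges arise only when $s_i$ has a degree-$1$ neighbor and will be bundled with the $s_i$-side bookkeeping below). By the previous lemma, each internal edge contributes exactly $-1$ to the score change (spawning a new size-$1$ bundle at the far endpoint), while each boundary edge contributes at most $+2$: the maximum occurs when a pre-existing size-$1$ bundle (score $-1$) absorbs the new degree-$1$ endpoint to become a size-$2$ bundle (score $+1$), while the alternative of forming a new size-$1$ bundle contributes $-1$. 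Summing these $d$ contributions gives an upper bound of $2 \cdot 2 - (d-2) = 6 - d \le 0$ on the score change.

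The remaining bookkeeping concerns the bundles that vanish at $s_i$ itself, together with the external edges they contain. Any bundle at $s_i$ consists of edges to degree-$1$ neighbors of $s_i$, all lying inside the external face, so such edges are external rather than internal or boundary. Pairing each such edge with its degree-$1$ neighbor $y$ and the new dangling vertex $s_i'$ produced at the $s_i$ side, the net effect per isolated edge is: one size-$1$ bundle vanishes at $s_i$ (score change $+1$), while two new size-$1$ bundles appear at $y$ and at $s_i'$ (score change $-2$), for a per-edge net of $-1$; bundles at $s_i$ of larger size only tighten the bound further. Combining with the earlier estimate shows $\text{score}(G_i) - \text{score}(G_{i-1}) \le 0$. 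The main obstacle will be rigorously verifying the ``at most two boundary edges'' structural claim in full generality, since the external face's boundary walk is not literally a simple cycle in the presence of degree-$1$ leaves or cut vertices, and this cap on positive contributions is what drives the entire argument.
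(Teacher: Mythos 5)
Your bookkeeping is correct and your high-level plan matches the paper's, but two pieces of the argument remain as genuine gaps, and both hinge on the same fact your proposal never invokes: that $G_i$ is connected. (This follows because $T$ is a tree obtained from $G_i$ by further breaking, and vertex-breaking can never reconnect a disconnected multigraph.) First, your ``at most two boundary edges'' claim is the load-bearing step you rightly flag as the main obstacle, and it is not a direct consequence of $s_i$ lying on the external face---the boundary walk need not be a simple cycle and the external face could a priori abut $s_i$ along several arcs. The paper closes this with a short argument: let $e_0, e_1$ be the two edges bordering the external face at $s_i$ with other endpoints $x_0, x_1$; since $G_i$ is connected there is an $x_0$--$x_1$ path in $G_i$ (hence in $G_{i-1}$) that avoids degree-$1$ vertices and thus avoids $s_i$; concatenating this with $x_0, s_i, x_1$ gives a cycle containing $e_0, e_1$, and since the external face cannot lie in its interior, every other face around $s_i$ must, so every other incident edge is internal.

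Second, your treatment of external edges incident to $s_i$ silently assumes they all lead to degree-$1$ neighbors. But an edge with the external face on both sides is exactly a bridge, and a priori a bridge at $s_i$ could lead into a large subgraph rather than a single leaf---in which case your per-edge score accounting at $y$ and $s_i'$ does not apply. Ruling this out again needs connectivity of $G_i$ (such a bridge would disconnect $G_i$). Once you have that, though, the paper's route is cleaner: it shows $s_i$ cannot have \emph{any} degree-$1$ neighbor (two degree-$1$ vertices in $G_i$ would form their own component), so there are no bundles at $s_i$, no external edges at $s_i$, and the new degree-$1$ vertices created by breaking $s_i$ all have score $0$; the entire external-edge case then vanishes and you are left with exactly the ``$2$ boundary plus $\ge 4$ internal'' count that gives $\le 4 - (d-2) \le 0$. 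Your calculation for the external case is not wrong when it applies, but it is extra work that the connectivity observation makes unnecessary, and without that observation you cannot justify the case split in the first place.
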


\begin{proof}
The only vertices in $G_i$ that are not in $G_{i-1}$ are the degree-$1$ vertices which replace $s_i$ when it is broken. The only vertex in $G_{i-1}$ that is not in $G_i$ is $s_i$. We claim that the neighbors of all of these vertices have degree not equal to $1$, and therefore that each of these vertices has no bundles and thus a score of $0$. Suppose for the sake of contradiction that some one of these vertices which are present in one graph but not the other has a neighbor of degree $1$. This means that either a degree-$1$ vertex in $G_i$ or $s_i$ in $G_{i-1}$ has a degree-$1$ neighbor. If $s_i$ has a degree-$1$ neighbor, then when it is broken, one of the degree-$1$ vertices replacing it inherits that neighbor. Thus in all cases, some two degree-$1$ vertices in $G_i$ are neighbors. These two vertices form a connected component, implying that $G_i$ is not connected. This contradicts the fact that $S$ is a solution to the TRVB instance, so as desired, none of the vertices in question have degree-$1$ neighbors and so all of these vertices have score $0$.

Thus, since every vertex in exactly one of $G_{i-1}$ and $G_i$ has a score of $0$, the difference in score between $G_{i-1}$ and $G_i$ is equal to the cumulative difference in score over all vertices that are in both of these graphs. Suppose $x$ is a vertex of both graphs that does not neighbor $s_i$. Breaking $s_i$ does not affect $x$ or the degrees of the neighbors of $x$. Thus the bundles at $x$ remain the same in $G_{i-1}$ and $G_{i}$. In other words, there is no change in score at vertex $x$.

Then to compute the difference in scores between $G_{i-1}$ and $G_i$ we must simply compute the cumulative difference in scores between $G_{i-1}$ and $G_i$ of vertices $x$ that neighbor $s_i$. If the edge between $s_i$ and $x$ is an internal edge, then by the previous lemma, the change in bundles at $x$ between $G_{i-1}$ and $G_i$ is that a new bundle of size $1$ is added. This decreases the score of $x$ by $1$. If the edge between $s_i$ and $x$ is a boundary edge, then by the previous lemma, the change in bundles at $x$ between $G_{i-1}$ and $G_i$ is either that a new bundle of size $1$ is added or that the size of some one bundle is increased by $1$. Thus, the score of $x$ either decreases by $1$ (if a new bundle is added), increases by $2$ (if a bundle of size $1$ becomes a bundle of size $2$), or increases by $1$ (if a bundle of size at least $2$ increases in size by $1$). Below, we will show that exactly two of the edges incident on $s_i$ in $G_{i-1}$ are boundary edges and that the rest are internal edges. Since the degree of $s_i$ is at least $6$, this means that exactly two neighbors of $x$ will have their score increase by at most $2$ and that all the other neighbors (of which there are at least $4$) will have their score decrease by $1$. In other words, when going from $G_{i-1}$ to $G_i$, the score cannot increase, which is exactly the statement of this lemma.

All that's left is to show that exactly two of the edges incident on $s_i$ in $G_{i-1}$ are boundary edges and that the rest are internal edges. 

Consider the faces which touch at $s_i$ in $G_{i-1}$. Since $s_i$ was chosen to be on the external face, one of these faces is the external face. Let this face be $F_0$, and let the other faces clockwise around $s_i$ be $F_1, F_2, \ldots F_{d-1}$ where $d$ is the degree of $s_i$ in $G_{i-1}$. Let $e_j$ be the edge separating face $F_j$ from the previous face clockwise around $s_i$. See Figure~\ref{figure:faces_example} for an example. Finally, let $x_j$ be the endpoint of $e_j$ other than $s_i$.

\begin{figure}[!htb]
    \centering
    \includegraphics[scale=.5]{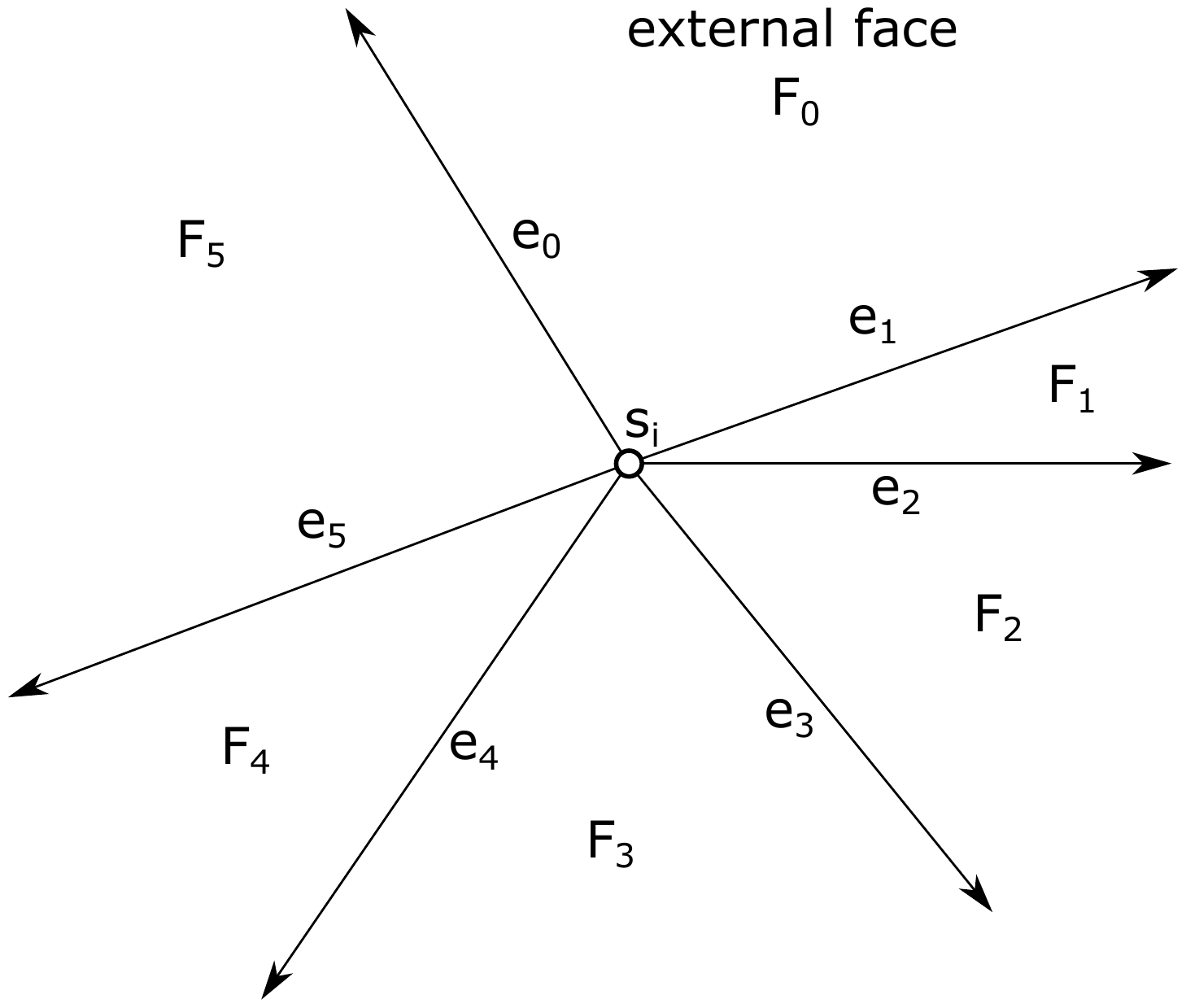}
    \caption{This figure shows an example vertex $s_i$ of $G_{i-1}$ together with the edges and faces meeting at $s_i$.}
    \label{figure:faces_example}
\end{figure}

Since $G_i$ is connected, there exists a path from $x_0$ to $x_1$ in $G_i$. This path will also exist in $G_{i-1}$ since a path cannot use any vertices of degree $1$ and every vertex in $G_i$ whose degree is not $1$ is also in $G_{i-1}$. There is another path in $G_{i-1}$ from $x_0$ to $x_1$: namely the path $x_0, s_i, x_1$. Together, these two paths form a cycle in $G_{i-1}$ including the two edges $e_0$ and $e_1$. Either the faces that are between $e_0$ and $e_1$ clockwise around $s_i$ (in particular face $F_0$) or counterclockwise around $s_i$ (all the other $F_j$s) must be on the interior of this cycle. Since $F_0$ is the external face, we know that it cannot be on the interior of a cycle. Thus each other $F_j$ must be on the interior of the cycle and therefore cannot equal the external face. Notice that edges $e_0$ and $e_1$ have the external face on exactly one side and that each other edge $e_j$ has the external face on neither side. As desired, we have shown that exactly two of the edges incident on $s_i$ in $G_{i-1}$ are boundary edges and that the rest are internal edges. 
\end{proof}

\begin{lemma}
The score of $T$ is not positive.
\end{lemma}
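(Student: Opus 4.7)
The plan is to combine the non-increase lemma just proved with a trivial observation about the initial score to conclude by induction that the score of $T$ is at most zero. Specifically, I will argue the chain of inequalities
\[
\text{score}(T) = \text{score}(G_{|S|}) \le \text{score}(G_{|S|-1}) \le \cdots \le \text{score}(G_1) \le \text{score}(G_0) = \text{score}(G) = 0.
\]

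First I will observe that $\text{score}(G) = 0$. This follows immediately from the fact that $G$ is assumed to have no degree-$1$ vertices (since every breakable vertex has degree at least $6$ and every unbreakable vertex has degree at least $5$). Because a bundle is a set of edges whose far endpoints all have degree $1$, no vertex of $G$ has any bundles, so every vertex contributes score $0$ and hence $\text{score}(G) = 0$.

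Next, I will use the previous lemma iteratively. By the construction of the ordering $s_1, \ldots, s_{|S|}$ in Definition~\ref{def:break_order}, each $G_i$ is obtained from $G_{i-1}$ by breaking a single vertex of $S$ lying on the external face. The previous lemma therefore applies at each step to give $\text{score}(G_i) \le \text{score}(G_{i-1})$. A straightforward induction on $i$ (or a telescoping argument) then yields $\text{score}(G_{|S|}) \le \text{score}(G_0) = 0$. Since by construction $G_{|S|} = T$, we conclude $\text{score}(T) \le 0$, which is exactly the statement that the score of $T$ is not positive.

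There is essentially no obstacle here: all the real work has been done in the two preceding lemmas (that breaking a boundary vertex of $S$ in an appropriate planar state cannot raise the score, and that the ordering is well defined). The only thing to verify carefully is that the previous lemma is indeed applicable at every step, i.e., that $s_i$ is a breakable vertex on the external face of $G_{i-1}$ with degree at least $6$; this is guaranteed because $s_i \in S \subseteq V_B$ and $G$ has all breakable vertex degrees at least $6$, and breaking other vertices earlier does not change the degree of $s_i$. Finally, combining $\text{score}(T) > 0$ from the earlier lemma with $\text{score}(T) \le 0$ here yields the desired contradiction, proving that no such solution $S$ exists and that every instance of Planar Graph $(\{6,7,8,\ldots\}, \{5,6,7,\ldots\})$-TRVB is a ``no'' instance.
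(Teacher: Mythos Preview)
Your proof is correct and follows essentially the same approach as the paper: observe that $G_0 = G$ has no degree-$1$ vertices and hence score $0$, then apply the previous lemma at each step to conclude that $\text{score}(G_{|S|}) = \text{score}(T) \le 0$. The extra remarks about applicability of the lemma and the concluding contradiction are fine but go slightly beyond what this particular lemma requires.
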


\begin{proof}
$G_0 = G$ has no degree-$1$ vertices and therefore has no bundles. Thus, the score of $G_0$ is $0$. By the previous lemma, the score of $G_i$ is non-increasing as a function of $i$. We can immediately conclude that the score of $G_{|S|}$, the graph formed by breaking vertices $s_1, \ldots, s_{|S|}$ is not positive. But $s_1, \ldots, s_{|S|}$ are all the vertices in $S$, so $G_{|S|} = T$ and as desired, the score of $T$ is not positive.
\end{proof}

Notice that we have seen two directly contradictory lemmas: we have shown that the score of $T$ is both positive and not positive. By contradiction, we can conclude that $S$, the solution to Planar Graph $(\{6,7,8,\ldots\}, \{5,6,7,\ldots\})$-TRVB instance $G$, cannot exist. Thus, we have that

\begin{lemma}
Planar Graph $(\{6,7,8,\ldots\}, \{5,6,7,\ldots\})$-TRVB is polynomial-time solvable.
\end{lemma}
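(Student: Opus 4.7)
The plan is to stitch together the two immediately preceding lemmas to obtain a direct contradiction whenever a solution is assumed to exist, from which the problem's triviality will follow. First I would fix an arbitrary input $G$ that satisfies the variant's syntactic constraints (planar simple graph, every breakable vertex of degree at least $6$, every unbreakable vertex of degree at least $5$), and suppose for contradiction that some $S \subseteq V_B$ exists whose breaking converts $G$ into a tree $T$. The lemma stating that the score of $T$ is positive and the lemma stating that the score of $T$ is non-positive cannot both hold simultaneously, so no such $S$ exists.

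Consequently every admissible input is a ``no'' instance. The decision procedure is then to verify in polynomial time that $G$ is a planar simple graph meeting the degree restrictions and to output ``no'' on every such input (using the standard convention for inputs that violate the syntactic constraints of the variant). Since checking planarity, simplicity, and membership of each vertex degree in $\{6,7,8,\ldots\}$ or $\{5,6,7,\ldots\}$ is polynomial-time, the whole procedure runs in polynomial time. Combined with Corollary~\ref{corollary:np_membership}, which places the problem in NP, this gives the claimed polynomial-time solvability.

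The main obstacle in the overall argument was building the scoring framework itself and proving the two scoring lemmas; for this wrap-up there is essentially no obstacle beyond observing that those two lemmas apply uniformly to every admissible input, which is immediate because both were stated under the blanket assumption on $G$ opened at the start of Section~\ref{section:polynomial_2_proof} together with the purely existential hypothesis that a solution $S$ exists.
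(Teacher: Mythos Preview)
Your proposal is correct and follows essentially the same approach as the paper: derive a contradiction from the two scoring lemmas to conclude that every admissible instance is a ``no'' instance, whence the trivial reject-all algorithm runs in polynomial time. The paper's proof is even terser (it simply rejects all inputs without bothering to verify the syntactic constraints, and it does not invoke Corollary~\ref{corollary:np_membership}, which is unnecessary for polynomial-time solvability), but these extra steps in your write-up are harmless.
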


\begin{proof}
We have shown that for any instance of Planar Graph $(\{6,7,8,\ldots\}, \{5,6,7,\ldots\})$-TRVB, the correct answer is ``no.'' Thus rejecting all inputs (a polynomial-time algorithm) solves Planar Graph $(\{6,7,8,\ldots\}, \{5,6,7,\ldots\})$-TRVB.
\end{proof}

From this, we obtain our desired result.

\begin{theorem}
If $b > 5$ for every $b \in B$ and $u > 4$ for every $u \in U$, then Planar Graph $(B,U)$-TRVB can be solved in polynomial time.
\end{theorem}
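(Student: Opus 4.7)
The plan is to observe that this theorem is essentially an immediate corollary of the preceding lemma. The hypothesis ``$b > 5$ for every $b \in B$'' is just the statement that $B \subseteq \{6, 7, 8, \ldots\}$, and similarly ``$u > 4$ for every $u \in U$'' is the statement that $U \subseteq \{5, 6, 7, \ldots\}$. So any valid instance of Planar Graph $(B,U)$-TRVB is simultaneously a valid instance of Planar Graph $(\{6,7,8,\ldots\}, \{5,6,7,\ldots\})$-TRVB.

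First I would invoke Lemma~\ref{lemma:trivial_reductions}, which gives a polynomial-time reduction from Planar Graph $(B,U)$-TRVB to Planar Graph $(B',U')$-TRVB whenever $B \subseteq B'$ and $U \subseteq U'$. Applying this with $B' = \{6,7,8,\ldots\}$ and $U' = \{5,6,7,\ldots\}$ yields a polynomial-time reduction from Planar Graph $(B,U)$-TRVB to the variant just shown polynomial-time solvable. Composing this reduction with the (trivial) algorithm that rejects every input produces a polynomial-time algorithm for Planar Graph $(B,U)$-TRVB.

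Equivalently, and perhaps more transparently, I could skip the formal reduction and argue directly: given any input $G$ to Planar Graph $(B,U)$-TRVB, $G$ is in particular a planar graph whose breakable vertices have degree $\ge 6$ and whose unbreakable vertices have degree $\ge 5$, so by the previous lemma the correct answer on $G$ is ``no.'' Hence the algorithm that always outputs ``no'' is correct and runs in polynomial (indeed constant) time. There is no real obstacle here; the substantive content lies entirely in the preceding lemma, and this theorem is merely the rephrasing of that lemma in terms of the general parameters $B$ and $U$.
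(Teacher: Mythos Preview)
Your proposal is correct and matches the paper's approach exactly: the paper's proof is the single sentence ``This follows immediately from the previous lemma together with Lemma~\ref{lemma:trivial_reductions},'' which is precisely the reduction argument you spell out. Your alternative direct argument is also fine and amounts to unwinding the trivial reduction in this special case.
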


\begin{proof}
This follows immediately from the previous lemma together with Lemma~\ref{lemma:trivial_reductions}.
\end{proof}

\section{TRVB and the Hypergraph Spanning Tree problem}
\label{section:hypergraph}

The overall purpose of this section is to demonstrate the connection between the TRVB problem and the Hypergraph Spanning Tree problem. 

\begin{definition}
A \emph{hypergraph} is a pair $(V, E)$ where every element $e$ of $E$ is a subset of $V$. The elements of $V$ are called \emph{vertices} and the elements of $E$ are called \emph{edges} or \emph{hyperedges}. A vertex $v$ is an \emph{endpoint} of a hyperedge $e$ whenever $v \in e$. The \emph{incidence graph} associated with a hypergraph $(V, E)$ is a bipartite graph whose two parts are $V$ and $E$. In the incidence graph, there is an edge between $e \in E$ and $v \in V$ if and only if $v$ is an endpoint of $e$. A set $S$ of hyperedges in a hypergraph $(V, E)$ is a \emph{hypergraph spanning tree} if the subgraph of the incidence graph induced by vertices $V \cup S$ is a tree (or in other words if removing all other edge nodes from the incidence graph yields a tree). A hypergraph is $r$-regular if every vertex is an endpoint of exactly $r$ edges, and $u$-uniform if every edge has exactly $u$ endpoints.
\end{definition}

\begin{problem}
The \emph{Hypergraph Spanning Tree} problem asks given a hypergraph whether there exists a hypergraph spanning tree in that hypergraph.
\end{problem}

We show how to reduce from the TRVB problem to the Hypergraph Spanning Tree problem. Suppose we are given an instance of TRVB consisting of multigraph $M$ (with vertices labeled as breakable or unbreakable). Let $M'$ be the multigraph produced by repeatedly identifying an edge between two different unbreakable vertices and contracting that edge until no more such edges remain. Clearly, contracting an edge between two unbreakable vertices cannot change the answer to the TRVB problem, so $M'$ has the same answer to TRVB as $M$. Next, we construct $M''$ from $M'$ by inserting a degree-$2$ unbreakable vertex into every edge whose endpoints are both breakable. Again, this is an operation which does not affect the answer to TRVB, so $M''$ has the same answer to TRVB as $M$.

There are two cases: either $M''$ has a self-loop at an unbreakable vertex, or $M''$ is bipartite with breakable and unbreakable vertices as the two parts. In the first case, we can trivially deduce that the answer to TRVB instance $M''$ (and therefore also to TRVB instance $M$) is ``no''. Thus, in that case, we simply output any ``no'' instance of the Hypergraph Spanning Tree problem. In the second case, we construct a hypergraph $H$ whose incidence graph is $M''$ with edges of $H$ represented by breakable vertices and vertices of $H$ represented by unbreakable vertices. Outputting this hypergraph $H$ concludes the reduction. 

\begin{lemma}
  This reduction
  is correct:
  the input $M$ is a ``yes'' instance to TRVB
  if and only if
  the output $H$ has a hypergraph spanning tree.
\end{lemma}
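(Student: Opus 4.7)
The plan is to split along the two cases distinguished by the reduction, first disposing of the self-loop case and then establishing a bijection in the bipartite case.

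If $M''$ has a self-loop at an unbreakable vertex $u$, then this self-loop is a cycle whose only incident vertex $u$ cannot be broken. Consequently the self-loop survives every sequence of vertex-breaking operations, so $M''$ can never become a tree. By the TRVB-answer-preserving transformations already justified in the excerpt (contracting edges between two unbreakable vertices and subdividing edges between two breakable vertices), $M$ is then also a ``no'' instance, matching the trivial ``no'' output of the reduction.

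Suppose instead $M''$ is bipartite with parts $V_U$ (the unbreakable vertices) and $V_B$ (the breakable vertices). For a candidate broken set $S \subseteq V_B$, I would associate the candidate hyperedge set $T = V_B \setminus S$, so that $T$ is precisely the set of unbroken breakable vertices, and these are exactly the hyperedge-nodes of $H$ that are retained. Let $M''_S$ denote the multigraph obtained by breaking $S$ in $M''$, and let $I_T$ denote the subgraph of the incidence graph $M''$ induced by $V_U \cup T$. By definition, $T$ is a hypergraph spanning tree of $H$ iff $I_T$ is a tree, so the main step is to prove that $M''_S$ is a tree iff $I_T$ is a tree. I would argue that $M''_S$ is obtained from $I_T$ by attaching, for each edge of $M''$ incident on some vertex of $S$, exactly one new degree-$1$ leaf at the other endpoint of that edge. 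Since repeated attachment and removal of degree-$1$ leaves preserves both connectedness and acyclicity, $M''_S$ is a tree iff $I_T$ is. Combined with the TRVB-answer equivalence between $M$ and $M''$, this yields the desired biconditional.

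The main subtlety, and the step warranting care, is justifying the ``$I_T$ plus dangling leaves'' description of $M''_S$. Here the bipartiteness of $M''$ is essential: every edge of $M''$ has exactly one endpoint in $V_B$, so no edge of $M''$ has both endpoints in $S$. This ensures each edge incident on $S$ contributes a single fresh leaf hanging off a vertex of $V_U$, rather than producing a pair of disconnected leaves or otherwise letting two broken vertices interact. It also ensures that every edge of $I_T$ (one endpoint in $V_U$, the other in $T \subseteq V_B \setminus S$) survives unchanged in $M''_S$. Once this structural description is in place, the leaf-attachment argument closes both directions of the biconditional.
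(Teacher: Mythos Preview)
Your proposal is correct and follows essentially the same approach as the paper: both arguments identify that breaking the vertices of $S$ in $M''$ is the same as deleting $S$ and then attaching a degree-$1$ leaf for each edge that was incident to $S$, and both use that attaching or removing leaves preserves being a tree. Your write-up is slightly more explicit than the paper's in that you separately dispose of the self-loop case and you call out bipartiteness as the reason no edge has both endpoints in $S$; the paper's proof leaves these points implicit but is otherwise the same argument.
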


\begin{proof}
We wish to show that $H$ has a hypergraph spanning tree if and only if the equivalent instance $M''$ is a ``yes'' instance of TRVB. By definition, $H$ has a hypergraph spanning tree if and only if there exists a set of edges such that removing the vertices corresponding to those edges from the incidence graph of $H$ yields a tree. But the incidence graph of $H$ is $M''$, and the vertices in $M''$ corresponding to edges in $H$ are exactly the breakable vertices. Thus, there exists a hypergraph spanning tree in $H$ if and only if there exists a choice of breakable vertices in $M''$ such that removing those vertices yields a tree. As a result, it is sufficient to show that $M$ is a ``yes'' instance to TRVB if and only if there exists a choice of breakable vertices in $M''$ such that removing those vertices yields a tree. 

Suppose that there exists a set of breakable vertices $S$ in $M''$ such that removing those vertices yields a tree. For every vertex $v$ in $S$, we can add a degree-$1$ neighbor in the resulting tree to every neighbor of $v$ (since the neighbors of $v$ are all unbreakable and therefore occur in this tree). Clearly, adding degree-$1$ neighbors does not change the fact that the result is a tree. Note however, that the resulting tree is exactly the graph that we would obtain if we broke every vertex of $S$ in $M''$: in obtaining this tree, every vertex $v$ in $S$ was removed and then replaced by degree-$1$ neighbors for the neighbors of $v$ (or in other words every vertex $v$ in $S$ was broken). Thus we have shown that if there exists a choice of breakable vertices in $M''$ such that removing those vertices yields a tree then breaking the same vertices also yields a tree.

Next, suppose that there exists a set of breakable vertices $S$ in $M''$ such that breaking those vertices yields a tree. If we remove a set of degree-$1$ vertices from a tree, then we always obtain another tree. Remove the degree-$1$ vertices that were added while breaking the vertices of $S$ from the resulting tree. The result (another tree) is exactly the graph that would be obtained if we were to simply remove the vertices of $S$ from $M''$ instead of breaking them. Thus we have shown that if there exists a choice of breakable vertices in $M''$ such that breaking those vertices yields a tree then removing the same vertices also yields a tree.

We have shown that there exists a choice of breakable vertices in $M''$ such that removing those vertices yields a tree if and only if there exists a choice of breakable vertices in $M''$ such that breaking those vertices yields a tree. Thus, we can conclude that $H$ has a hypergraph spanning tree if and only if $M''$ is a ``yes'' instance of TRVB. From this, we see that the reduction is correct.
\end{proof}

Consider what happens if we apply this reduction to an instance of TRVB whose breakable vertices all have degree at most $3$. In this case, the output instance of the Hypergraph Spanning Tree problem is a hypegraph $H$ whose edges each have at most $3$ endpoints. In other words, the above reduction can also be seen as a reduction from any problem $(B, U)$-TRVB with $B \subseteq \{1,2,3\}$ to a version of the Hypergraph Spanning Tree problem in which the hypergraphs are restricted to have only edges with at most $3$ endpoints. The Hypergraph Spanning Tree problem in such hypergraphs is known to be polynomial-time solvable (see \cite{lovasz}), so we can immediately conclude the following:

\begin{corollary}
$(B, U)$-TRVB with $B \subseteq \{1,2,3\}$ is polynomial-time solvable.
\end{corollary}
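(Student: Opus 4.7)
The plan is to simply invoke the reduction from $(B,U)$-TRVB to Hypergraph Spanning Tree that was just developed in this section, and observe that when $B \subseteq \{1,2,3\}$ the output hypergraph has all hyperedges of size at most $3$, at which point the Lovász polynomial-time algorithm finishes the job.

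More concretely, I would first verify that neither of the two preprocessing steps used to form $M''$ from $M$ changes the degree of any breakable vertex. Contracting an edge between two unbreakable vertices only merges unbreakable endpoints, so no breakable vertex is touched. Inserting an unbreakable degree-$2$ vertex into an edge between two breakable vertices subdivides that edge; the breakable endpoints keep the same number of incident edges, just with new intermediate unbreakable neighbors. Hence every breakable vertex of $M''$ still has its original degree, which by hypothesis is in $\{1,2,3\}$.

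Next I would recall the correspondence set up in the reduction: either $M''$ contains a self-loop at an unbreakable vertex (in which case the answer is trivially ``no'' and we are done in polynomial time), or $M''$ is bipartite and serves as the incidence graph of a hypergraph $H$, where breakable vertices of $M''$ represent hyperedges of $H$ and unbreakable vertices represent vertices of $H$. The degree of a breakable vertex in $M''$ equals the number of endpoints of the corresponding hyperedge in $H$, so every hyperedge of $H$ has at most $3$ endpoints. By the previous lemma, $M$ is a ``yes'' instance of TRVB if and only if $H$ has a hypergraph spanning tree.

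The final step is to apply the known result of Lovász~\cite{lovasz} that Hypergraph Spanning Tree is solvable in polynomial time when every hyperedge has at most $3$ endpoints. Combined with the polynomial-time construction of $H$ from $M$, this yields a polynomial-time algorithm for $(B,U)$-TRVB whenever $B \subseteq \{1,2,3\}$. There is no real obstacle here: the reduction was already proved correct above, and the substantive work is shunted onto the cited Lovász algorithm; the only thing worth being careful about is checking that the two preprocessing transformations really do preserve the breakable-degree bound, which is immediate.
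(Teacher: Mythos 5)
Your proposal is essentially the paper's own proof: apply the reduction to Hypergraph Spanning Tree just established, observe that the breakable-vertex degrees (hence hyperedge sizes) remain at most $3$, and invoke Lov\'asz's polynomial-time algorithm for Hypergraph Spanning Tree on such hypergraphs. Your explicit check that the two preprocessing steps preserve breakable-vertex degrees is a small but welcome elaboration that the paper leaves implicit.
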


Similarly, consider what happens if we apply this reduction to an instance of planar TRVB with only breakable vertices whose vertices all have degree $k$ for some fixed $k$. In this case, the output instance of the Hypergraph Spanning Tree problem is a $k$-uniform $2$-regular hypegraph $H$ whose incidence graph is planar. In other words, the above reduction can also be seen as a reduction from Planar $(\{k\}, \emptyset)$-TRVB to a version of the Hypergraph Spanning Tree problem in which the hypergraphs are restricted to be $k$-uniform and $2$-regular and to have planar incidence graphs. Applying the fact that Planar $(\{k\}, \emptyset)$-TRVB is NP-hard for any $k \ge 4$, we immediately obtain that 

\begin{corollary}
The Hypergraph Spanning Tree problem is NP-complete in $k$-uniform $2$-regular hypergraphs for any $k \ge 4$, even when the incidence graph of the hypergraph is planar.
\end{corollary}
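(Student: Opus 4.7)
The plan is to combine the reduction from TRVB to Hypergraph Spanning Tree constructed earlier in this section with the NP-hardness of Planar $(\{k\}, \emptyset)$-TRVB (Corollary~\ref{corollary:planar_breakable_k}), and then verify that when that reduction is applied to such a Planar $(\{k\}, \emptyset)$-TRVB instance, the output hypergraph already has all three claimed structural properties ($k$-uniform, $2$-regular, planar incidence graph). Membership in NP is immediate: a nondeterministic algorithm guesses a subset $S$ of hyperedges and checks in polynomial time that the subgraph of the incidence graph induced by the vertices together with $S$ is a tree.

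For hardness, I would trace the reduction on an arbitrary input $M$ of Planar $(\{k\}, \emptyset)$-TRVB, i.e.\ a planar multigraph all of whose vertices are breakable with degree exactly $k$. Since $M$ has no unbreakable vertices, the edge-contraction step produces $M' = M$ unchanged. The subsequent subdivision step inserts one degree-$2$ unbreakable vertex into every edge of $M$, yielding $M''$. Subdivision preserves planarity, so $M''$ is planar. Moreover, every unbreakable vertex of $M''$ is a freshly inserted midpoint of degree $2$ whose two neighbors are (breakable) endpoints of the original edge, so no unbreakable vertex of $M''$ carries a self-loop; the reduction therefore does not take the trivial ``no'' branch but genuinely outputs the hypergraph $H$ whose incidence graph is $M''$.

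It remains to read off the parameters of $H$. Its hyperedges correspond to the breakable vertices of $M''$, which are exactly the original vertices of $M$, each of degree $k$, so $H$ is $k$-uniform. Its vertices correspond to the unbreakable vertices of $M''$, each of which is a subdivision vertex of degree $2$, so $H$ is $2$-regular. The incidence graph of $H$ is the planar multigraph $M''$, as required. By the correctness lemma already established for this reduction, $M$ is a ``yes'' instance of TRVB iff $H$ admits a hypergraph spanning tree, so the reduction is answer-preserving.

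The only real subtlety, and thus the main thing I expect to need to justify carefully, is that no pathological feature of $M$ forces the reduction into its trivial rejection branch; the argument above shows this cannot happen precisely because all vertices of $M$ are breakable, so every unbreakable vertex of $M''$ arises from subdivision and hence is not incident to any self-loop. Combining this reduction with Corollary~\ref{corollary:planar_breakable_k} (which asserts NP-hardness of Planar $(\{k\}, \emptyset)$-TRVB for each $k \ge 4$) and with the NP-membership observation above gives NP-completeness of the Hypergraph Spanning Tree problem in $k$-uniform $2$-regular hypergraphs with planar incidence graphs, for every $k \ge 4$.
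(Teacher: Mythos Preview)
Your proposal is correct and follows essentially the same approach as the paper: apply the general TRVB-to-Hypergraph-Spanning-Tree reduction to an instance of Planar $(\{k\},\emptyset)$-TRVB, observe that the resulting hypergraph is $k$-uniform, $2$-regular, and has planar incidence graph, and invoke Corollary~\ref{corollary:planar_breakable_k}. You have simply made explicit several details the paper leaves implicit (that $M'=M$, that subdivision preserves planarity, that the trivial ``no'' branch is never taken, and NP-membership), all of which are correct.
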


This improves the previously known result that the Hypergraph Spanning Tree problem is NP-complete in $k$-uniform hypergraphs for any $k \ge 4$ (see \cite{promel2002steiner}).

\section*{Acknowledgments}
We would like to thank Zachary Abel and Jayson Lynch for their helpful discussion about this research. We would also like to thank Yahya Badran for pointing out the connection between TRVB and the Hypergraph Spanning Tree problem.

\bibliographystyle{plain}
\bibliography{vertex_breaking}

\end{document}